\definecolor{red1}{rgb}{1,0.9,0.9}
\definecolor{blue1}{rgb}{0.9,0.9,1}
\definecolor{green1}{rgb}{0.9,1,0.9}
\definecolor{yellow1}{rgb}{1,1,0.9}
\definecolor{yellow2}{rgb}{1,1,0.8}
\newtheorem{thm}{Theorem}
\newtheorem{coro}[thm]{Corollary}
\newtheorem{lemma}[thm]{Lemma}
\theoremstyle{definition}
\let\paragraph\subsection
\title{Combinatorial manifolds are Hamiltonian}
\author{Oliver Knill}
\date{June 14, 2018}
\address{
        Department of Mathematics \\
        Harvard University \\
        Cambridge, MA, 02138
        }
\subjclass{Primary: 05C45}
\keywords{Hamiltonian cycles, graph theory, combinatorial manifolds}
\begin{document}
\maketitle

\begin{abstract}
Extending a theorem of Whitney of 1931 we prove that all connected 
$d$-graphs are Hamiltonian for $d \geq 1$. A $d$-graph is a type of combinatorial manifold 
which is inductively defined as a finite simple graph for which every unit sphere is a $(d-1)$-sphere.
A $d$-sphere is $d$-graph such that removing one vertex renders the graph contractible. A
graph is contractible if there exists a vertex for which its unit sphere and the graph without that
vertex are both contractible. These inductive definitions are primed with the assumptions that the 
empty graph $0$ is the $(-1)$-sphere and that the one-point graph $1$ is the smallest contractible graph.
The proof is constructive and shows that unlike for general graphs, the complexity of the 
construction of Hamiltonian cycles in d-graphs is polynomial in the number of vertices of the graph. 
\end{abstract}

\section{Introduction}

\paragraph{}
Hassler Whitney \cite{Whitney1931} proved in 1931:
"Given a planar graph composed of elementary triangles in which there
are no circuits of 1,2 or 3 edges other than these elementary triangles, there exists a circuit which
passes through every vertex of the graph". This theorem implies that "every $4$-connected maximally planar graph is
Hamiltonian". The later can be restarted as "every $2$-sphere is Hamiltonian". A $2$-sphere
is a finite simple graph in which every unit sphere $S(x)$ is a $1$-sphere, a circular graph of length $4$ or higher
and. A $2$-ball is a $2$-sphere with a removed vertex. The class of $2$-spheres
is exactly the class of $4$-connected maximal planar graphs. 
Bill Tutte \cite{Tutte1955} extended Whitney's theorem to all 4-connected planar graphs. An overview
over graph theory and history, see \cite{HararyGraphTheory,BM,BiggsLloydWilson,Kumar2016}. 

\paragraph{}
As Whitney has pointed out already, the interest in triangulated graphs was fueled by the problem of
coloring maximal planar graphs as this is the class of planar graphs which 
are hardest to color among all planar graphs. Whitney's paper actually was part of the thesis which 
he wrote under the guidance of George Birkhoff in the area of graph chromatology. 
The 2-sphere restatement of Whitney's theorem is graph theoretical.
The original statement is too, but it only becomes so after rephrasing ``planar" within 
graph theory using the Kuratowski theorem of 1930. 
What is interesting about the sphere formulation of Whitney's theorem is that it opens the 
door to generalize coloring and Hamiltonian graph problems to higher dimensions, where it is 
detached from embedding questions. 

\paragraph{}
The Whitney result can be extended not only to other dimensions, we can also
extend the class by using the language of simplicial complexes. 
The $1$-skeleton complex of a finite abstract simplicial complex is a graph. When is such a graph
Hamiltonian? The concept of finite abstract simplicial complex is due to Dehn and Heegard from 1907 
\cite{DehnHeegaard}. One can restrict the question to shellable complexes, a 
concept put forward in a combinatorial setting by Hadwiger and Mani from 1972 \cite{HadwigerMani1972}. 
The phenomenon of non-shellability of spheres has been discovered already in 1924 in a topological setting. 

\paragraph{}
The result can be extended to {\bf generalized d-graphs}, finite simple graphs $G$ for which the boundary 
alone is a $(d-1)$ graph without boundary and for which every inner vertex has a unit sphere which is a
$(d-1)$ sphere and such that for every inner vertex $x$ near the boundary there is a triangle $(x,a,b)$,
where $(a,b)$ is an edge on the boundary. This goes beyond d-graphs in that we don't need to have interior
points for example. Every $d$-graph with or without boundary is a generalized $d$-graph. 

\paragraph{} 
A simple example of a generalized $d$-graph is the $d$-simplex itself, in which we look at the $d-1$ dimensional
skeleton complex. It has no interior point and the boundary is naturally a $(d-1)$-sphere. 
A wheel graph in which all boundary edges are stellated is {\bf not} a generalized
d-graph. While the boundary as a $1$-graph, it has an interior point which is not 
accessible from the boundary. It is not Hamiltonian because the Hamiltonian boundary 
curve which is forced can not be detoured to the interior. But if one exterior triangle 
is removed, then the inner point becomes accessible and the graph becomes Hamiltonian. 
A three dimensional example is the stellated solid octahedron, where the octahedron 
is the interior unit sphere. But in that three dimensional case, not only the accessibility
fails, also some unit spheres are neither $2$-spheres, nor $2$-balls or $2$-simplices. 

\paragraph{}
How can the Hamiltonian property fail for manifold-like graphs? It can be due to lack of connectivity but
this mechanism does not explain non-existence of Hamiltonian cycles of higher dimensional Goldner-Harary graphs. 
The classical Goldner-Harary graph is a non-Hamiltonian graph which is the 1-skeleton graph of a shellable 
3-dimensional complex for which all unit spheres are contractible shellable 2-complexes. 
What happens in those examples even after Barycentric refinements is that the unit spheres lack Euclidean
properties. But there are other mechanisms which make the Hamiltonian property fail: in two dimensions,
we even can have non-Hamiltonian graphs like the stellated wheel graph which is non-Hamiltonian. 

\paragraph{}
The failure of being Hamiltonian is rare when
building up complexes randomly by successive simplex aggregation to the boundary. 
We observe that one culprit for the failing Hamiltonian property often is the loss of the manifold structure. This happens
for the Goldner-Harary graph $G$, obtained from shellable complexes first in dimension $3$. The two dimensional boundary 
complex has a Barycentric refinement $G_1$, where unit spheres are not balls or spheres
as needed for a manifold with boundary. 
For the Barycentric refinement $G_1$ of $G$ the boundary becomes Hamiltonian. 
Still, there remain defects from being Euclidean:
there are still unit spheres which are neither 2-spheres nor generalized $2$-balls. 

\paragraph{}
The question on how to define a $d$-dimensional ``sphere" combinatorially came up already
when Ludwig Schl\"afli generalized the Euler gem formula $v-e+f=2$ from 2 dimensional to 
higher dimensional spheres and proving that the Euler characteristic of a d-dimensional 
sphere is $v_0-v_1+v_2- \dots \pm v_d=1+(-1)^d$ \cite{Schlafli}. 
Without stating explicitly, Schl\"afli assumed in 1852 
that spheres are shellable simplicial complexes. He called a contractible space a "polyschematische Figur" and 
a discrete sphere a "sph\"arisches Polyschem". Also Euler implicitly assumed a shell decomposition in two dimensions. 
However, there are triangulations of Euclidean spheres which are no more shellable, and the first 
examples appeared already in 1924.  
Non-shellable spherical simplicial complexes are puzzles, where the individual cells are so entangled within 
each other that it is not possible to take out one. Related is the existence of $2$-dimensional 
structures like the Bing house or dunce hat which are homotopic to a point but which can not be 
shrunk to a point. The Bing house for example produces a non-shellable 3-ball. 

\paragraph{}
In the wake of the foundational crisis in mathematics in the early 
20th century, Herman Weyl \cite{Weyl1925} asked for ways to define spheres within finite mathematics.
He refers in his expos\'e to \cite{DehnHeegaard} and mentions that a combinatorial definition of ``Kugelschemata" 
in dimension $4$ and larger is not yet known. 
Defining a sphere as a triangulation of a Euclidean sphere is not possible, as the later uses the notion of Euclidean space
tapping into mathematics which a finitist would not accept. Also the notions of triangulations can be quite
complicated and can contain higher dimensional simplices than the space itself. Today, one seems to be less worried about the 
consistency of the foundations of mathematics but cares more about realizing mathematical 
structures faithfully in a computer without approximations or to see a result in the context of 
reverse mathematics trying to localize which axioms or axiom systems are needed to prove a phenomenon. In any case,
combinatorial mathematics can be seen as a shelter in which one could retreat in the unlikely event of an emerging
inconsistency in the ZFC axiom system. 

\paragraph{}
Thanks to notions of homotopy put forward by J.H.C. Whitehead \cite{Whitehead} which was adapted to the 
discrete later on, one can define a d-sphere as a simplicial complex which has 
locally a unit sphere which is a $(d-1)$ sphere and such that removing a point renders the complex contractible.
(We use this term here as a synonym with collapsible and
would use ``homotopic to the 1 point complex" to describe more general homotopies). 
Such notions have emerged in the discrete in the 1990ies in Fisk's theory \cite{Fisk1977a}, in discrete Morse theory 
of Forman \cite{forman95} or digital topology of Evako \cite{I94}, refined in \cite{CYY}. The 
definitions of $d$-spheres and $d$-graphs are done with the goal to be as simple as possible. 

\paragraph{}
The assumption of having a ``contractible graph after puncturing a sphere" restricts the class of spheres
slightly because of the existence of non-shellable complexes and the existence of non-contractible complexes 
which are only homotopic to a point. 
It turns out not to be a substantial problem as the Barycentric refinement of any 
simplicial complex is always a Whitney complex and the Whitney complex of a d-sphere is always shellable. 
The troubles with non-shellability are washed away with one Barycentric refinement. The troubles with 
non-Euclidean properties however (as we see here by linking it to a Hamiltonian property) can not be 
removed through Barycentric refinements. In any case, the graph theoretical definition of spheres 
therefore are almost no lack of generality from a topological point of view and Schl\"afli's 
shellability assumption was not such a big restriction of generality after all when seen in the Weyl program
to finitize Euclidean geometry. 

\paragraph{}
Graphs not only have the benefit of being intuitively more accessible than simplicial complexes which are sets of sets.
Having a simple and solid graph theoretical definition of what a $d$-sphere is, one can ask questions
which usually are asked only in graph theory. There are two theorems about $2$-spheres which are particularly elegant: 
the first is the 4-color theorem which is equivalent to the statement that 
{\it 2-spheres have chromatic number $3$ or $4$}, where 3-colorability is equivalent to the property of being
Eulerian. Second, there is the already mentioned Whitney theorem on Hamiltonian graphs stating that 
{\it 2-spheres are Hamiltonian}. The observation that the class of 2-spheres is the same as the 
class of maximally planar, 4-connected graphs is not difficult to verify \cite{knillgraphcoloring}. The fact that
looking at maximally planar 4-connected graphs covers all of the coloring of planar graphs has been pointed out
by Whitney already \cite{Whitney1931}. It must have been folklore wisdom already then, but not yet formulated
graph theoretically. 

\paragraph{}
$1$-spheres are cyclic graphs with $4$ or more nodes. They have chromatic number $2$ or $3$ and are 
Hamiltonian. Having Whitney covering the two dimensional case, it is natural to ask
what happens with the coloring statement in dimension $3$ and higher. 
In Fisk theory \cite{Fisk1977a,Fisk1977b}, the coloring problem is reduced to coloring problems in higher 
dimensions, leading to the conjecture that a $d$-sphere has chromatic number 
$d+1$ or $d+2$ \cite{knillgraphcoloring,knillgraphcoloring2}. Discrete projective planes can have chromatic 
number $5$ and indeed they are not embeddable in an Euclidean 3-space. For tori, the chromatic number is 3,4 or 5
\cite{AlbertsonStromquist}. 

\paragraph{}
The sphere coloring statement bounding the chromatic number of $d$-spheres 
between $d+1$ and $d+2$ still remains to be 
proven in dimensions larger than $2$ but it looks good: one only has to verify that it is 
possible to do successive edge refinements in the interior of a $(d+1)$-ball to render it minimally 
colorable by $d+2$ colors, coloring so the boundary sphere. The existence result is constructive also in
higher dimensions but it will lead in particular to a constructive proof of the
4 color theorem. The second line of question about Hamiltonian cycles is easier and covered in 
the current article. Unlike in 2 dimensions, where the existence of a Hamiltonian path is unrelated to coloring
statements in higher dimensions because the coloring argument in two dimensions relies on a
Jordan curve statement. 

\paragraph{}
The main reason why the main result proven here is true is that the
boundary of a $d$-graph with boundary is a union of connected $(d-1)$ graphs which by induction are Hamiltonian. 
We can use calculus to divide the discrete manifold up and expose the interior to the boundary. 
In a larger class of ``generalized d-graphs", we can make the reduction enhanced by removing 
``boundary discs" until we have no interior points near the boundary any more. 
The Hamiltonian path on the boundary can then be extended to the still remaining inner points near the boundary. 
In some sense, we can extend a Hamiltonian path on a $(d-1)$-graph with boundary to a ``fattened surface" and
achieve with cutting out holes that all the graph becomes a fattened surface. 
Any pure simplicial d-dimensional complex is Hamiltonian as long one can identify the boundary $\delta G$ as a $(d-1)$
dimensional complex without boundary and such that $\delta G$ allows access to the nearby interior points. 

\paragraph{}
In general, we see that the construction of Hamiltonian paths can be done quite effectively 
for $d$-graphs: take a $d$-graph with boundary with $n$ vertices,
use a function $f$ to chop it into two $d$-graphs $\{ f<0 \}$ and $\{f>0 \}$ with boundary, where each part 
contains about half the vertices. Now do the Hamiltonian construction in both parts and build 
a single bridge in the joining part $f=0$ which does not contain any vertices
to get a Hamiltonian path in the entire graph. Now repeat the construction on the smaller parts. 
This informal description needs to be made more precise as the cutting has to be smooth enough. We will 
bypass this difficulty by cutting away only small holes. This still gives a polynomial complexity in the 
number of vertices of the graph. Using the already known fact that in the case $d=2$, the complexity is
linear, we get linearity in all dimensions. 

\paragraph{}
In the proof we cut explicitly near a point in the interior producing a cavity.
If the graphs have become so small that it is no more possible to cut a hole into them, we can edge away from the boundary
until we have a generalized $d$-graph without interior. Now we can cover the graph with a Hamiltonian
path because the Hamiltonian path on the ``fattened boundary" covers everything.  Also the construction of the 
Hamiltonian path on that boundary is constructive. Start with a Hamiltonian path on the $(d-1)$ surface given by the theorem
for dimension $d-1$. A still constructive ``Swiss cheese" approach is then to cut small cavities away 
to reach the near interior points and then to join the various Hamiltonian paths on different connected
components using bridges.  

\section{Terminology}

\paragraph{}
A finite simple graph $G=(V,E)$ is {\bf Hamiltonian} if there exists a simple closed 
circuit which visits every vertex of $G$ exactly once. 
A {\bf finite abstract simplicial complex} is a finite set of sets which is closed under the operation 
of taking finite non-empty subsets. The {\bf $1$-skeleton graph} of a complex $G$ is the one dimensional sub-complex
of all sets in $G$ of cardinality $1$ or $2$. It naturally is a finite simple graph. 
A simplicial complex is called {\bf Hamiltonian} if 
its $1$-skeleton graph is Hamiltonian. We can think of the operation ``{\bf Skeleton}" either as a map from 
``simplicial complexes" to ``finite simple graphs" or as a projection from arbitrary complexes to 
$1$-dimensional complexes, the association with projection being justified by the identity
${\rm Skeleton} \circ {\rm Skeleton} = {\rm Skeleton}$. 

\paragraph{}
The {\bf Whitney complex} of a finite simple graph $(V,E)$ is the finite abstract simplicial complex $G$ 
in which the sets are the vertex sets of the complete sub-graphs of $(V,E)$. We often identify a graph 
with its Whitney complex. Similarly, we associate to a Whitney complex its skeleton graph. 
The two operations ``Whitney" and ``Skeleton" are not inverses of each other in general. 
The Whitney complex of the $1$-skeleton graph of a complex $G$ is often different from the complex, 
but Whitney complexes have the property that they 
are entirely encoded by the graph alone. The identity ${\rm Skeleton}({\rm Whitney})( (V,E) ) = (V,E)$ holds
for all finite simple graphs $(V,E)$. Even so in general ${\rm Whitney}({\rm Skeleton})(G)$ is not equal to $G$
the two constructs are natural operations between ``simple graphs" and ``simplicial complexes".
%  The usual definition of ``Finite Simple Graph" as a category has no terminal object because the loop $x \to x$ 
%  is not there. If a graph is a set V equipped with a reflective and symmetric relation, then
%  Homomorphisms in graphs and reflexive homomorphisms in the category of reflexive symmetric relations are different.

\paragraph{}
A graph is called {\bf pure} of dimension $d$, if its Whitney complex is pure: this means
that the {\bf facets} (=maximal simplices) in its Whitney complex $G$ all have the same dimension $d$. 
A finite simple graph is called {\bf shellable}, 
if its Whitney complex $G$ is a shellable complex: first formulated in a purely combinatorial setting in
\cite{BruggesserMani}, an abstract finite simplicial complex is called {\bf shellable} if 
there exists an ordering $x_1, \dots, x_n$ of its facets such that for every $k \geq 1$, the intersection 
$H_k$ of $x_k$ with $\bigcup_{j=1}^{k-1} x_j$ defines a shellable $(d-1)$ complex. 
By definition, a shellable complex of dimension $d$ produces a {\bf $d$-connected}
graph meaning that the 1-skeleton graph remains connected after removing $d$ vertices or less. 

\paragraph{}
A finite simple graph $G$ is called a {\bf d-sphere} if it is either the $(-1)$-sphere 
$0=(\emptyset,\emptyset)$ or if every unit sphere $S(x)$ is a $(d-1)$-sphere and the graph $G-x$ without $x$ 
is contractible for some vertex $x$. A graph $G$ is {\bf contractible} if it is either the one-point graph $1=K_1$ 
or if there exists a vertex $x$ such that $G-x$ and $S(x)$ are both contractible. 
A {\bf $d$-ball} is a graph $H=G-x$ which is obtained from a $d$-sphere $G$ by removing a single vertex $x$.
A {\bf $d$-graph} is a finite simple graph for which every unit sphere is a $(d-1)$ sphere. If every unit sphere
is either a $(d-1)$-sphere or a $(d-1)$ ball, then we deal with a {\bf $d$-graph with boundary}.

\paragraph{}
We use here a definition of spheres in higher dimensions \cite{knillgraphcoloring} which relates closely to 
notions put forward by to Evako \cite{Evako2013} who also formulated discrete versions of 
Whitehead homotopy \cite{I94}. Both the discrete homotopy 
as well as the discrete sphere definitions have been refined over time. We use the term {\bf contractible} 
even so this is often called ``collapsible". For us, contractible and collapsible are synonyms. Examples like
the dunce hat or Bing house show that they are not the same than {\bf homotopic to the $1$-point graph}
which is the property that there is a graph $H$ such that $H$ is both contractible and equivalent to $G$ by 
contraction steps, where a vertex with all connections to $S(x)$ is allowed to be removed if $S(x)$ is 
contractible. 

\paragraph{}
The {\bf dual graph} of a $d$-graph has the maximal $d$-simplices (=facets) as vertices
and connects two different maximal simplices if they intersect in a $(d-1)$ simplex.
The dual graph of a $d$-graph is always zero or one-dimensional. 
For a connected $d$-graph of dimension $d \geq 2$ different from a simplex, it is always one-dimensional. The proof:
the dual graph does not contain a triangle because if it would, the entire triplet of $d$-simplices would belong
to a larger dimensional simplex.  \\
Examples: 1) the dual graph of a wheel graph with boundary $C_n$ is $C_n$.  \\
2) the dual graph of the 16 cell, which is a 3-sphere 
has as a dual graph the {\bf tesseract}.  \\
3) The dual graph of the octahedron graph is the cube graph. 
Both the cube graph as well as the tessaract are not $d$-graphs as the unit spheres are not spheres. 
Also, the dual graph of the cube contains triangles because more than 2 maximal simplices intersect. 

\paragraph{}
While nice triangulations of Euclidean spheres produce shellable complexes, there exist already for $d=3$ 
non-shellable triangulations of spheres \cite{Lickorish91}. Examples of non-shellable balls
are the Furch ball of 1924, the Newman ball of 1926 or the Rudin's ball \cite{Rudin1958}. 
Ziegler's ball $G$ with 10 vertices and 21 facets appears to be the smallest known so far. 
The Barycentric refinement $G_1$ of any of these examples $G$ however is a shellable complex.
% s=GraphFromComplex[ZieglerBall]; HamiltonianGraphQ[s]; Map[EulerChi,UnitSpheres[s]]

\paragraph{}
A $(d-1)$ simplex is a {\bf boundary face} in a $d$-graph $G$ if it is contained 
in only one of the maximal $d$-simplices. We call a $d$-graph $G$ without boundary 
{\bf strongly Hamiltonian}, if there exists a Hamiltonian cycle which has an edge in 
each of facets, the maximal simplices. For $d \geq 3$ and $d$-graphs with boundary we ask more and 
say that a $d$-graph $G$ with boundary is {\bf strongly Hamiltonian}, if there is a Hamiltonian path 
which visits every $d$ simplex in an edge and every boundary $(d-1)$ simplex in an edge. 
The wheel graph shows that we can not yet enforce this strong Hamiltonian property for $2$-graphs with boundary
This is similar as for $d=1$ graphs with boundary, where we can not even get the Hamiltonian property. 

\paragraph{}
A cyclic graph is always strongly Hamiltonian because the graph 
itself is the Hamiltonian path. The wheel graphs are $2$-balls. They are Hamiltonian but not strongly Hamiltonian.
There is a  Hamiltonian path around the boundary which detour once to the center.
In dimensions $d=3$ and higher, we always have the strong Hamiltonian property for $d$-graphs. 
The {\bf boundary complex} of a $d$-graph $G$ is the $1$-complex of 
the $(d-1)$ boundary complex of $G$. In order to have a good induction proof, we need to extend the
class of $d$-graphs and the class of $d$-graphs with boundary a bit. This is done next. 

\paragraph{}
A {\bf generalized $d$-graph} is a graph which either is a $d$-graph with or without boundary
or a graph for which all spheres are $(d-1)$ balls, $(d-1)$ spheres or $(d-1)$ simplices and for which every 
interior vertex $x$ in distance $1$ to the boundary is part of a triangle $xab$, with an edge $(a,b)$
at the boundary. \\

The last condition disallows $d$-subgraphs which have only vertices in common with the boundary. An example
is the stellated wheel graph. This condition is only really necessary in the case $d=2$. In higher dimensions,
we get the Hamiltonian property also without it. The reason is that in accessible inner points can still be 
reached from the outside via a detour, but we don't bother with that for now and leave the condition in all 
dimensions. \\

{\bf Examples:} \\

1) The diamond graph $C_2(2)$ is an example of a generalized $2$-ball because the sphere condition is satisfied and
no interior point exist. \\
2) The stellated square is not a generalized $2$-ball. The inner sphere has no common edge with the boundary. 
After doing edge refinement in the interior of the stellated square, we can get examples of arbitrary 
large 2-balls or (after digging holes) get 2-graphs which are isolated $2$-subgraphs. \\
3) The Goldner-Harary graph has no interior but it fails the unit sphere condition. 
There are unit sphere which are neither $(d-1)$ balls nor $(d-1)$ simplices. \\
4) The $3$-dimensional ``Avici" graph $C_3(2)$ obtained by gluing two $K_4$ along a $K_3$ is a generalized $3$-ball
because removing the $2$-dimensional interior triangle and the two $3$-simplices produces a $2$-dimensional complex. \\
5) For any $d \geq 1$, the $d$-simplex is a generalized $d$-ball. 
After removing the interior $d$-dimensional facet, it has a 
Barycentric refinement which is a $(d-1)$ sphere. \\
6) The {\bf $d$-dimensional prism} is an example of a generalized $d$-ball. It has no interior It has a 
$(d-1)$ dimensional simplex roof and a $(d-1)$ dimensional simplex floor.
In $d$ dimensions, it is a union of $d$ simplices of dimension $d$. It is a generalized ball without interior point.
7) The stellated solid octahedron is not a generalized $3$-ball because some unit spheres are not balls, spheres 
or simplices.

\paragraph{}
Let us look a bit more at the threshold, where the Hamiltonian property fails:
we can verify directly by induction that a shellable graph of dimension $d \geq 2$ for which the dual graph 
is a path graph is Hamiltonian. It only amounts to show by induction that the strong Hamiltonian property can get extended 
when adding a new simplex. The just mentioned Avici graph is an example. More generally,
any {\bf cyclic polytope}, where $d$-simplices are attached to each other in a linear way such that their 
intersection is a $d-2$ simplex are examples of shellable complexes which are Hamiltonian. \\

\paragraph{}
When trying to extend the result from linear complexes to complexes, where the dual 
graph is a tree, we in general lose the Hamiltonian property. The Goldner-Hariri example is prototypical 
of this type. The reason is that we might run out of vertices which can use to lift the Hamiltonian path to all 
the newly attached vertices. In the Gardner-Harary graph, this happens, leading to a failure of
the Euclidean structure and Hamiltonian graph property, even after Barycentric refinement. 
We are fine in the case of a $d$-prism, as there, we have no interior
and the boundary sphere possesses a Hamiltonian path, because it is a $2$-sphere. 
An other prototype example is the case of an interior sphere which does not meet an other interior unit sphere
nor any boundary. The smallest such example is the stellated square or stellated octahedron. 

\paragraph{}
There are shellable $d$-complexes for any $d \geq 1$ for which the dual graph contains triangles but 
these complexes fail the property of being ``generalized d-graphs" in one way or the other. 
They are either not Whitney complexes like for the 1-skeleton complex $C_3$ of $K_3$ 
for which the dual complex is again isomorphic to $C_3$. 
An other example in one dimension is a claw graph or the cube graph, 
where three edges come together. 
Or then, we have like in windmill graph, three triangles hinged together at a
single edge, which is not Hamiltonian. 

\paragraph{}
Here are the incidence relations: $d$-spheres $\subset$ $d$-graphs $\subset$ generalized $d$ graphs $\subset$ 
Whitney complexes $\subset$ finite abstract simplicial complexes. Similar incidence relations hold in the 
case of boundary. 

\section{The theorem}

\paragraph{}
The main result is about d-graphs, a class of graphs which behave like ``combinatorial manifolds". 
We avoid the later term, as it used in different ways in the literature, 
also in non-combinatorial set-ups like in the context of PL manifolds. 
$d$-graphs are finite objects, finite simple graphs. 

\begin{thm}[Hamiltonian manifold theorem]
Every connected $d$-graph is strongly Hamiltonian for $d \geq 1$.
\end{thm}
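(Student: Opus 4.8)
The plan is to prove the statement by induction on the dimension $d$, using the strong Hamiltonian property as the induction hypothesis so that the boundary curve in dimension $d$ carries enough structure to be extended inward. The base case $d=1$ is immediate, since a connected $1$-graph is a cyclic graph $C_n$ with $n \geq 4$, which is its own Hamiltonian cycle and is strongly Hamiltonian because the cycle itself has an edge in each facet. The base case $d=2$ is Whitney's theorem: a connected $2$-graph is a $2$-sphere, hence a maximally planar $4$-connected graph, which is Hamiltonian. For the inductive step I would assume that every connected $(d-1)$-graph, and more generally every connected generalized $(d-1)$-graph, is strongly Hamiltonian, and try to produce a Hamiltonian cycle on a given connected $d$-graph $G$ that meets every facet in an edge.

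The core idea, following the informal description in the introduction, is a divide-and-conquer cutting argument. First I would dig a small cavity: pick an interior vertex $x$ and remove the open star of $x$ (the vertex $x$ together with the facets containing it), turning the closed $d$-graph $G$ into a $d$-graph with boundary whose new boundary is the unit sphere $S(x)$, a $(d-1)$-sphere. More generally I would cut $G$ along a suitable separating hypersurface into two pieces $\{f < 0\}$ and $\{f > 0\}$, each a $d$-graph with boundary of roughly half the size, with the interface $\{f=0\}$ containing no vertices. On each piece I would invoke the inductive machinery: its boundary is a union of connected $(d-1)$-graphs, which by the induction hypothesis are strongly Hamiltonian, so each boundary sphere carries a Hamiltonian cycle that meets every boundary facet in an edge. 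The strong Hamiltonian property is exactly what lets me fatten this boundary curve into the interior layer, reaching every interior vertex adjacent to the boundary, and then recurse on the remaining thinner interior.

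The key reduction is the passage to generalized $d$-graphs without interior. When the pieces become too small to cut further, I would peel away boundary simplices (the ``Swiss cheese'' step) until no interior vertex survives, reaching a generalized $d$-graph whose vertices all lie on a fattened boundary surface; there the Hamiltonian path on the boundary, extended over the adjacent shell, covers everything. The accessibility condition built into the definition of generalized $d$-graph, namely that every interior vertex within distance $1$ of the boundary lies in a triangle $xab$ with $(a,b)$ a boundary edge, is precisely what guarantees each near-interior vertex can be spliced into the boundary cycle by a local detour. Finally I would join the Hamiltonian paths living on the separate connected pieces across the vertex-free interface $\{f=0\}$ using single bridges, taking care that each splice preserves both the single-visit property and the edge-in-every-facet condition.

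The main obstacle I expect is making the cutting and splicing rigorous while preserving the \emph{strong} Hamiltonian property at every stage, rather than mere Hamiltonicity. Two points are delicate. First, the geometry of the cut: the interface must be ``smooth enough'' that each piece is again a genuine generalized $d$-graph with accessible boundary; the introduction flags this and proposes to sidestep it by only ever cutting away small cavities around a single interior vertex, so I would carry out the induction with that restricted cutting operation and verify the sphere conditions for $S(x)$ and the resulting boundary directly. Second, the detour lemma: I must show that given a strongly Hamiltonian cycle on a $(d-1)$-sphere $S(x)$ sitting as the boundary of a shell, and given the accessibility triangles, I can reroute the cycle through each adjacent interior vertex and splice across bridges so that the edge-in-each-facet condition is maintained simultaneously for all $d$-facets of the shell. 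This local rerouting argument, done carefully enough to feed back into the induction hypothesis, is where the real work lies; the counterexamples (the stellated wheel, the Goldner-Harary graph) show that the accessibility and sphere hypotheses cannot be dropped, so the lemma must genuinely use them. The polynomial complexity claim then follows by the balanced divide-and-conquer recursion, bottoming out in the linear-time $d=2$ construction.
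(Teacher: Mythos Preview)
Your proposal is essentially the paper's own approach: induction on $d$, working in the enlarged class of generalized $d$-graphs, the ``Swiss cheese'' step of excising small balls around strong interior vertices to create boundary spheres, invoking the $(d-1)$-dimensional induction hypothesis on those boundary spheres, joining components via prism bridges, and finally detouring to the remaining distance-$1$ interior vertices through the accessibility triangles. The only cosmetic difference is that you list Whitney's theorem as a separate base case for $d=2$, whereas the paper runs its algorithm uniformly for $d\geq 2$ (with an extra ``boxed-in'' step peculiar to $d=2$); and you mention the balanced halving cut $\{f<0\},\{f>0\}$ as an option before correctly noting that the paper restricts to single-vertex cavities to avoid the smoothness issue.
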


\paragraph{}
The proof will use induction with respect to dimension $d$. 
We will prove at the same time the analogue statement for 
$d$-graphs with boundary. The case $d=2$ is in some sense the hardest and most subtle case,
which is covered by Whitney already (except for the strong Hamiltonian property, which however
can be replaced by Hamiltonian connectednedss). 
Our proof is new and proves the result for slightly more general {\bf generalized $d$-graphs}.  
In dimension $d=2$, ti does not cover all of Tutte's theorem but goes beyond in the 
sense that $2$-graphs do not have to be planar. 

\begin{thm}[Boundary version]
Every connected $d$ graph with boundary is Hamiltonian for $d>1$ and
strongly Hamiltonian for $d>2$. 
\end{thm}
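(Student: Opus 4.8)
The plan is to prove Theorem 1 and the boundary version simultaneously by induction on the dimension $d$, with the logical dependency that Theorem 1 in dimension $d-1$ feeds the boundary version in dimension $d$, which in turn yields Theorem 1 in dimension $d$ by cutting out one facet to make a ball, running the boundary version, and closing the resulting Hamiltonian path into a cycle. The base case for Theorem 1 is $d=1$, where a connected $1$-graph is a cyclic graph $C_n$ with $n \geq 4$; it is literally its own Hamiltonian cycle and every edge-facet lies on it, so it is strongly Hamiltonian. The base case for the boundary version is $d=2$, which is essentially Whitney's theorem: a connected $2$-graph with boundary is a triangulated disc and carries a Hamiltonian cycle. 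The wheel graph shows we cannot upgrade this to the strong property, which is exactly why the statement claims only Hamiltonicity for $d=2$ and reserves strong Hamiltonicity for $d>2$.

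For the inductive step the starting observation is that the boundary $\delta G$ of a connected $d$-graph $G$ with boundary is a disjoint union of connected $(d-1)$-graphs without boundary. By Theorem 1 in dimension $d-1$, each boundary component is strongly Hamiltonian, hence carries a Hamiltonian cycle using an edge in every boundary $(d-1)$-simplex. I would then reduce $G$ to a generalized $d$-graph with no interior points by the \emph{Swiss cheese} procedure sketched in the introduction: repeatedly dig a small cavity around an interior vertex $x$ that is far from $\delta G$, removing its open star and thereby converting nearby vertices into boundary vertices. Each excision is a purely local operation on the unit ball $B(x)$, so it keeps the result a generalized $d$-graph with boundary and only enlarges the boundary. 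Cutting away only small holes, rather than performing a single global cut along a level set $\{f=0\}$, is what keeps the operation easy to control and keeps the whole construction polynomial in the number of vertices.

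Once every vertex is either on the boundary or within distance $1$ of it (a generalized $d$-graph without interior), the terminal step is to assemble the Hamiltonian path. I would first glue the strongly Hamiltonian cycles on the finitely many boundary components into a single path by cutting one edge in each of two components and inserting a bridge between them, repeating until one path remains; since each cycle uses an edge inside every boundary facet, there is always room to splice. The strong Hamiltonian property is then exploited a second time to absorb the remaining accessible vertices: every interior vertex at distance $1$ from $\delta G$ lies in a triangle $xab$ with $(a,b)$ a boundary edge carried by the path, so $(a,b)$ can be detoured through $x$. Performing these detours for all near-boundary vertices extends the boundary path to a Hamiltonian path of the full graph, and for $d \geq 3$ one checks that the detoured path still meets every $d$-simplex and every boundary $(d-1)$-simplex in an edge, giving the strong property.

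The main obstacle I expect is the bookkeeping of the strong Hamiltonian invariant through the two non-local operations. The detour that absorbs an accessible vertex $x$ replaces a single edge by a two-edge path through $x$, and one must guarantee that no facet or boundary simplex loses its guaranteed edge and that two detours do not compete for the same boundary edge; this is precisely where the extra room furnished by ``an edge in every simplex'' is needed, and where the $d=2$ case genuinely breaks down. A secondary difficulty is verifying that each cavity excision really preserves the generalized $d$-graph structure and the accessibility condition near the newly created boundary, so that the induction hypothesis on $\delta G$ continues to apply after every cut. Once these local invariants are pinned down, the global Hamiltonian path assembles from the pieces essentially automatically.
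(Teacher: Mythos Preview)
Your overall architecture matches the paper's proof closely: simultaneous induction on $d$, the Swiss-cheese excision of strong interior points, prismatic bridges between boundary components, and detours through boundary edges to absorb the remaining distance-$1$ interior vertices. The paper also flags exactly the bookkeeping obstacles you identify, and its explicit five-step algorithm is essentially your outline.

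There is, however, a genuine gap in your handling of $d=2$. Your claim that ``a connected $2$-graph with boundary is a triangulated disc'' is false: annuli, M\"obius strips, and punctured higher-genus surfaces are all connected $2$-graphs with boundary. Whitney's theorem concerns $2$-spheres (equivalently, $4$-connected maximal planar graphs), so invoking it as a black box covers neither general $2$-graphs with boundary nor Theorem~1 for non-spherical $2$-graphs such as tori or Klein bottles --- and you need the full strength of Theorem~1 in dimension~$2$ to feed the $d=3$ step of your scheme. The paper does not treat $d=2$ as an external base case; it runs the same Swiss-cheese argument there, adding a repair step for ``boxed-in'' interior points whose unit sphere meets $\delta G$ only in isolated vertices (this pathology occurs only when $d=2$, since for $d\geq 3$ the intersection of two neighboring unit spheres in $\delta G$ has positive dimension and hence contains an edge). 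A smaller imprecision: your reduction of Theorem~1 to the boundary version should remove a \emph{vertex} $x$, yielding a $d$-graph with boundary $S(x)$, not a facet --- removing a facet does not change the $1$-skeleton at all, and for non-spheres removing a vertex does not produce a ball.
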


\paragraph{}
The boundary version fails in dimension $d=1$ because connected $1$-graphs with boundary are 
path graphs which are are not Hamiltonian. The boundary version fails the strong Hamiltonian property
in dimension $d=2$ because a Hamiltonian path would have to stay on the boundary and
cover every edge. This already does not work for the wheel graphs. 

\paragraph{}
The problem to find Hamiltonian cycles in a graph is an NP complete problem. It is known that 
for 2-spheres, the computation of Hamiltonian paths is linear \cite{CN}. This generalizes. 
Our proof shows: 

\begin{coro}
The construction of Hamiltonian cycles in d-graphs is polynomial in the number of vertices.
\end{coro}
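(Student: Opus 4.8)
The plan is to establish the polynomial complexity bound by carefully tracking the recursive structure of the constructive proof of the Hamiltonian manifold theorem rather than by any independent argument. Since the main theorem is proved by induction on dimension $d$, with the $d$-graph being cut into pieces, a Hamiltonian cycle built on each piece, and bridges joining them across the cut, I would analyze the cost $T_d(n)$ of the construction on a connected $d$-graph with $n$ vertices. The key observation stated in the introduction is that the construction proceeds by chopping a graph into two pieces each containing roughly half the vertices, recursing on each, and then joining them with a single bridge in a separating region that contains no vertices. This is exactly a divide-and-conquer recurrence, so the natural target is to show $T_d(n)$ satisfies a relation of the form $T_d(n) \le 2\, T_d(n/2) + P(n)$, where $P(n)$ is the polynomial cost of performing one cut, identifying the boundary, and installing the bridges.

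First I would fix the base of the induction on $d$. For $d=1$ a connected $1$-graph is a cyclic graph, which is itself a Hamiltonian cycle, so the construction is immediate and linear. For $d=2$, I would invoke the already-known fact, cited as \cite{CN}, that Hamiltonian paths in $2$-spheres can be computed in linear time; this gives $T_2(n) = O(n)$ and anchors the whole dimension induction at a linear bound. Then, assuming $T_{d-1}(m)$ is polynomial in $m$, I would argue that every sub-step the proof uses in dimension $d$ that appeals to the lower-dimensional result — principally, producing the Hamiltonian cycle on the boundary $(d-1)$-graph and extending it across the ``fattened boundary'' — costs at most polynomially many operations, since the boundary has at most $n$ vertices and the cost of processing it is $T_{d-1}(n)$ plus polynomial overhead for the extension bookkeeping.

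The main work is to verify that each non-recursive operation in the cut-and-bridge scheme is polynomial and to solve the resulting recurrence. I would enumerate the primitive operations: locating an interior point and digging a small cavity around it, which touches only a bounded neighborhood and so costs $O(n)$ at worst for the search; identifying the boundary faces (checking for each $(d-1)$-simplex whether it lies in one or two facets), which is polynomial since the number of simplices of a $d$-graph on $n$ vertices is polynomially bounded; and installing a constant number of bridge edges across a vertex-free separating region, which is $O(1)$ per join plus the polynomial cost of certifying the cut is clean. Assembling these, the recurrence $T_d(n) \le 2\,T_d(n/2) + \mathrm{poly}(n)$ resolves by the master theorem to a polynomial bound in $n$, with the polynomial degree controlled by the degree of the per-step overhead and uniform in $d$ once the $(d-1)$-dimensional calls are absorbed into the overhead.

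The hard part will not be the recurrence itself but justifying that a clean cut with a vertex-free separating region can always be found cheaply, and bounding how much work the ``Swiss cheese'' hole-digging incurs when the pieces have become too small to cut further. In the worst case one might fear that finding a balanced separator costs more than linear time or that the number of cavities needed to expose all near-interior points grows faster than linearly; I would need to show that the cutting function $f$ used in the proof can be evaluated and balanced in polynomial time and that the total number of hole-digging and bridging events across the entire recursion is itself polynomially bounded. Once that is secured, the per-level polynomial overhead times the $O(\log n)$ depth of the divide-and-conquer tree yields the stated polynomial complexity, and the linear base case in dimension two propagates upward to give the sharper linearity claim in all dimensions noted in the introduction.
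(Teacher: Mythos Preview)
Your analysis is built around the wrong algorithm. The balanced divide-and-conquer recurrence $T_d(n) \le 2\,T_d(n/2) + \mathrm{poly}(n)$ comes from the informal paragraph in the introduction, but the paper immediately retracts that picture: ``This informal description needs to be made more precise as the cutting has to be smooth enough. We will \emph{bypass} this difficulty by cutting away only small holes.'' The actual constructive proof in Section~4 never splits $G$ into two balanced pieces; it performs the Swiss cheese procedure, removing one strong interior vertex at a time, adding its unit sphere to the boundary, and bridging. So the recurrence you set up, and the subsequent worry about finding balanced separators cheaply, are about an algorithm the paper does not run. You yourself flag this as the ``hard part,'' but rather than resolve it you should discard it.

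The complexity argument that actually matches the paper's proof is a straight accounting of the five-step algorithm listed at the end of Section~4. Step~2 iterates at most $n$ times (each pass removes one interior vertex), and each pass is local: identify a strong interior point, excise it, record $S(x)$ as new boundary, and lay a prism bridge. Step~3 handles the boxed-in interior points, again at most $n$ of them, each with a local modification. Step~4 calls the $(d-1)$-dimensional algorithm on each boundary component; the total boundary has at most $n$ vertices, so by the induction hypothesis on $d$ this is polynomial (and linear once you feed in the \cite{CN} base case). Step~5 makes at most $n$ local detours. Summing these gives $T_d(n) = O(n)\cdot(\text{local work}) + T_{d-1}(\le n)$, which is polynomial by induction on $d$ with no divide-and-conquer depth analysis and no balanced-separator assumption. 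That is the content of ``Our proof shows'' preceding the corollary.
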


Actually, using the linear complexity result \cite{CN} in two dimensions, we get also in 
general that the construction of Hamiltonian cycles in $d$-graphs is linear in the number
of vertices: there exists a constant $C$ such that a Hamiltonian cycle can be computed in 
less than $C n$ computation steps, if $n$ is the number of vertices in the graph $G$. 
For recent improvements of the computation of Tutte paths, see \cite{SchmidSchmidt}.

\section{Proof} 

\paragraph{}
The proof goes by induction with respect to dimension $d \geq 1$ and is done for 
generalized $d$ graphs with boundary for $d \geq 2$. 
The statement is true for all connected $1$-graphs as these are cyclic graphs. Assume the claim
has been shown for all $(d-1)$-graphs without boundary. Take a generalized $d$-graph with boundary,
remove the interior parts until every vertex is close to the boundary, then build a Hamiltonian
path by taking the one on the boundary and extending it to the vertices close to the boundary. 

\paragraph{}
To cave out holes, we can use a function $f$ to 
cut away a small $d$-ball $B$ around a 
{\bf strong interior point}, an interior point which has only interior points nearby. In each
case, we chose the strong interior point in distance $2$ the already established boundary. 
Given a strong interior point $x$, we remove that point and add the new sphere boundary $S(x)$ 
to the boundary. The Hamiltonian path on that boundary $S(x)$ can make a detour to the center $x$ so 
that the unit ball $B(x)$ is still Hamiltonian. A prismatic bridge, (as explained below) 
combines this with the rest of the graph. 

\paragraph{}
After getting rid of all strong interior points, we look 
by either caving away points such that $G \setminus x$ is still a generalized $d$-graph. 
For these points there exists an interior point $y$ near the boundary
such that $S(y) \cap \delta G$ consists of one point only. 
By adding all unit spheres $S(x)$ to the boundary, removing $x$,
we achieve that every interior point which remains is now near the boundary, meaning
of distance $1$ to the boundary.  \\

(There is a difficulty of having non-accessible interior points
which only appears in dimension 2. This is a case already covered in the literature.
In two dimensions, some interior points $x$ can remain which
can not be reached from the boundary $\delta G$. This happens if the unit sphere of $x$ touches
the boundary everywhere in one point $y$ only. We just absorb such point $y$ to a neighboring hole.
The hole can grow more. The interior could get a triangle or aggregate more points as
long as the subgraph generated by these points remain Hamiltonian. It can be $K_3$, the
diamond graph or any path shellable complex or generalized $2$-graph. In the proof we only need
one possible extension.)

\paragraph{}
Finally, we reach out for the interior points which are in distance $1$ to the boundary after all 
these modifications. These points can be reached by making a small detour from the 
boundary. To organize this, we drill a small hole into $G$ by removing some edges in the $(d-1)$ dimensional 
boundary, exposing the interior point to the boundary and making it a boundary point but 
without changing the property of being a generalized $d$-graph. In the case $d=2$, this means
removing an edge in the $1$-dimensional boundary (and so a triangle). 
In the case $d=3$, this means the interior edge and the two triangles 
of a diamond graph producing a hole with boundary $C_4$ in the two dimensional boundary 
(this also removes three tetrahedra). In the case $d=4$, this means
removing the interior edges and the three tetrahedrons of a prism at the boundary, 
building a three dimensional ball cavity into the boundary 
which means to remove four $4$-simplices and exposes the center point to the boundary.  \\

Here is the explicit construction algorithm of a Hamiltonian path for a connected d-graph $G$:

\begin{itemize}
\item If a boundary $\delta G$ is available, identity the vertices of this subgraph $\delta G$.
      It can be the empty set. 
\item Find an interior point $x$ in distance $2$ from the boundary $\delta G$.
   Remove $x$ from $G$ and join the unit sphere $S(x)$ to the boundary $\delta G$. 
   Find a Hamiltonian path in the unit ball $B(x)$. Build a bridge from $S(x)$ to $\delta G$ 
   using a prism bridge. Now we have a new generalized $d$-ball with boundary. 
   Repeat this step 2) until no interior points of distance 2 to the boundary can be 
   found any more. 
\item If we should end up with an interior point $y$ in distance $1$ to the boundary
   which is ``boxed in" in the sense that its unit sphere $S(y)$ 
   has no edge in common with $\delta G$, we make the interior of one of the 
   neighboring holes larger to 
   absorb some point $z \in S(y)$, allowing $y$ to be reached. This case only can occur in 
   dimension $2$ because for $d \geq 2$, the intersection of two neighboring spheres in $\delta G$
   has positive dimension and so contains an edge. 
\item Now we have a graph for which every still remaining interior
   point can be reached from the boundary. Build Hamiltonian paths in each
   of the connected components of the boundary and connect them with bridges.
   This path covers now everything except the interior points in distance $1$ from the
   boundary. 
\item Find an interior point $x$ in distance 1 from the boundary. 
   Build a detour so have that interior point is included into the Hamiltonian path. 
   Continue with step 5) until no interior points are available any more. Now we have
   a Hamiltonian path. 
\end{itemize}

\section{Calculus} 

\paragraph{}
In this section, we describe the cutting procedure using a function. 
It is a basic property of $d$-graphs without boundary that the level surface $\{ f=c \}$ is always either empty or a 
$(d-1)$ graph without boundary. Also for a $d$-graph with boundary, the level surface $\{ f=c \}$ is either empty 
or a $(d-1)$ graph with or without boundary. These things were observed in \cite{KnillSard}. It makes use of the
definition that the {\bf level set} $\{ f=c \}$ is the set of all simplices in $G$ on which $f$ changes
sign, then build the graph in which these simplices are the vertices and where two are connected, if one
is contained in the other. What is achieved with this definition, that for $c$ different from the range
of $f(V)$ on the vertex set $V$, the level surfaces $\{ f=c \}$ are nice ``discrete manifolds", hence 
the name ``discrete Sard theorem". We deal here with generalized $d$-graphs, but apply the cutting procedure 
only in the case when $\{ f = c \}$ is in the interior of a smaller $d$-graph with boundary. In our
proof we will only apply the situation where $f=1$ on a single vertex $x$ and $f$ is negative 
on every other vertex and where the ball $B_2(x)$ is a $d$-graph with boundary 
(and not only a generalized graph with boundary). 

\paragraph{}
For a general locally injective function $f$, while $\{ f=c \}$ is nice, the sets 
$\{ f<c \}$ and $\{ f>c \}$ 
do not need any more to be $d$-graphs with boundary. 
If $f$ is equal to $1$ on a vertex $x$ and negative everywhere else, then $f>0$ is a single point. 
for some small but positive $\epsilon$, then $\{ f=c \}$ is a $(d-1)$ sphere but $\{ f>c \}$ is the $1$-point graph $K_1$. 
Actually, the vertex set of $\{ f<0 \}$ can be an arbitrary subset of the vertex set. We want to use functions to 
cut a $d$-graph into smaller pieces, establish the Hamiltonian property in both parts then join the two paths
using a single bridge. This ``divide and conquer" strategy which provides a computational way
to find Hamiltonian paths quickly for $d$-graphs. 

\paragraph{}
A level surface $\{ f = c \}$ of a function $f$ on the vertex set of a $d$-graph is called {\bf smooth} 
if for all vertices $x$, the induced graphs from $S^-(x) = \{ y \in S(x) \ ; | \; f(y) <c \}$ 
and $S^+(x) = \{ y \in S(x) \ ; | \; f(y) >c \}$ are either $(d-1)$-balls, $(d-1)$ spheres, a $k$-simplex or empty.
A connected $1$-graph $G$ is just a cyclic graph $C_n$ for $n \geq 4$. Every locally injective function
on a $1$-graph is smooth. Almost by definition, we see that a level surface $\{ f = c \}$ is smooth 
if and only if both $\{ f>c \}$ and $\{ f < c \}$ are $d$-graphs with boundary or then a $k$-simplex with $k \leq d$.
The same definition can be extended to $d$-graphs to $d$-graphs with boundary. The smoothness notion is
not really needed in the proof but it can help in process to use smooth surfaces.

\section{Cyclic polytopes}

\paragraph{}
In this section, we cover a class of generalized d-graphs which are Hamiltonian. Before using the
Swiss cheese strategy, we used to cut up the graph into lower dimensional layers, eventually reaching
cyclic polytopes (stripes). This strategy seems to work also to get a Hamiltonian paths, even more
effectively, but we have not shown that we can always cut things up like that. The idea was to 
cut up a graph $G$ into two $d$-graphs $G_1=\{ f>0 \},G_2 = \{ f<0 \}$, then 
find the Hamiltonian paths in each, then glue them together. 

\paragraph{}
When finding a Hamiltonian path using such subdivision, there will a moment when we can no more divide
up a $d$-graph into smaller $d$-graphs using smooth cuts. There are arbitrary long $2$-balls 
already which can not be cut into smaller balls. An example is a Birkhoff Diamond: start with a
wheel graph $W_5$ with boundary $C_5$, then chose a boundary point then take the union with 
an other wheel graph centered there. This produces a $2$-ball with $2$ interior points and boundary $C_6$. 
It is a $2$-ball but it can not be cut into smaller paths which are $2$-balls. 

\paragraph{}
A {\bf cyclic polytope} is an example of a {\bf linear shellable complex}. This means that it is a 
shellable simplicial complex $G$ of dimension $d \geq 1$ for which the dual graph
is a finite path graph. Cyclic polytopes play an important role in combinatorial topology because
of McMullen's upper bound theorem telling that these polytopes are the ones which
maximize the volume (number of facets) with a fixed number of vertices. 
We have the following observation:

\begin{lemma} 
If $G$ is a $d$-graph which is strongly Hamiltonian, then adding a new vertex over one of its
facets is still Hamiltonian. It is strongly Hamiltonian on the newly added faces but might lose
the strong Hamiltonian property on other faces. 
\end{lemma}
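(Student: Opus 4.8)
The plan is to start from the given strongly Hamiltonian cycle on $G$ and insert a single local detour through the new vertex. Write $F=\{v_0,\dots,v_d\}$ for the chosen facet and let $G'$ be the graph obtained by adjoining a vertex $w$ joined to every vertex of $F$, so that $F$ is replaced by the $d+1$ new facets $F_i=\{w\}\cup(F\setminus\{v_i\})$, $i=0,\dots,d$. Because $G$ is strongly Hamiltonian there is a Hamiltonian cycle $\gamma$ containing an edge inside each facet; in particular it contains an edge $e=\{v_a,v_b\}\subset F$ with $a\neq b$. Since $w$ is adjacent in $G'$ to every $v_i$, both $\{v_a,w\}$ and $\{w,v_b\}$ are edges of $G'$.

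First I would replace the single edge $e$ of $\gamma$ by the two-edge path $v_a-w-v_b$. This yields a closed walk $\gamma'$ that visits $w$ exactly once and visits every other vertex exactly as often as $\gamma$ did, hence a Hamiltonian cycle of $G'$. This already establishes the ``still Hamiltonian'' assertion.

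Next I would check the strong Hamiltonian property on the newly created facets. The only new edges of $\gamma'$ are $\{v_a,w\}$ and $\{w,v_b\}$, and each lies in every $F_i$ containing $v_a$, respectively $v_b$. Since $a\neq b$, a given index $i$ can coincide with at most one of $a,b$, so each facet $F_i$ still contains $v_a$ or $v_b$; as $w\in F_i$ always, $F_i$ therefore contains at least one of the two new edges. Thus $\gamma'$ meets each of the $d+1$ new facets in an edge, which is the claimed strong Hamiltonicity on the added faces.

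Finally, the hedge in the statement comes from the deletion of $e$. If a facet $F'\neq F$ of $G$ happened to contain $e=\{v_a,v_b\}$ and $\gamma$ used $e$ as its unique edge inside $F'$, then after rerouting $\gamma'$ need no longer have an edge in $F'$, so the strong Hamiltonian property can fail there, exactly what the lemma permits; all other facet-edges of $\gamma$ are untouched, so the only possible losses are confined to facets containing $e$ while the gains are the $d+1$ new facets handled above. The main point to keep honest is notational rather than topological: ``facet'' must be read in the subdivided complex, since joining $w$ to all of $F$ produces the clique $\{w\}\cup F$, and the intended object is the stellar subdivision in which the top cell $F$ is removed. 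At the level of the $1$-skeleton, where Hamiltonicity is defined, this distinction is invisible and the rerouting argument above is unaffected; the same argument applies verbatim when $d=1$, where the move is the honest edge subdivision that deletes $\{v_a,v_b\}$ and inserts $w$.
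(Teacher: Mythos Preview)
Your argument is correct and follows the same idea as the paper's proof: pick an edge of the Hamiltonian cycle lying in the chosen facet and ``lift'' it through the new vertex, turning $v_a-v_b$ into $v_a-w-v_b$. The paper's proof is a two-sentence sketch of exactly this rerouting; your version is simply more explicit in verifying that each new facet $F_i$ contains one of the two inserted edges and in identifying where the strong property can be lost.
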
 

\begin{proof}
Just take an edge in each of the faces $F_j$ and "lift" it up to the new vertex $x_j$. 
As the intersection of $F_j$ with $F_k$ has no edge, we don't run into the problem of 
double booking a wire to two extensions. 
\end{proof} 

\paragraph{}
This immediately implies that we can continue to make extensions on a new set $F_j$ of newly 
generated faces. 

\begin{coro}
Every path shellable complex is Hamiltonian. 
\end{coro}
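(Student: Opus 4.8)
The plan is to argue by induction on the number $n$ of facets, building the complex up one $d$-simplex at a time along the path structure of the dual graph and invoking the preceding lemma at each step. I restrict to $d \geq 2$, since for $d=1$ a linear shellable complex is a path graph, which is not Hamiltonian as a closed circuit.

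First I would fix a shelling order $x_1, \dots, x_n$ of the facets for which the dual graph is a path, so that for each $k \geq 2$ the facet $x_k$ meets the partial union $G_{k-1} = x_1 \cup \cdots \cup x_{k-1}$ in exactly one $(d-1)$-simplex $F_k$, and $x_k = F_k \cup \{v_k\}$ for a single new vertex $v_k$. Thus passing from $G_{k-1}$ to $G_k$ is precisely the move of the lemma: adding a new vertex $v_k$ over the facet $F_k$. For the base case $n=1$, the complex is a single $d$-simplex $K_{d+1}$; its complete $1$-skeleton carries a Hamiltonian cycle, and for any $(d-1)$-face, deleting the single omitted vertex from the cycle leaves a path on the remaining $d$ vertices whose $d-1 \geq 1$ edges all lie in that face, so $K_{d+1}$ is strongly Hamiltonian on every $(d-1)$-face.

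For the inductive step I would assume $G_{k-1}$ is Hamiltonian by a cycle that is strong on the particular face $F_k$ along which $x_k$ attaches. Then the lemma yields a Hamiltonian cycle on $G_k$ obtained by lifting an edge of $F_k$ through $v_k$, and this cycle is strongly Hamiltonian on the faces of $x_k$ newly created by the attachment. The key point is to verify that the attachment face $F_{k+1}$ of the next simplex is one of these newly created faces, so that the induction hypothesis is genuinely reproduced. This follows from the path structure of the dual graph: the shared face $F_k$ is internal to $G_k$, lying in the two facets $x_{k-1}$ and $x_k$, and so is not available as a further attachment site; this forces $F_{k+1}$ to be a face of $x_k$ containing the new vertex $v_k$, that is, exactly a face the lemma guarantees to be strong. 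Since the base case makes every $(d-1)$-face of $x_1$ strong, the chain of hypotheses is primed at $k=2$ and propagates.

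The main obstacle, and the step I would check most carefully, is this bookkeeping of which faces retain the strong Hamiltonian property. The lemma explicitly warns that strong Hamiltonicity may be destroyed on faces other than the newly added ones, so the argument closes only because the path structure routes every future attachment onto a freshly created, still-strong face. Once this is established the induction runs without further difficulty, and the same facet-by-facet construction is exactly the one described for cyclic polytopes above.
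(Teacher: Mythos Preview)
Your proof is correct and follows essentially the same approach as the paper, which simply remarks that the lemma ``immediately implies that we can continue to make extensions on a new set $F_j$ of newly generated faces.'' You have made explicit the key bookkeeping point that the path structure of the dual graph forces each subsequent attachment face $F_{k+1}$ to contain the new vertex $v_k$ and hence to be among the faces on which the lemma guarantees the strong property is preserved; this is exactly the content the paper leaves implicit.
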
 

\paragraph{}
The reason why not all shellable simplicial complexes are Hamiltonian is that during the extension
the strong Hamiltonian property has been lost on other faces. 
The prototype examples where the Hamiltonian property fails are Goldner-Harary in any dimensions, 
or stellated cross polytopes, also in any dimension. 

\section{Bridges}

\paragraph{}
A locally injective function on the vertex set $V$ of a graph $G=(V,E)$ is also called a {\bf coloring}. 
The discrete Sard theorem \cite{KnillSard} assures that the level set $\{ f = c \}$ 
in a $d$-graph is a $(d-1)$-graph, as long as $c$ is different from the range of $f$. The level set is
defined as the subgraph of the Barycentric refinement of $G$ generated by the set of simplices in $G$
on which $f$ changes sign. If we think of the level surface as water we need to build bridges 
between the part $\{ f<c\}$ and the part $f>c$. In dimension $d=2$, the level surface is built 
by the edges and triangles on which $f$ changes sign. Each connectivity component is a circular graph. 

\paragraph{}
Using functions, we solve the problem of ``cutting a d-graph into smaller parts" $\{ f<c \}$ and $\{ f>c \}$, where
$f=c$ does not contain any vertices but is naturally associated to a $d-1$ complex. 
As long as $c$ is different from the values $f(V)$, we get two regions $A,B$ for which prismatic bridges
exist between the two parts $A$ and $B$. Such a bridge is given by a $d-1$ boundary simplex in $f<c$ and a $d-1$ boundary
simplex in $f>c$ bounding a $d$-ball in dimension $d>1$. Now, if a Hamiltonian path in $A$ visits the 
first $d-1$ face in $A$ and a Hamiltonian path in $B$ visits a d-1 face in $B$, then we can rewire
the prism so that the two Hamiltonian paths merge to a single Hamiltonian path and this path still has
the property that it visits every d-simplex as the simplices in the bridge $f=c$ are bound by 
$(d-1)$-dimensional faces which contain edges. 

\paragraph{}
Here is the lemma which assures that we can join Hamiltonian paths in $A = \{ f<c \} $ and 
$B = \{ f>c \}$. Assume the bridge connects the simplex $A'$ in $A$ with the simplex $B'$ in $B$. 

\begin{lemma}[Bridge lemma]
For every edge $e=(a,b)$ in a $(d-1)$-face $X$ in $A'$, there is a $(d-1)$ face $Y$ in $B'$ 
such that for every edge $f=(c,d)$ in $Y$, there is a quadrilateral containing $a,b,c,d$. 
\end{lemma}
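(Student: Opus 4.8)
The plan is to make the combinatorics of the prism bridge completely explicit and to place the given edge $e$ at the ``base'' of the staircase triangulation, after which the required property reduces to a single incidence check that is uniform over all edges of $Y$.

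First I would fix coordinates for the bridge. The prism bridge $P$ over the $(d-1)$-simplex has a bottom face $A'=X=\{a_1,\dots,a_d\}$ lying in the boundary of $\{f<c\}$ and a top face $B'=Y=\{b_1,\dots,b_d\}$ lying in the boundary of $\{f>c\}$, the two faces being matched vertex by vertex. I would triangulate $P$ by the standard staircase into the $d$ facets $\sigma_m=\{a_1,\dots,a_m,b_m,\dots,b_d\}$ for $1\le m\le d$. Reading off which vertices co-occur in some $\sigma_m$ gives the edge set of $P$: every bottom pair $a_ia_j$ and every top pair $b_ib_j$ is an edge, and a cross pair satisfies $a_i\sim b_j$ precisely when $i\le j$, since $a_i,b_j\in\sigma_m$ forces $i\le m\le j$. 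This asymmetric cross-adjacency rule is the one fact the whole argument turns on.

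Next, given the edge $e=(a,b)\subset X=A'$, I would choose the linear order underlying the staircase so that the endpoints of $e$ are the two lowest vertices, say $a=a_1$ and $b=a_2$, ordering the remaining vertices of $X$ arbitrarily; the matched top $(d-1)$-face is then $Y=B'=\{b_1,\dots,b_d\}$. It remains to show that every edge of $Y$ spans a quadrilateral with $e$. So let $g=(b_k,b_l)$ with $k<l$ be an arbitrary edge of $Y$ (the edge the lemma denotes $f=(c,d)$). I claim the four vertices lie on the $4$-cycle $a_1-a_2-b_l-b_k-a_1$. Its four sides are the bottom edge $(a_1,a_2)=e$, the cross edge $(a_2,b_l)$, the top edge $(b_l,b_k)=g$, and the cross edge $(b_k,a_1)$; the two cross edges exist because $1\le k$ and $2\le l$, the latter holding automatically since $l>k\ge1$ forces $l\ge2$. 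Thus the quadrilateral exists for \emph{every} edge of $Y$, and since it has $e$ and $g$ as opposite sides it is exactly the gadget that lets one re-wire the two Hamiltonian paths across the bridge.

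The part I expect to be the real obstacle is the universal quantifier ``for every edge $f\subset Y$'': an arbitrary bottom edge does not work against the full top face. For instance with $d=3$ and $e=(a_1,a_3)$ the top edge $(b_1,b_2)$ cannot be completed to a $4$-cycle, because $a_3$ is adjacent to neither $b_1$ nor $b_2$. What rescues the statement is precisely the adaptive placement of $e$ at the base together with the rule $a_i\sim b_j\iff i\le j$: once $e=(a_1,a_2)$, the only top vertex not adjacent to $a_2$ is $b_1$, while $a_1$ is adjacent to every top vertex, so $a_1$ can always absorb the single deficient endpoint. This is the whole content of the lemma. The only thing left to verify, in order to tie it back to the strong Hamiltonian conclusion, is that as $g$ ranges over the top edges the resulting quadrilaterals meet every facet $\sigma_m$ of the bridge in an edge, so the merged path still visits every $d$-simplex of $P$; this is routine once the incidence rule above is in hand.
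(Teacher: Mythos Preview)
Your staircase model of the prism is the right picture, and the incidence check in the last paragraph is correct once $e$ really sits in position $(a_1,a_2)$. The gap is the step that puts it there. You write that you ``choose the linear order underlying the staircase so that the endpoints of $e$ are the two lowest vertices,'' but in the lemma the bridge is a fixed subgraph of $G$: the cross-edges $a_ib_j$ are edges of $G$ and are not yours to rearrange. In the staircase the bottom vertices are distinguished by their cross-degrees ($a_i$ is adjacent to exactly $d-i+1$ top vertices), so the only label-permutation preserving the bipartite adjacency is the identity. Hence an arbitrary edge of $X$ cannot be moved to the slot $(a_1,a_2)$. Your own diagnostic example already exhibits the failure: for $d=3$ with the fixed staircase, the bottom edge $(a_2,a_3)$ admits no $4$-cycle with the top edge $(b_1,b_2)$, and no relabelling repairs this. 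Since you also set $Y=B'$, you have discarded the one degree of freedom the lemma actually offers (the existential ``there is a $(d-1)$-face $Y$ in $B'$''), so nothing is left to absorb the bad cases.

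The paper proceeds differently and never tries to permute a rigid staircase. Starting from $X$, it looks at all cross-edges from $X$ into $B$, organises them into a simplicial complex, and reads off from its maximal $(d-1)$-simplex an injective map $X\to B$; the image of this map is taken as $Y$. The quadrilateral is then obtained by a short distance argument: since $c,d\in S(a)\cup S(b)$, each of $c,d$ is adjacent to one of $a,b$, and a small case analysis closes the $4$-cycle. So in the paper $Y$ is \emph{produced from} $X$ rather than assumed to be a pre-matched top face, and this is exactly where the existential quantifier earns its keep. If you want to salvage your approach, the missing ingredient is an argument that, given $e$, one can \emph{select} $Y$ (or equivalently select which sub-prism to use) so that the induced cross-adjacency places $e$ in the favourable position; the current write-up asserts this but does not justify it.
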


\begin{proof}
Given a face $X$. Look at all edges which go from $X$ to $B$. This naturally defines
a simplicial complex where the edges are the vertices. It has a single maximal $(d-1)$-simplex.
This defines an injective map from $X$ to $B$. The image is a simplex $Y$ in $B$. 
Now pick any edge $y=(c,d)$ in $Y$.  We claim that there is a quadrangle containing $a,b,c,d$. \\

First of all, the distance between $a$ and $c$ (or $d$) is smaller or equal than $2$. 
The reason is that $c,d$ are in the union of the unit spheres of $a$ and $b$. 
That shows that $a$ is connected either to $c$ or $d$.  \\

Now, assume $a$ is connected to $c$. We have to show that $b$ is connected to $d$. 
If there were no direct connection from $b$ to $d$, then because of the maximal
distance $2$, $d$ has to be connected to $c$. But then the closure of $bcd$ is
there and $bd$ is in.
\end{proof}

\paragraph{}
The set of all $d$-simplices with edges in $X$ or $Y$ or connections between $X$ and $Y$ is the 
{\bf prism} generated by the faces $X$ and $Y$. The edges connecting $X$ and $Y$ actually are
the vertices of a $(d-1)$-simplex in the hypersurface $f=c$. The prism itself is not a $d$-ball
but it is a generalized $d$-ball. \\

\paragraph{}
In the case $d=2$, for example, the prism is a diamond graph made of $2$ triangles and some 
unit balls are simplices. 
In the case $d=3$, the prism is a cylindrical prism made of three tetrahedra, which represent the
three edges of a triangle of the level surface $\{ f=c \}$. 

\section{Shellability}

\paragraph{}
In this section, we point out that the Whitney complex of a $d$-sphere $G$ or 
$d$-ball is a shellable complex in all dimensions $d \geq 1$. It is a result which is 
not needed in the Swiss cheese proof of the Hamiltonian manifold theorem. This means that
there exists a sequence $x_k$ of $d$-simplices such that $\{ x_j \}_{j=1}^n$ 
generates the complex $G$ and $G_k=\bigcup_{j=1}^{k-1} X_j \cap X_k$ is a shellable $(d-1)$ complex for 
every $k$, if $X_k$ is the complex generated by $\{x_k\}$: it is the
smallest set of non-empty sets which contains all $x_k$ and is closed under the operation of 
taking finite non-empty subsets. 

\paragraph{}
The usual suspects of non-shellable triangulations of spheres are not Whitney complexes of graphs.
The smallest $d$-ball $G$ due to Ziegler for example produces a
5 dimensional Whitney complex from its 1-dimensional skeleton complex.
The Barycentric refinement $G_1$ of $G$ is then a Whitney complex and indeed it is shellable. \\
All Barycentric refinements of a simplicial complex whose
Euclidean realization gives a $d$-sphere or $d$-ball is shellable
because a Barycentric refinement is a Whitney complex.
But all these  non-shellable cases are washed away with Barycentric refinements.

\begin{lemma}
The Whitney complex of a $d$-sphere is a shellable complex.
\end{lemma}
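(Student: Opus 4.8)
The plan is to prove, by induction on the dimension $d$, the pair of statements: every $d$-sphere has a shellable Whitney complex, and every $d$-ball has a shellable Whitney complex. The base cases are immediate: the $(-1)$-sphere is the empty complex (vacuously shellable), the $0$-sphere is a pair of isolated vertices whose two $0$-dimensional facets are shelled in either order (the second meets the first in the empty $(-1)$-complex), and a $0$-ball is the single vertex $K_1$. For the inductive step I assume that every $(d-1)$-sphere and every $(d-1)$-ball is shellable, and I first reduce the sphere statement to the ball statement, then attack the ball statement directly.

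\textbf{Sphere from ball (coning step).} Let $G$ be a $d$-sphere and pick a witnessing vertex $x$, so that $B=G-x$ is a contractible $d$-ball with boundary $\partial B = S(x)$, a $(d-1)$-sphere. The facets of $G$ containing $x$ are exactly the sets $\{x\}\cup\tau$ with $\tau$ a facet of $S(x)$; the remaining facets are precisely the facets of $B$. I would shell $B$ (the ball case) and then append the star of $x$, using a shelling $\tau_1,\dots,\tau_m$ of $S(x)$, in the order $\{x\}\cup\tau_1,\dots,\{x\}\cup\tau_m$. A direct computation of intersections shows that for $1<k<m$ the attaching complex is $H_k=\overline{\tau_k}\cup(\{x\}\ast R_k)$, where $R_k=\tau_k\cap\bigcup_{j<k}\tau_j$ is the $(d-2)$-ball furnished by the shelling of $S(x)$; this is a union of two shellable $(d-1)$-balls glued along the $(d-2)$-ball $R_k$, hence a shellable $(d-1)$-ball. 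For $k=1$ the interface is the single $(d-1)$-simplex $\tau_1$, and for $k=m$ it is the full boundary of the $d$-simplex $\{x\}\cup\tau_m$, a $(d-1)$-sphere shellable by the induction hypothesis. Thus the coning step upgrades any shelling of $B$ to a shelling of $G$, and is not the difficulty.

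\textbf{The ball case, the main obstacle.} Here I would sweep the ball $B$ with a generic coloring $f$ (a locally injective function on the vertices) and process the vertices in increasing $f$-order, appending at each vertex $v$ the block of facets whose top $f$-value is attained at $v$, that is, the cone over the descending link of $v$. The discrete Sard theorem of \cite{KnillSard} guarantees that every level set $\{f=c\}$ is a $(d-1)$-graph, so the part of that block already present at the time of attachment is a $(d-1)$-ball, shellable by the outer induction; the block itself, being a cone over this piece, then attaches as a shelling of shellable $(d-1)$-complexes. The crux of the whole argument — and the place where the restrictive inductive definition is genuinely used — is to verify that every such attachment is a single clean ball-to-ball gluing rather than a tangled one. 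It is precisely the contractibility built into the definition of a $d$-sphere (which forces each unit sphere inside $B$ to be an honest $(d-1)$-sphere or $(d-1)$-ball, and so rules out the entanglements that produce non-shellable balls such as those of Furch, Newman, Rudin, or Ziegler) that makes each $H_k$ a shellable $(d-1)$-complex. Establishing this clean-attachment property at every step is the technical heart of the proof; once it is in hand, the inner sweep shells $B$, the outer induction on $d$ closes, and the coning step above promotes the ball result back to the desired statement for the sphere $G$.
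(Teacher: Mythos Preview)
Your coning step (sphere from ball) is essentially what the paper does: shell the ball $G-x$, then append the facets of the star of $x$ in the order induced by a shelling of the $(d-1)$-sphere $S(x)$. That part is fine.

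The divergence, and the problem, is in the ball case. The paper does not sweep with a generic function. It uses the recursive definition of contractibility directly: a $d$-ball $B$ is contractible, so by definition there is a \emph{specific} vertex $x$ with $S(x)$ and $B-x$ both contractible; since $S(x)$ is a unit sphere in a $d$-ball it is a $(d-1)$-sphere or $(d-1)$-ball, and contractibility forces it to be a $(d-1)$-ball. One then shells $B-x$ by an inner induction on the number of vertices and appends the cone $x\ast S(x)$ using a shelling of the $(d-1)$-ball $S(x)$ supplied by the outer induction. The entire argument is driven by the sequence of vertices handed to you by the contractibility clause; no choice of function enters.

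Your Morse sweep, by contrast, requires that for a \emph{generic} locally injective $f$ each descending link $S^-(v)$ be a $(d-1)$-ball (or sphere). This is exactly what fails. The discrete Sard theorem you invoke controls the level set $\{f=c\}$ inside the Barycentric refinement, not the induced subgraph $S^-(v)$ of a unit sphere; the paper itself (Section~5) introduces the extra notion of a \emph{smooth} level surface precisely because for general $f$ the pieces $S^{\pm}(x)$ need not be balls, spheres, or simplices, and it notes that $\{f<c\}$ and $\{f>c\}$ need not be $d$-graphs with boundary. So the ``clean-attachment property'' you flag as the crux is not a technicality to be filled in --- it is false for generic $f$, and the contractibility hypothesis does not rescue it: contractibility gives you one good vertex at a time, not goodness of every vertex in an arbitrary order. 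To repair your argument you would have to construct $f$ so that it reproduces the contractibility order, at which point you have recovered the paper's proof.
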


\begin{proof}
We use induction with respect to dimension $d$. 
For a $(-1)$ sphere, which is the empty graph, this is the assumption. Assume the claim holds for $d-1$-spheres. 
Take a $d$ sphere. By definition, there is a vertex $v$ such that $S(v)$ is a $(d-1)$-sphere
and $G-v$ is contractible. As $S(x)$ is shellable, there is a sequence of simplices 
$y_1, \dots, y_m$ building up $S(x)$ in such a way that $y_1 \cup \dots \cup y_{k-1} \cap y_k$ is
shellable. Let $x_j$ be the cone extension of $y_j$ with the vertex $v$. 
Now, $x_1 \cup \dots \cup x_{k-1} \cap x_k$ is shellable as the result is just the cone extension
with $v$. 
\end{proof}

\paragraph{}
Similarly, we have: 

\begin{lemma}
The Whitney complex of a $d$-ball is shellable.
\end{lemma}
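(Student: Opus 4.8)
The plan is to extend the induction on $d$ from the previous lemma so that the ball case rides along. The base cases are immediate: a $0$-ball is a point, and a $1$-ball is a path graph $P_n$, whose edges shelled from one end to the other each meet the previous union in a single vertex. So assume the sphere lemma and this lemma in all dimensions below $d$. A $d$-ball is by definition $H=G-x$ for a $d$-sphere $G$ and a vertex $x$, and its boundary is $\delta H = S(x)$, a $(d-1)$-sphere, shellable by the inductive hypothesis. By the previous lemma the ambient $d$-sphere $G$ is shellable. Now the facets of $G$ fall into two classes: the $d$-simplices containing $x$, which are exactly the facets of the closed star $\mathrm{st}(x)=x*S(x)$, and the $d$-simplices avoiding $x$, which are exactly the facets of $H$. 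The shellability definition used here only constrains the intersection of $x_k$ with the earlier facets, so every initial segment of a shelling is itself a shelling of the subcomplex it generates. It therefore suffices to produce a single shelling of $G$ in which the facets of $H$ form the initial block and the facets of $\mathrm{st}(x)$ form the terminal block; the initial segment is then the desired shelling of $H$.

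To produce such an ordering I would start from the cone structure of the star, exactly as in the previous lemma. Picking a shelling $y_1,\dots,y_m$ of the $(d-1)$-sphere $S(x)$ and coning with $x$ gives simplices $s_j=x*y_j$ for which $s_1\cup\dots\cup s_{k-1}\cap s_k$ is the cone over $y_1\cup\dots\cup y_{k-1}\cap y_k$; since the latter is a shellable $(d-1)$-complex, $s_1,\dots,s_m$ is a shelling of $\mathrm{st}(x)$ and, continued as in the sphere lemma, the beginning of a shelling $F_1,\dots,F_N$ of all of $G$ whose first block is $\mathrm{st}(x)$. The essential feature is that $\mathrm{st}(x)$ and $H$ meet precisely along the boundary sphere $S(x)$, so the interface between the two blocks is controlled by a shelling of $S(x)$, and here the inductive hypothesis that all $(d-1)$-balls and $(d-1)$-spheres are shellable is what certifies that the relevant intersection complexes $\overline{y_j}\cup\bigl(x*(y_j\cap(y_1\cup\dots\cup y_{j-1}))\bigr)$ are shellable $(d-1)$-complexes.

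The main obstacle is that I want the facets of $H$ first, whereas the natural cone construction puts $\mathrm{st}(x)$ first. I would resolve this by the reversibility of sphere shellings: in a $d$-sphere every $(d-1)$-simplex lies in exactly two facets, so if $F_1,\dots,F_N$ is a shelling of $G$ then the reversed sequence $F_N,\dots,F_1$ is a shelling as well. Reversing the shelling whose initial block is $\mathrm{st}(x)$ yields a shelling of $G$ whose terminal block is $\mathrm{st}(x)$ and whose initial block is exactly the facets of $H$, and by the initial-segment remark above this is a shelling of $H$. Equivalently, what is really being proved is that deleting the open star of a vertex from a shellable sphere leaves a shellable ball; the delicate part is the bookkeeping at the gluing interface $S(x)$, and the reversal trick is the cleanest way to avoid constructing the shelling of $H$ by hand while still reading it off from the already shelled sphere.
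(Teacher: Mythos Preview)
Your route is genuinely different from the paper's, and it carries a circularity that the paper's argument avoids. The paper works intrinsically with the ball, running a double induction on the dimension $d$ and on the number of simplices: contractibility of the $d$-ball $G$ supplies a vertex $x$ with $S(x)$ and $G-x$ both contractible; $G-x$ is shellable by the inner inductive hypothesis (fewer simplices), $S(x)$ is a shellable $(d-1)$-ball by the outer one, and one appends the cone $x*S(x)$ facet by facet to the already-shelled $G-x$. The sphere lemma in dimension $d$ is never invoked.

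Your argument instead passes through the ambient $d$-sphere $G\supset H$ and needs a shelling of $G$ whose first block is $\mathrm{st}(x)$. You manufacture it by shelling $\mathrm{st}(x)$ via the cone and then writing ``continued as in the sphere lemma'' to reach all of $G$. But continuing past $\mathrm{st}(x)$ means shelling the complement $G-x=H$, which is exactly the lemma you are proving. The sphere lemma only guarantees \emph{some} shelling of $G$, not one that starts at the star of your chosen vertex; promoting an arbitrary shelling to one with a prescribed initial block is an extendable-shellability statement you have not established, and the paper's proof of the sphere lemma singles out one particular vertex $v$ (the one with $G-v$ contractible by definition), not the $x$ you removed. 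The reversal step for closed pseudomanifolds and the initial-segment remark are both correct; the gap sits entirely in producing the block-ordered shelling. In effect you have inverted the logical dependence between the two lemmas: the paper proves the ball case on its own and then (implicitly) uses it to complete the sphere case, whereas you are deducing the ball case from a form of the sphere lemma that already presupposes it.
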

\begin{proof}
To verify the statement for $d$-balls, let us assume, the statement is proven 
for all $(d-1)$-balls. We make now a second induction with respect
to the number $n$ of simplices in the ball. For one simplex, there is
nothing to show. Given a $d$-ball $G$. By definition of contractibility, 
there is a vertex $x$ such that both $S(x)$ and $G-x$ are contractible. 
Now, $H=G-x$ is smaller and so shellable. As $S(x)$ is a smaller dimensional 
ball, also the unit sphere $S(x)$ is shellable. Let $y_1, \dots, y_m$ be the simplices in $S(x)$.
The complex $G$ is obtained by adding the simplices $x_j$ to $G-x$, where 
$x_j$ is the cone extension of $y_j$ with $v$. We can now get $G$ from $G-x$
by successively adding the simplices $x_1, \dots, x_m$.
\end{proof}

\paragraph{}
This of courses does not go over to higher genus $2$-complexes already. The simplicial complex
of a triangulation of a $2$-torus for example is not shellable because at some point 
we have to add a triangle $x_k$ which intersects $\bigcup_{j=1}^{k-1} x_j$ in a
$0$-dimensional or $1$-dimensional non-pure complex violating in both cases the
shellability condition. 

\paragraph{}
Every cone extension $H=G+x$ of a $d$-sphere is a $(d+1)$-ball: the reason is that this is the 
removal of a vertex from a suspension of $G$. 
The graph $H$ has a boundary $\delta H=G$ which is the unit sphere $S(x)$.

\section{Questions}

\paragraph{}
A graph is called {\bf Hamiltonian connected} if for every
$a \in V, b \in V$, which need not to be distinct, there exists a Hamiltonian path from $a$ to $b$.
Are all $d$-spheres Hamiltonian connected? This is known for $d=2$ \cite{Thomassen83}. 
% A shorter proof of Thomassen's theorem on Tutte paths in plane graphs Kenta Ozeki 
We see that we can force a Hamiltonian cycle to go
through a particular edge $e$. The reason is induction and the fact that we can force
$e$ to be in a prism, then force the boundary simplices. 

\paragraph{}
In two dimensions, there are already stronger results like Hamiltonian connectedness
which assure that we can force a Hamiltonian path to go through any edge. 
Is there a simple argument showing, that under the existence
of a Hamiltonian path, there is one which intersects every simplex in an edge? 
There is a simple argument showing that for any 2-sphere and any edge $e=(a,b)$, we can find 
a Hamiltonian path going trough that edge: remove all edges going from $a$ to some $y$ in $S(a)$
except for two adjacent edges. We have still a Hamiltonian path. It has to go through $e$ now
and this path is still a Hamiltonian path for the original sphere. 

\paragraph{}
We don't know the structure of the set of all Hamiltonian cycles in a $d$-sphere. 
Is the class of Hamiltonian cycles of a $d$-sphere we can define two paths $C,D$ to be
connected if there exists a local deformation from one to the other. 
Local means that we can only change both graphs in a prismatic bridge which 
connects two $d-1$ simplices. It is natural to ask whether this space of Hamiltonian
paths is connected. 

\paragraph{}
Edge refinements preserve the Hamiltonian property in $d$-graphs:
given an edge $(a,b)$ in $d$-dimensional complex: The intersection $S(x) \cap S(y)$ is 
either a $d-2$ sphere or a $d-2$ ball. A Hamiltonian path passing
through either $a$ or $b$, can make a detour of the central point $c$ splitting
$(a,b)$ and have the Hamiltonian property for the edge refined complex. \\
Can we find a similar construction for Barycentric refinements? Is there a constructive
way for $d$-graphs to get from a Hamiltonian path of $G$ to a Hamiltonian path in the 
Barycentric refinement?  
For general graphs, the Barycentric refinement procedure does not preserve the property of being
Hamiltonian.

\paragraph{}
Traditionally, coloring questions have been considered in parallel to the property 
of being Hamiltonian even so there are no general ways to get from a Hamiltonian
path a vertex coloring (one can get sometimes get face colorings of planar maps from
a Hamiltonian path on the boundary). Anyway, here is again the question for $d$-graphs: 
is the chromatic number of a $d$-sphere always 
equal to $d+1$ or $d+2$? For $d=2$ it is the 4-color theorem. 

\paragraph{}
A generalized {\bf Barnette question} asks whether the dual graph of a $d$-sphere with
chromatic number $d+1$ is Hamiltonian. The condition of having a minimal coloring 
replaces the Eulerian property in two dimension. For $d=2$, the problem is the classical 
Barnette problem which is still open. 

\section{Illustrations}

\begin{figure}
\scalebox{0.12}{\includegraphics{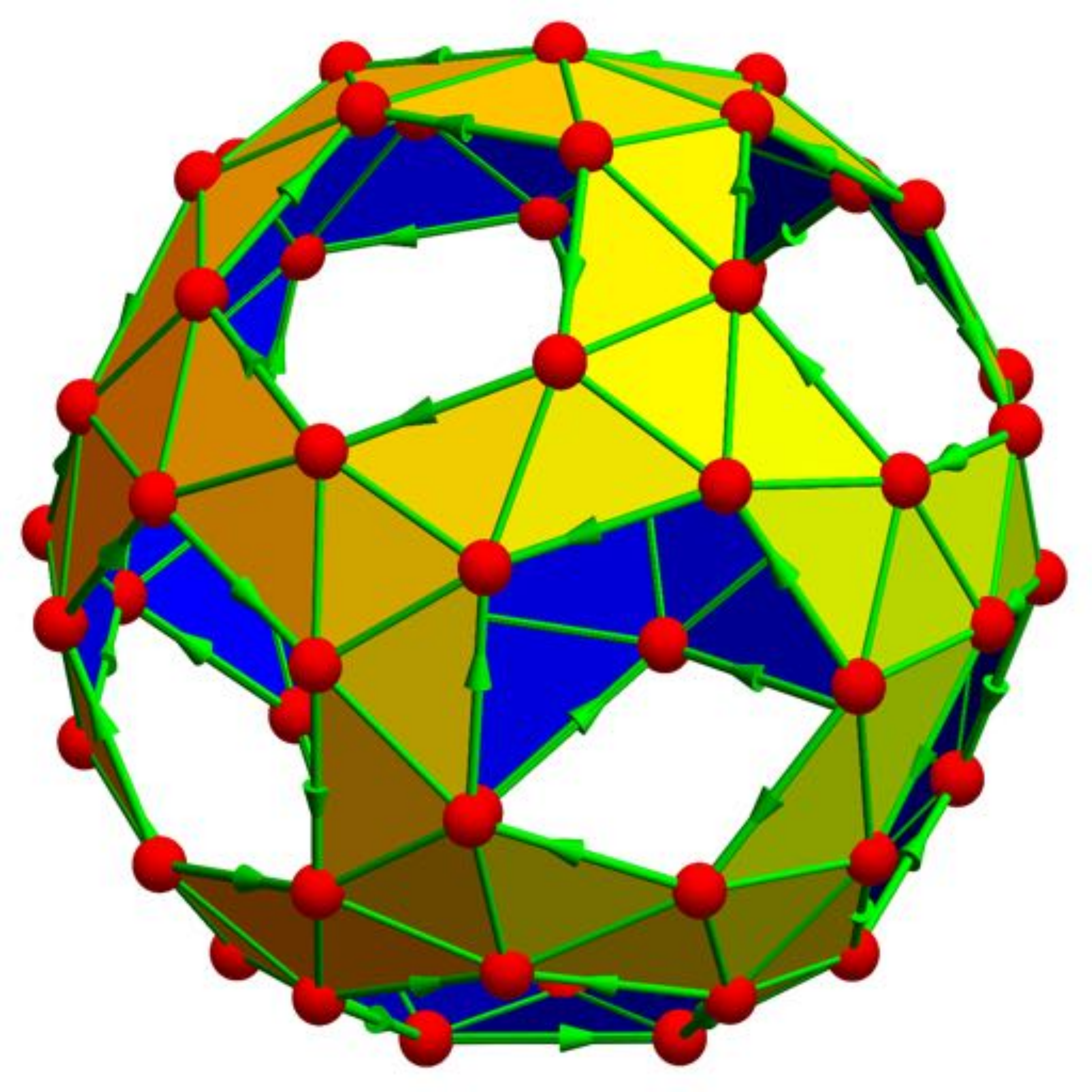}}
\scalebox{0.12}{\includegraphics{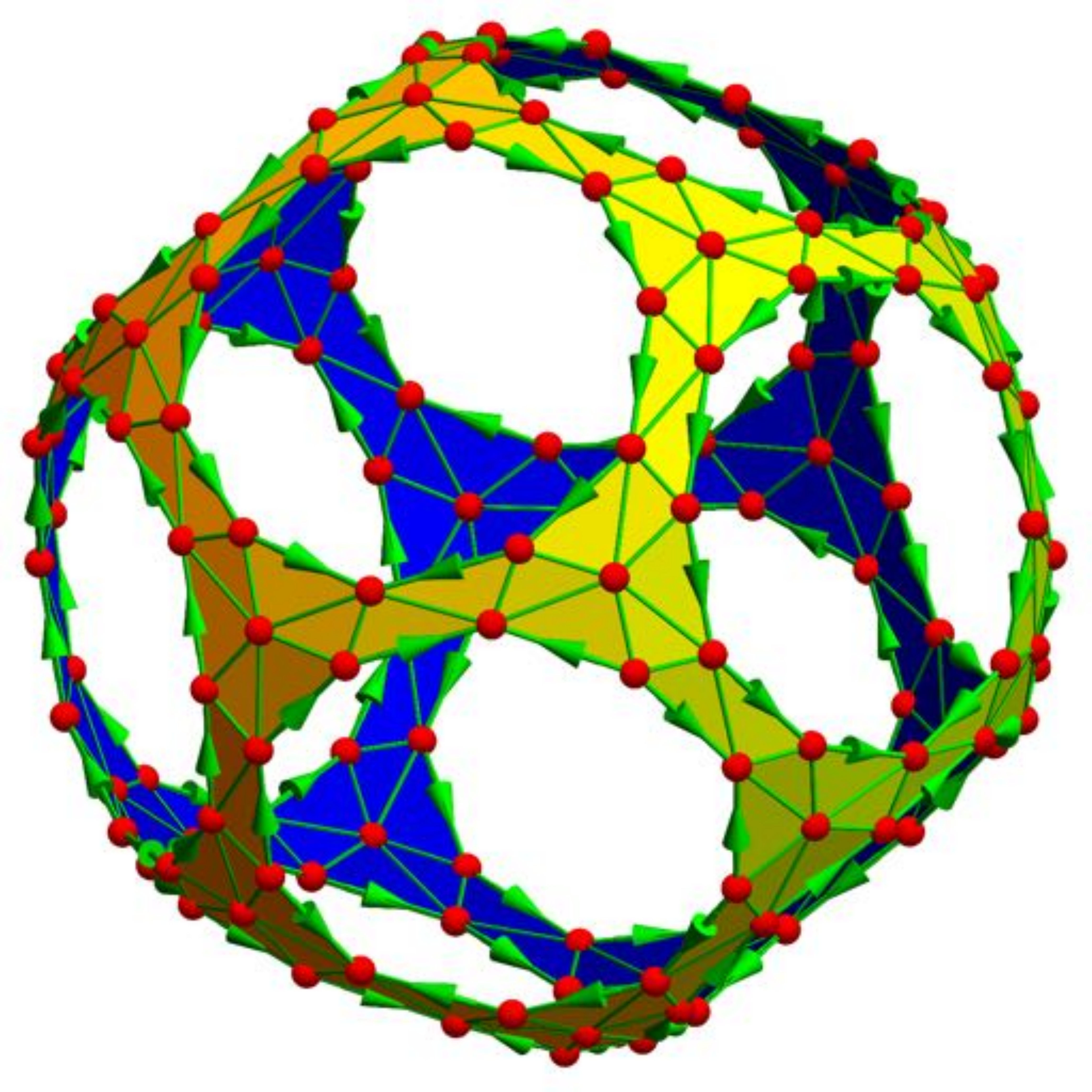}}
\caption{
\label{Stokes}
The first graph is a generalized $d$-graph without interior point. 
The second graph is a generalized $d$-graph with some interior
points. By deleting some edges we can expose them to the boundary.
}
\end{figure}

\begin{figure}
\scalebox{0.12}{\includegraphics{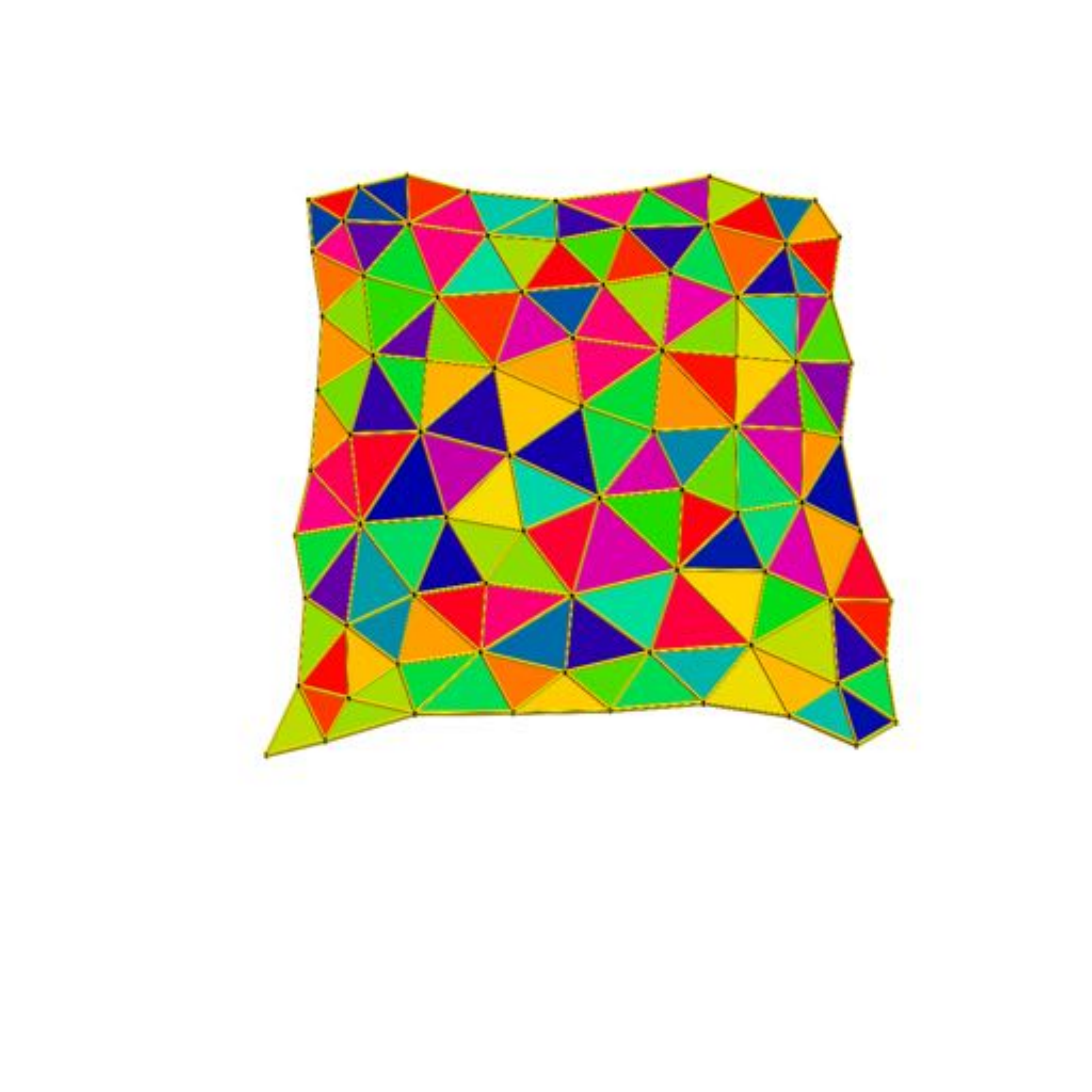}}
\scalebox{0.12}{\includegraphics{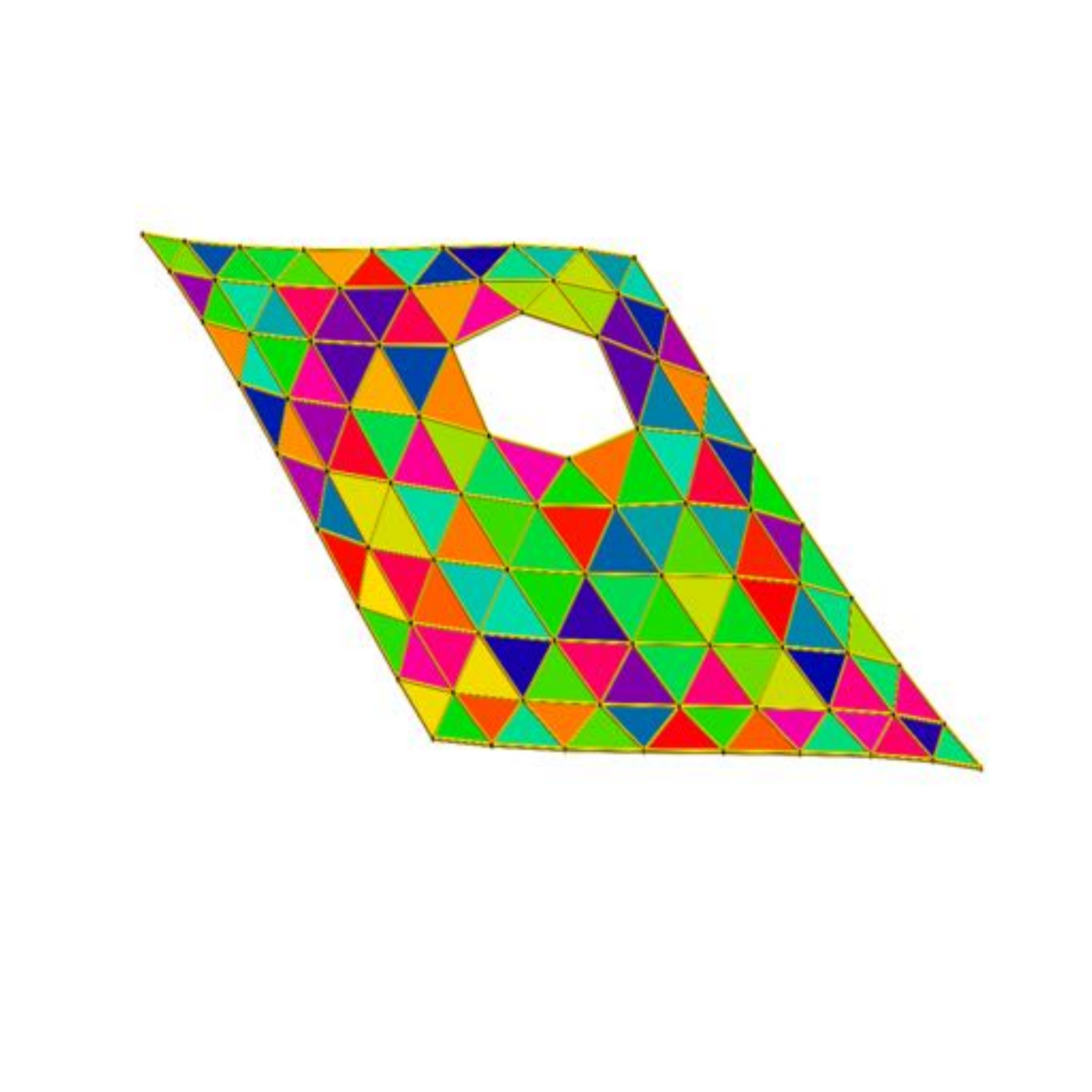}}
\caption{
\label{2Dim}
We see two generalized $2$-graphs with boundary. Removing the extreme
triangles which only intersect the rest on an edge produces $2$-graphs. 
These graphs are Hamiltonian. 
}
\end{figure}

\begin{figure}
\scalebox{0.12}{\includegraphics{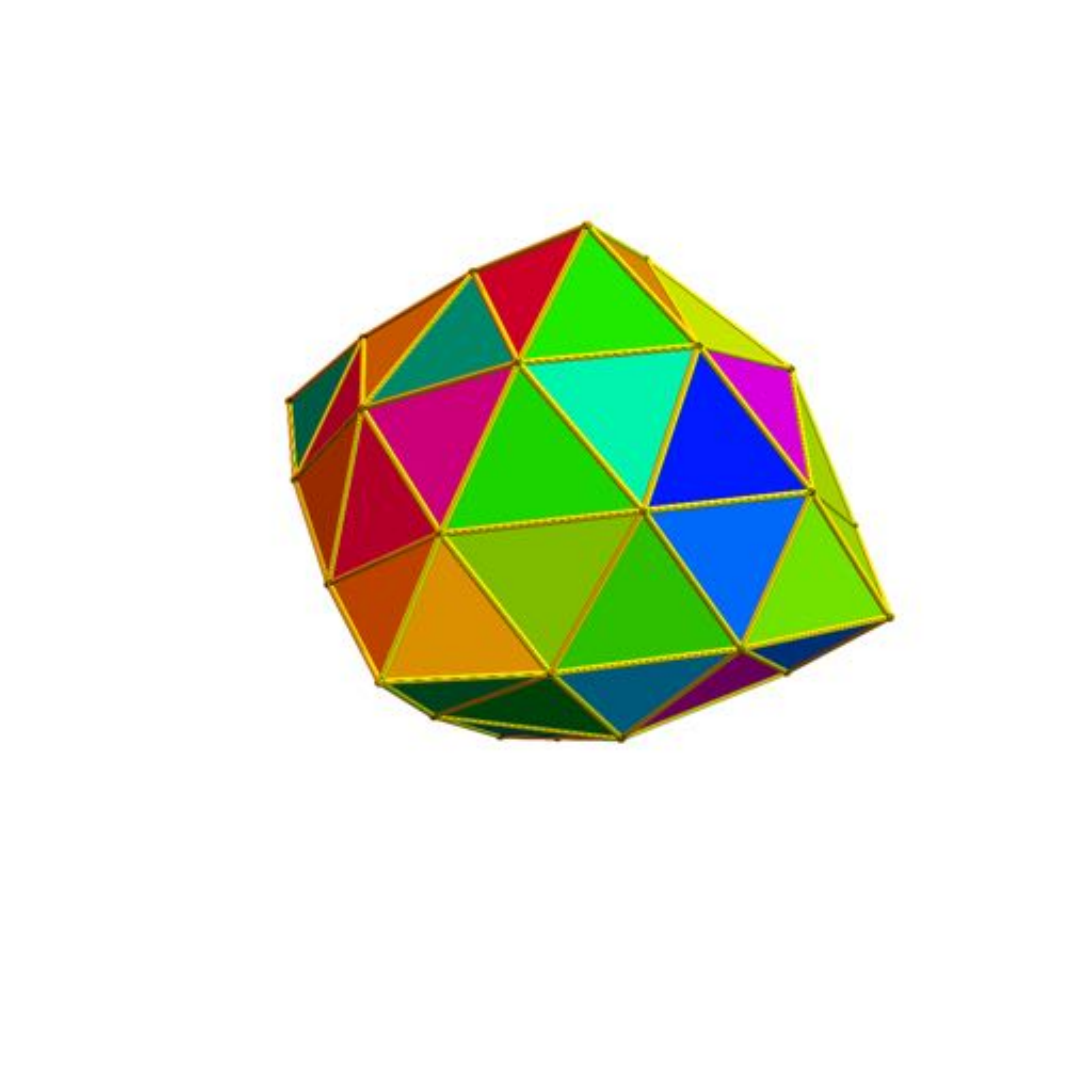}}
\scalebox{0.12}{\includegraphics{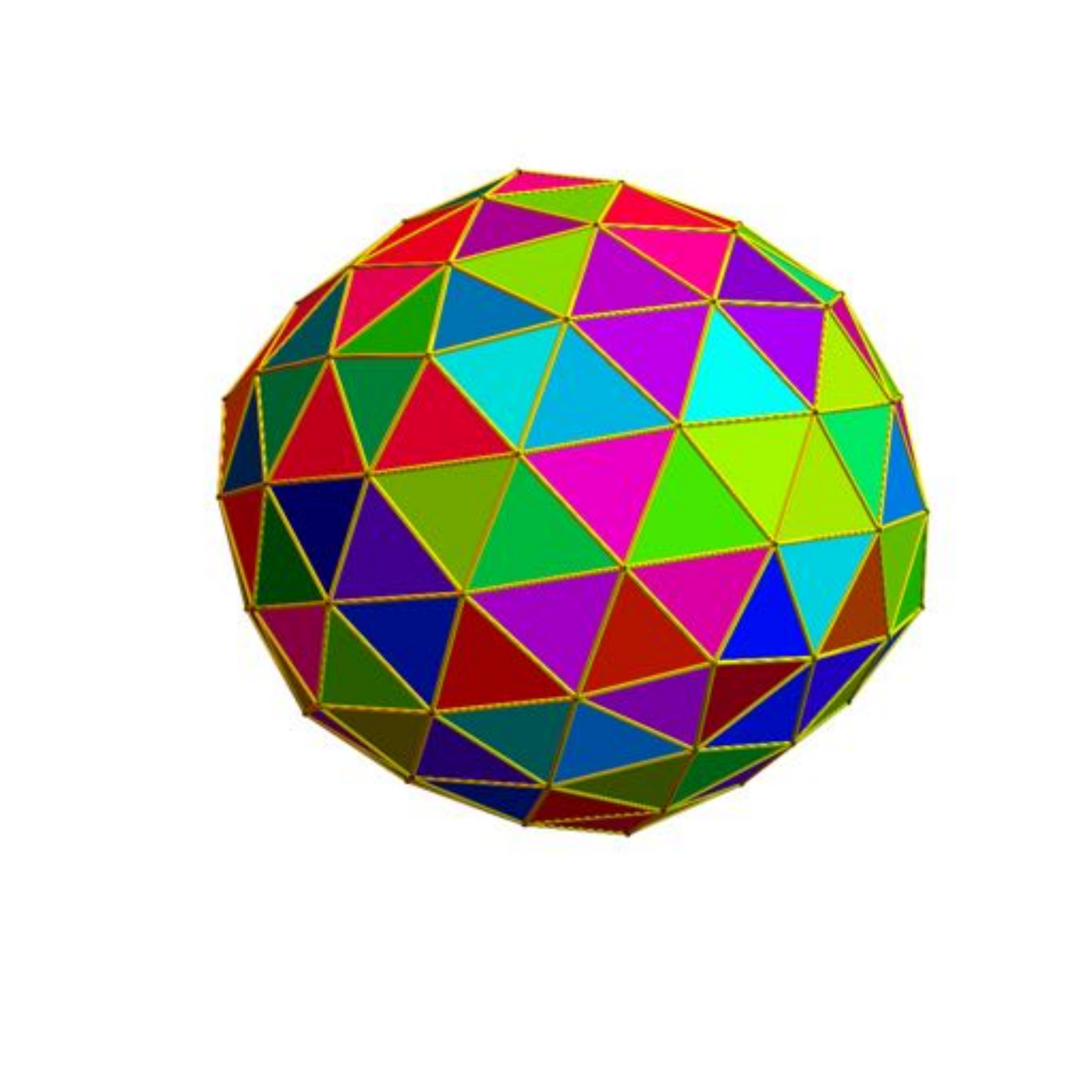}}
\caption{
\label{Bary}
Two $2$-spheres, tame refinements of the octahedron and the 
icosahedron. Both are best shown to be Hamiltonian by cutting
them up into smaller parts, covering the parts with Hamiltonian
paths, then joining them. 
}
\end{figure}

\begin{figure}
\scalebox{0.12}{\includegraphics{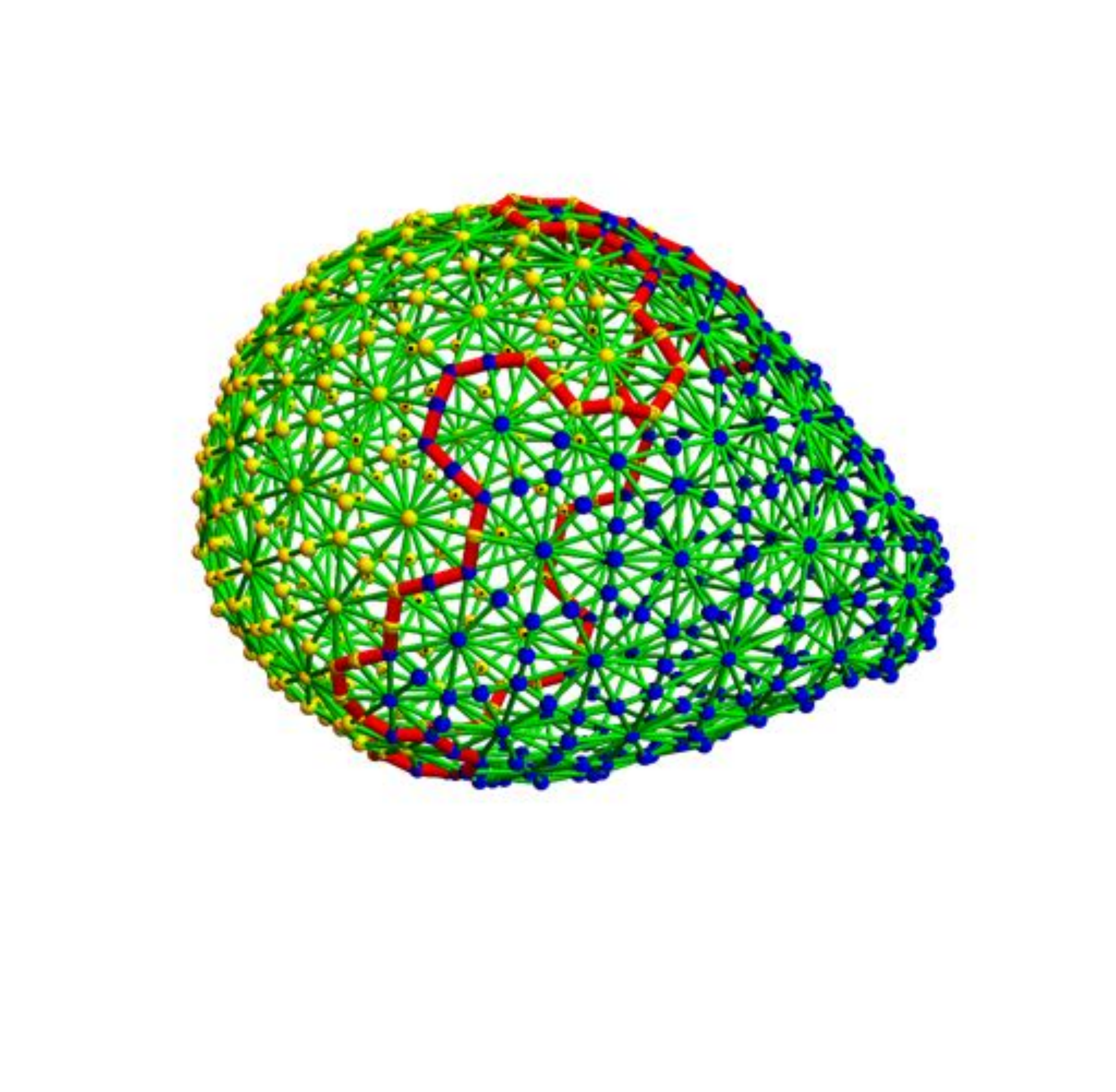}}
\scalebox{0.12}{\includegraphics{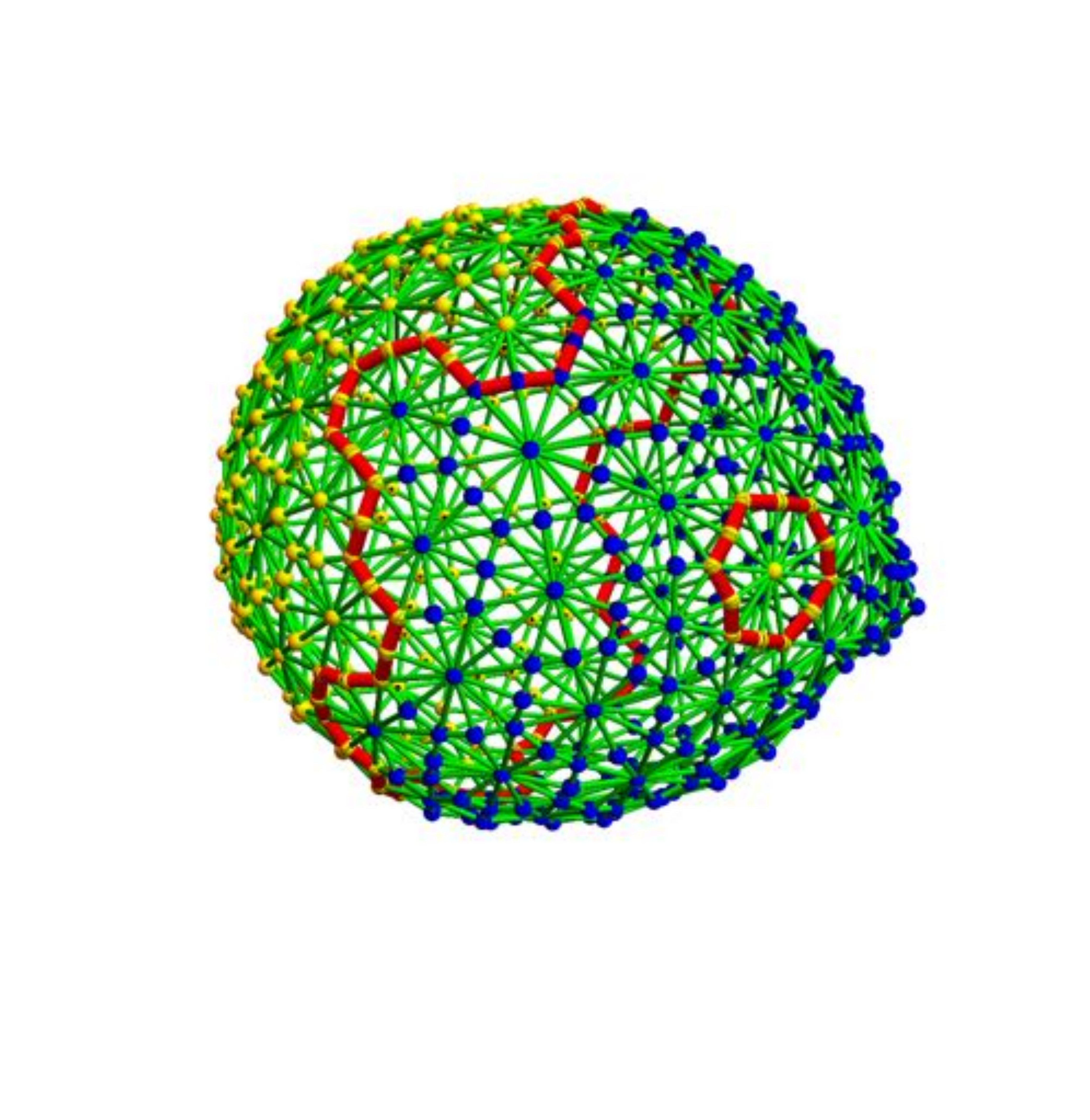}}
\caption{
\label{Stokes}
We look at the level curves $\{ f = c \}$ of some functions $f$ on 
a $2$-sphere $G$. To visualize the graph, we draw in each case
the Barycentric refinement, in which the curve is an actual curve. 
In this case there is first one single curve and then a curve with
two components. 
}
\end{figure}

\begin{figure}
\scalebox{0.24}{\includegraphics{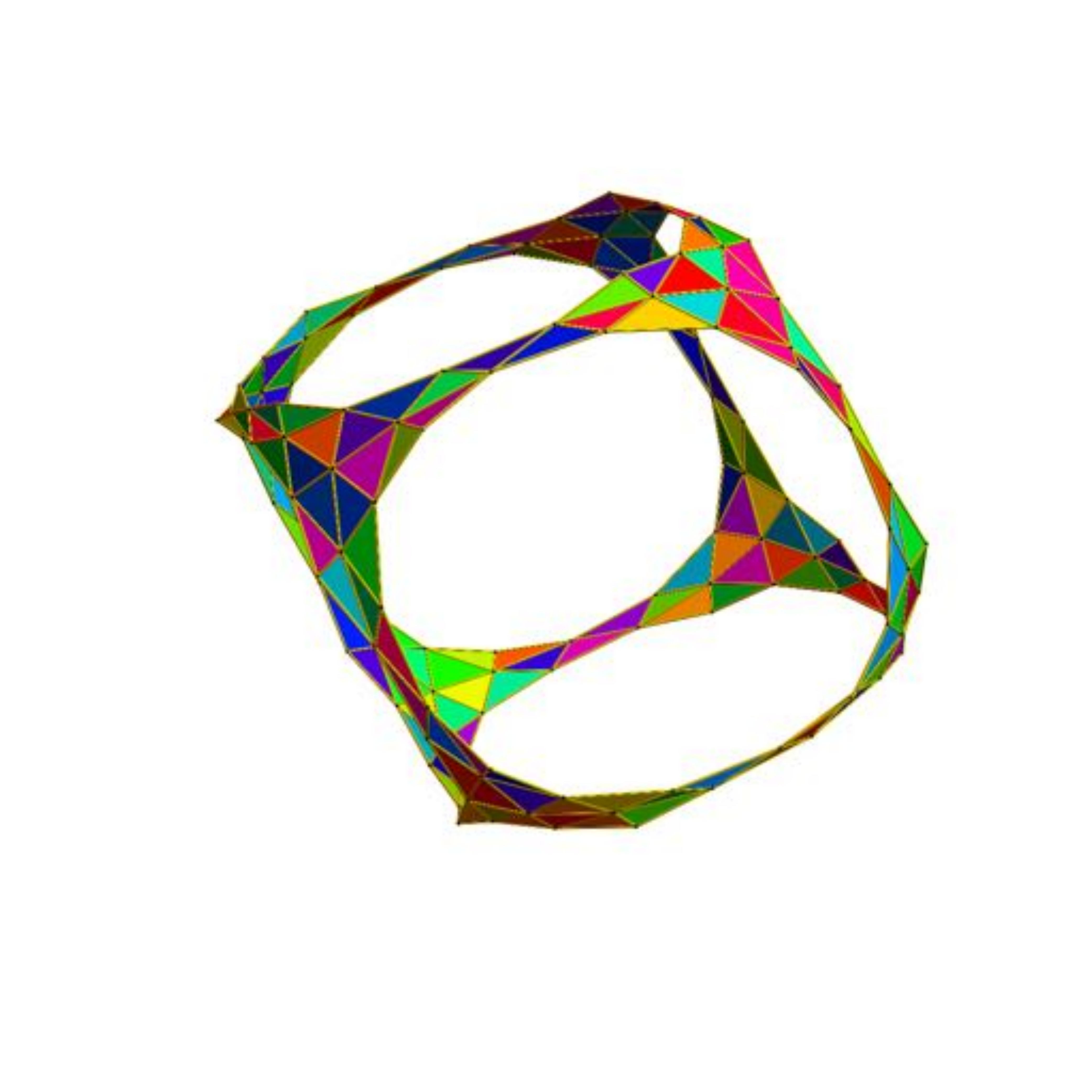}}
\caption{
\label{Stokes}
The picture shows a generalized 2-graph with boundary. In this case, an edge
refinement at every of the 8 corners will produce a
generalized 2-graph with boundary. 
}
\end{figure}

\begin{figure}
\scalebox{0.12}{\includegraphics{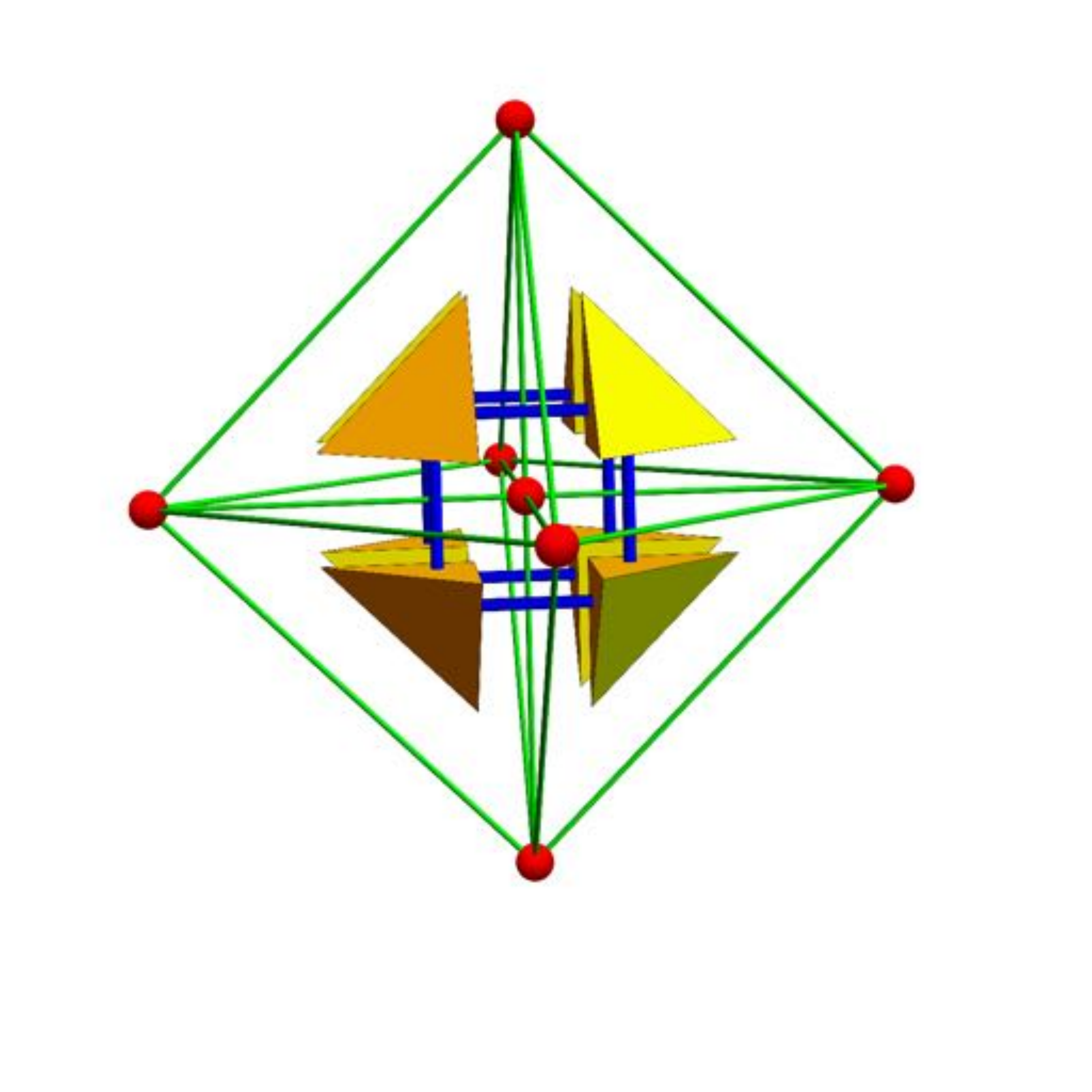}}
\scalebox{0.12}{\includegraphics{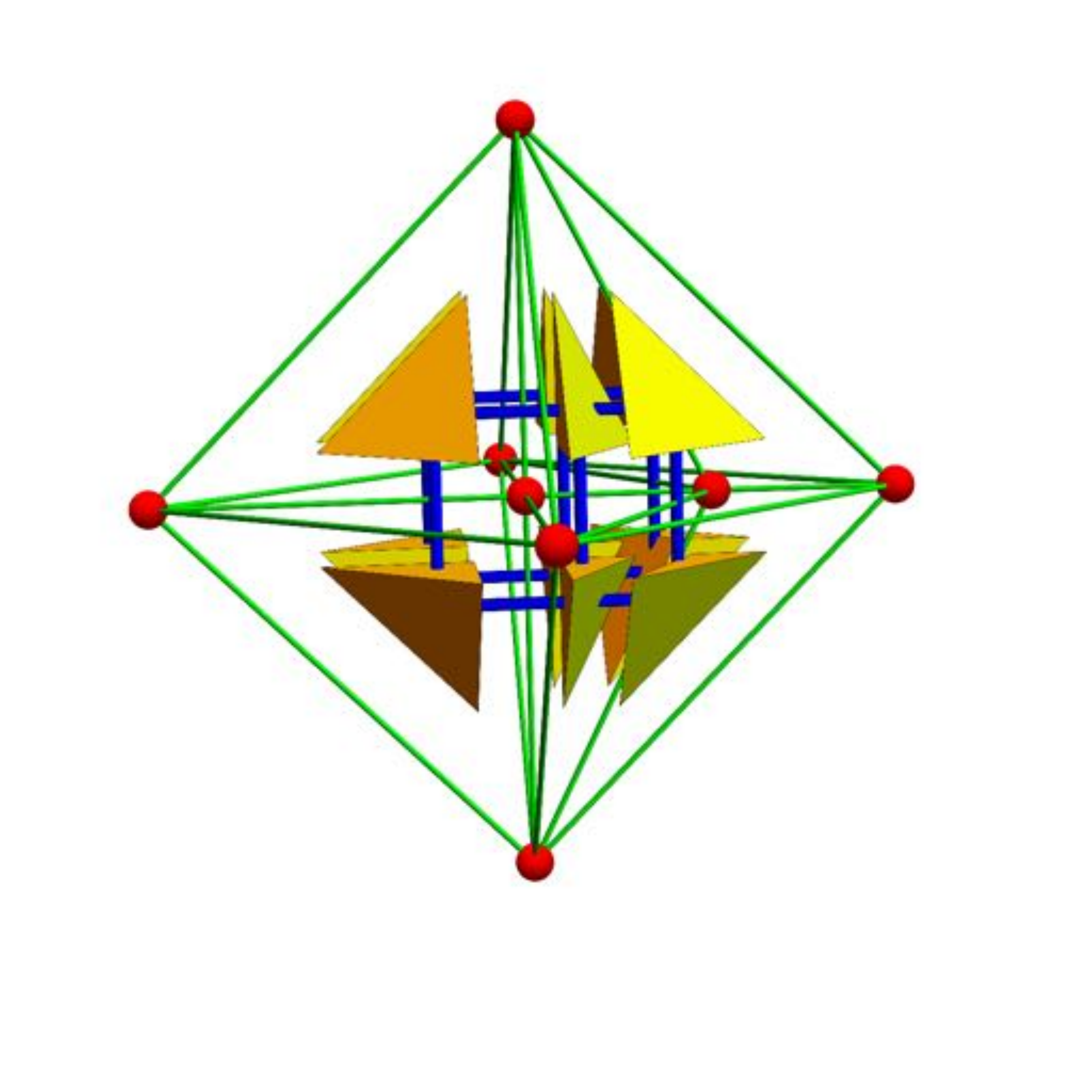}}
\caption{
\label{Edge refinement}
An edge refinement in the case of a 3-ball. It adds an other
interior vertex. Reversing the process is an edge collapse
near the boundary, which is the same than a homotopy step,
where a vertex is removed.The fact that edge refinements value
the Hamiltonian property fueled the research early on. 
}
\end{figure}

\begin{figure}
\scalebox{0.12}{\includegraphics{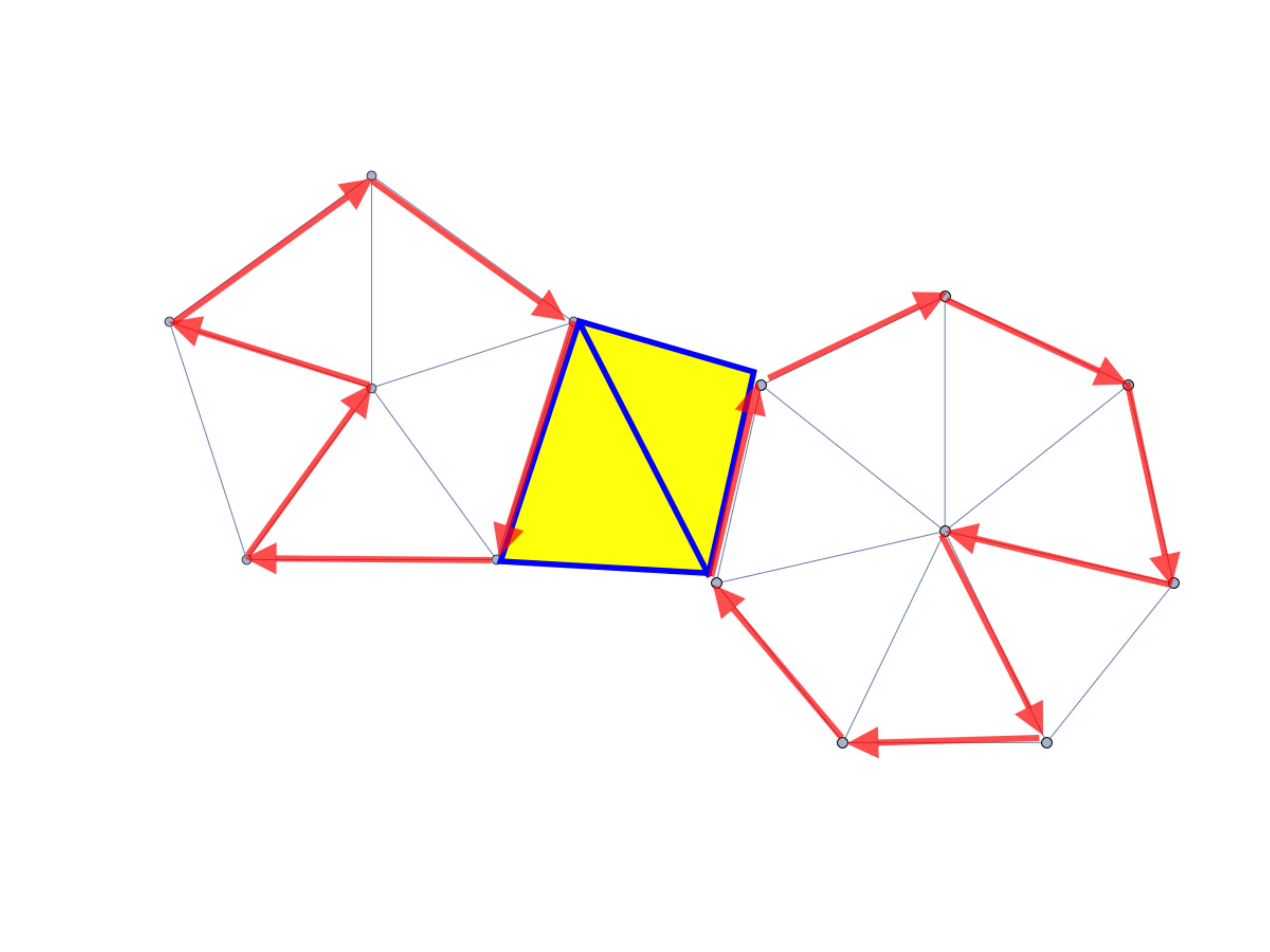}}
\scalebox{0.12}{\includegraphics{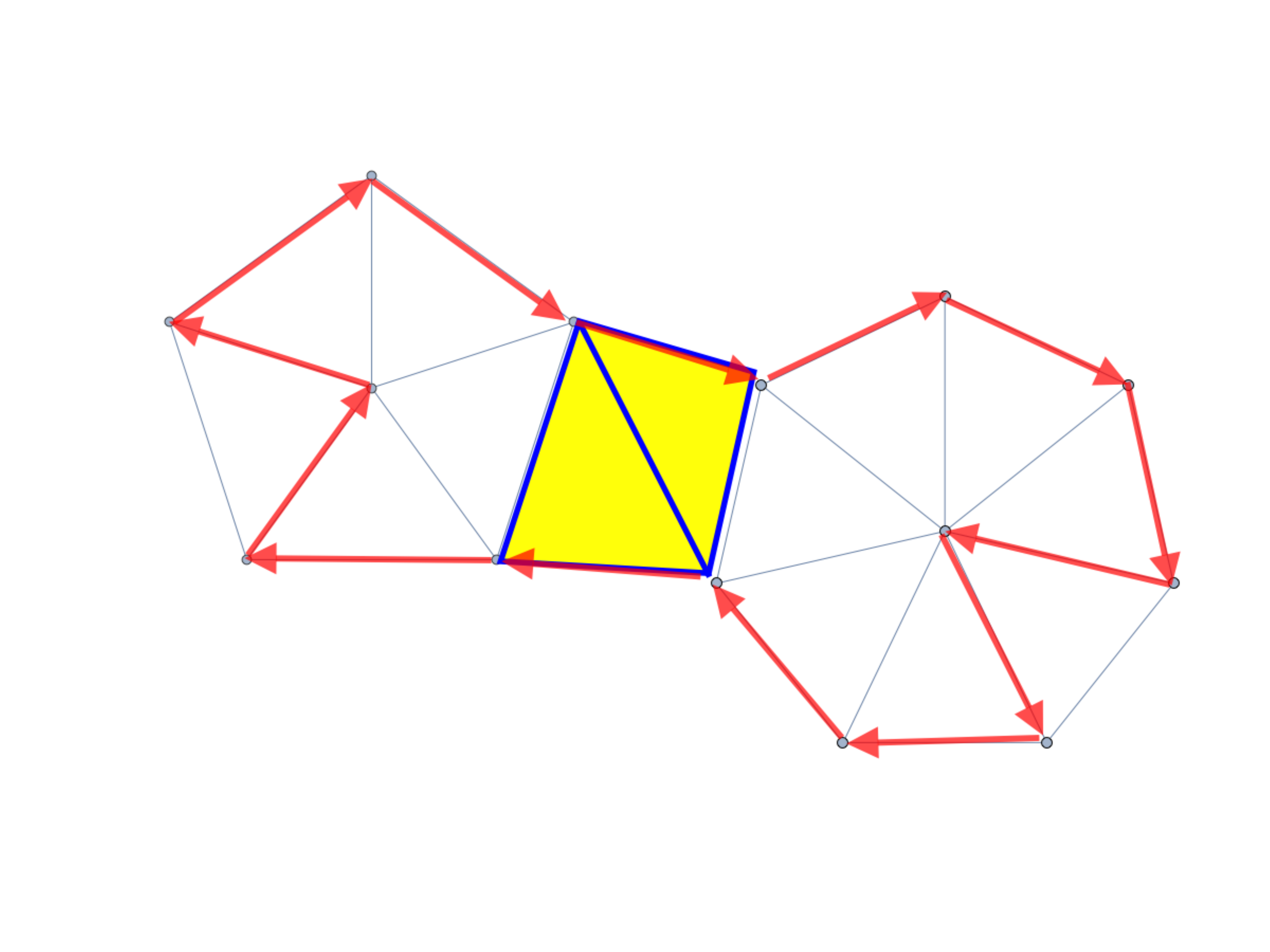}}
\caption{
\label{Prism constrution}
The prism construction can bridge two $f<c$ with $f>c$. 
The three edges in the prism represent three vertices in the
level curve $\{ f=c \}$. The two triangles are two edges in that
curve. 
}
\end{figure}

\begin{figure}
\scalebox{0.12}{\includegraphics{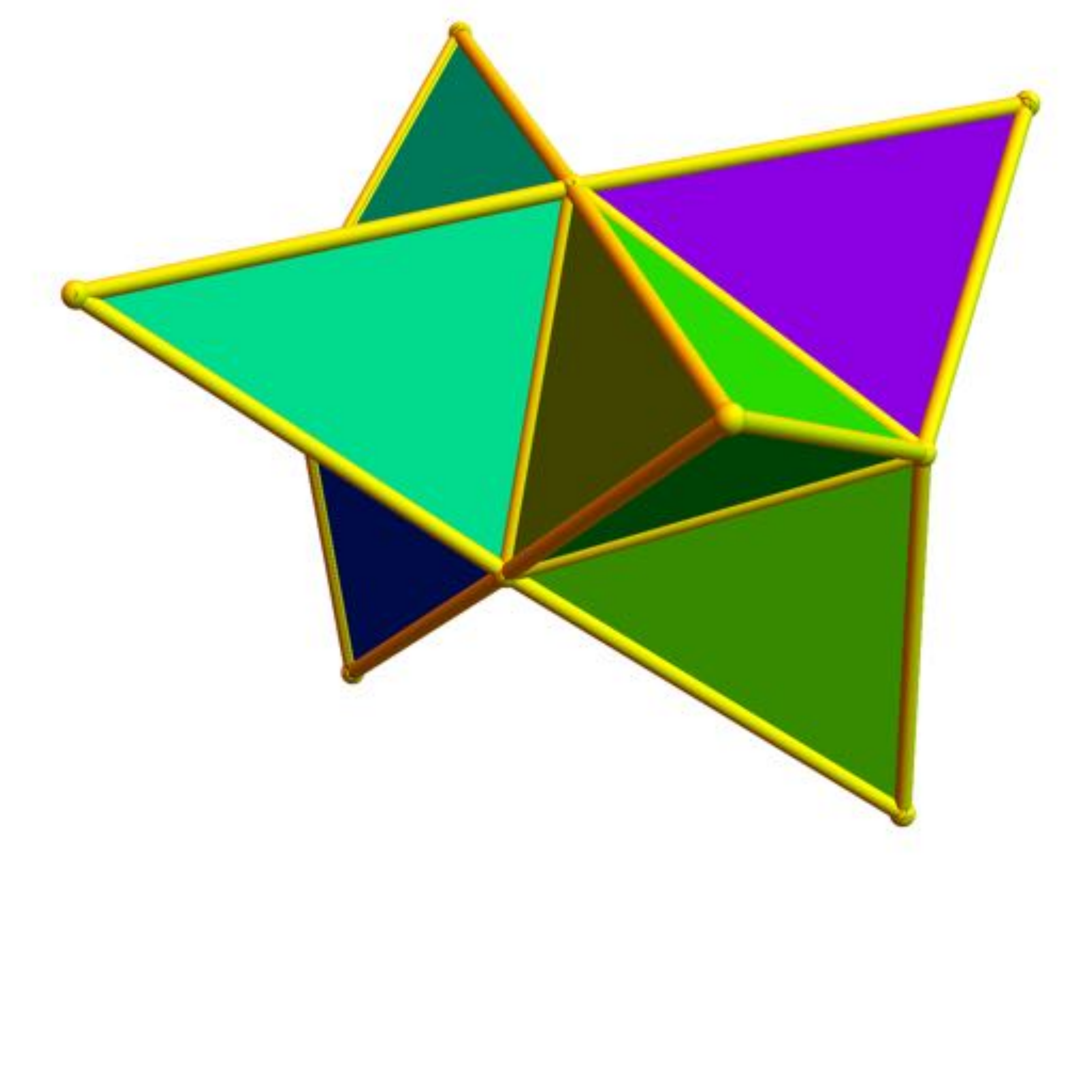}}
\scalebox{0.12}{\includegraphics{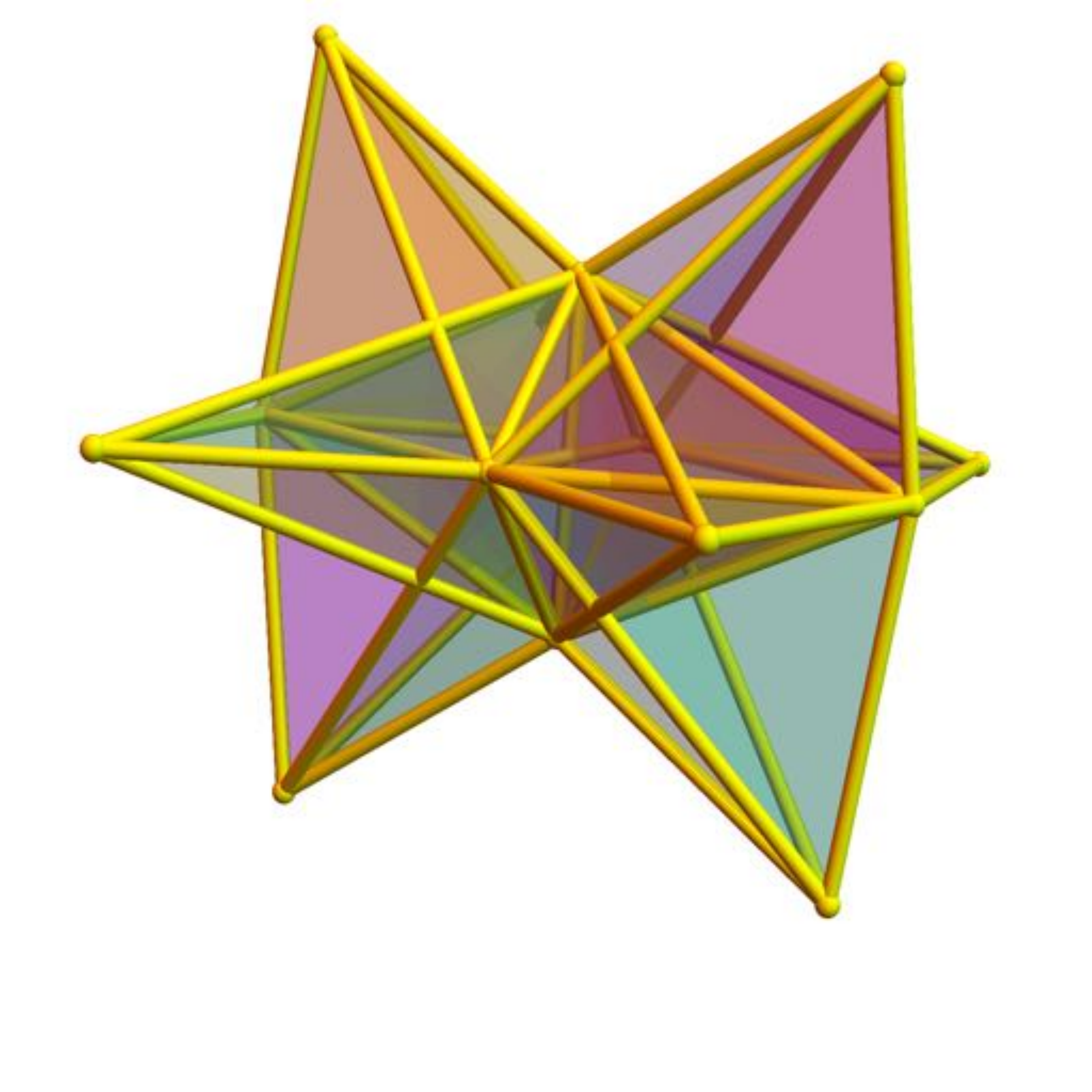}}
\caption{
\label{Goldner-Harary}
The Goldner-Harary graph seen to the left is the 1-skeleton graph of a shellable 3-dimensional 
simplicial complex whose dual graph is a tree. It is not Hamiltonian. 
It is an example of an Apollonian network. A $4$-dimensional version of the Goldner-Harary 
graph is seen to the right. We first
gluing two 4-simplices together along a $3$-
dimensional simplex, then stellate each of the eight $3$-dimensional 
boundary faces. The unit spheres in this graph are either Goldner-Harary graphs or 
tetrahedra. This graph is not Hamiltonian. It is not even a generalized $4$-graph.  
}
\end{figure}

\begin{figure}
\scalebox{0.12}{\includegraphics{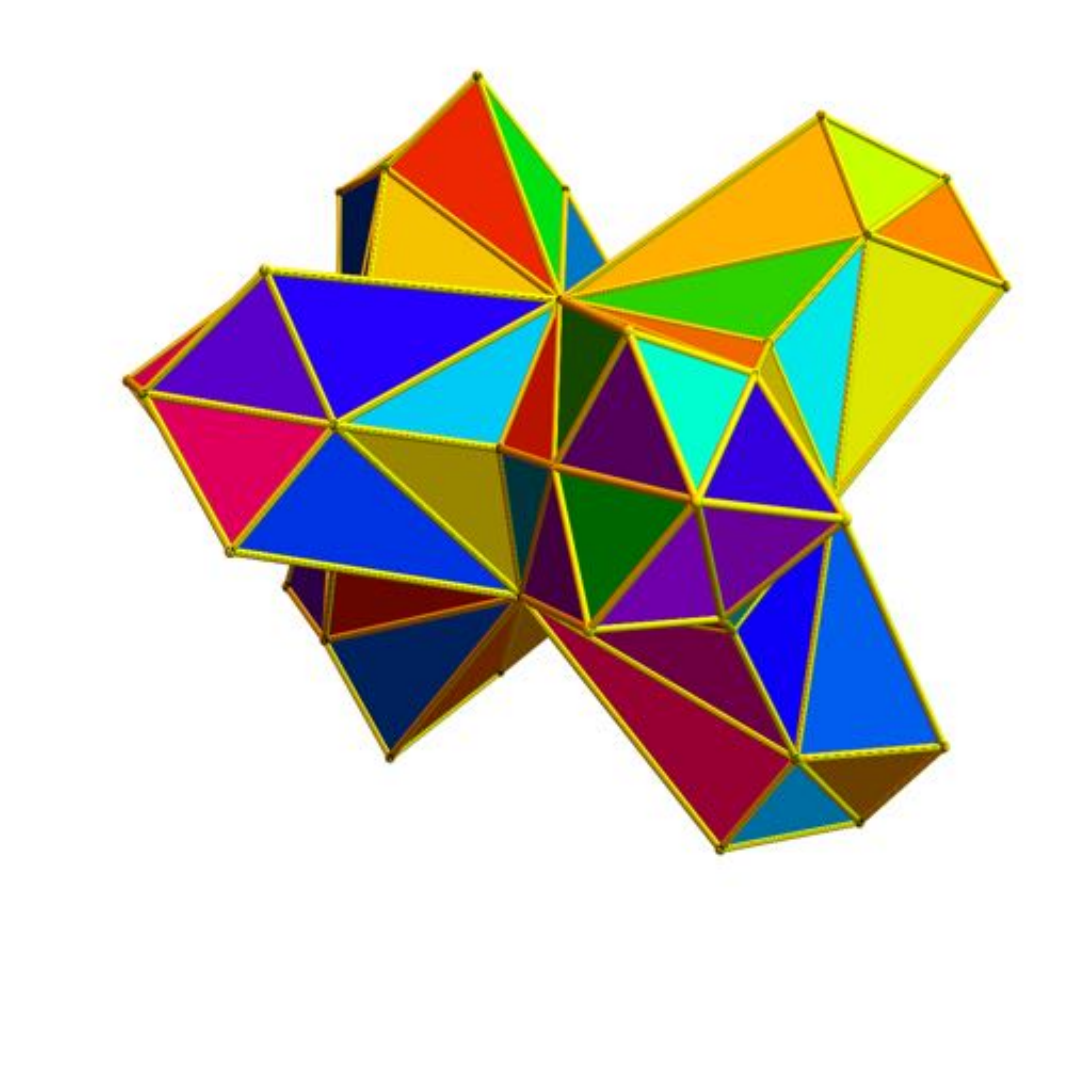}}
\scalebox{0.12}{\includegraphics{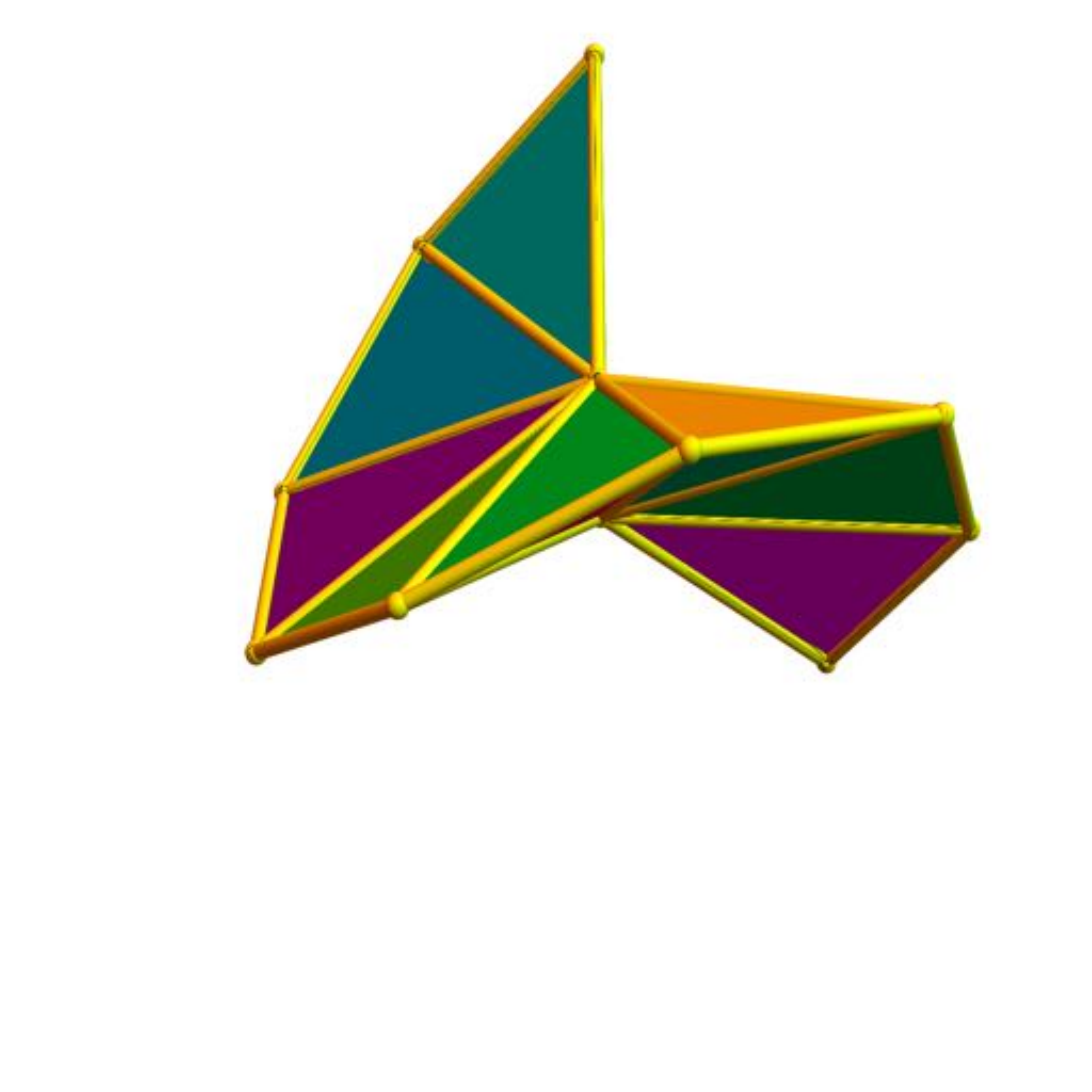}}
\caption{
\label{Goldner-Harary}
The Barycentric refinement of the Goldner-Harary graph $G$ is still not Hamiltonian. 
It looks more like a 3-ball, but there are still unit spheres $G$ which are not
balls, nor spheres, nor simplices. To the right, we see one of the
unit spheres. It  is the cyclic polytop $C_3(8)$ with the central hinge removed. 
}
\end{figure}

\begin{figure}
\scalebox{0.12}{\includegraphics{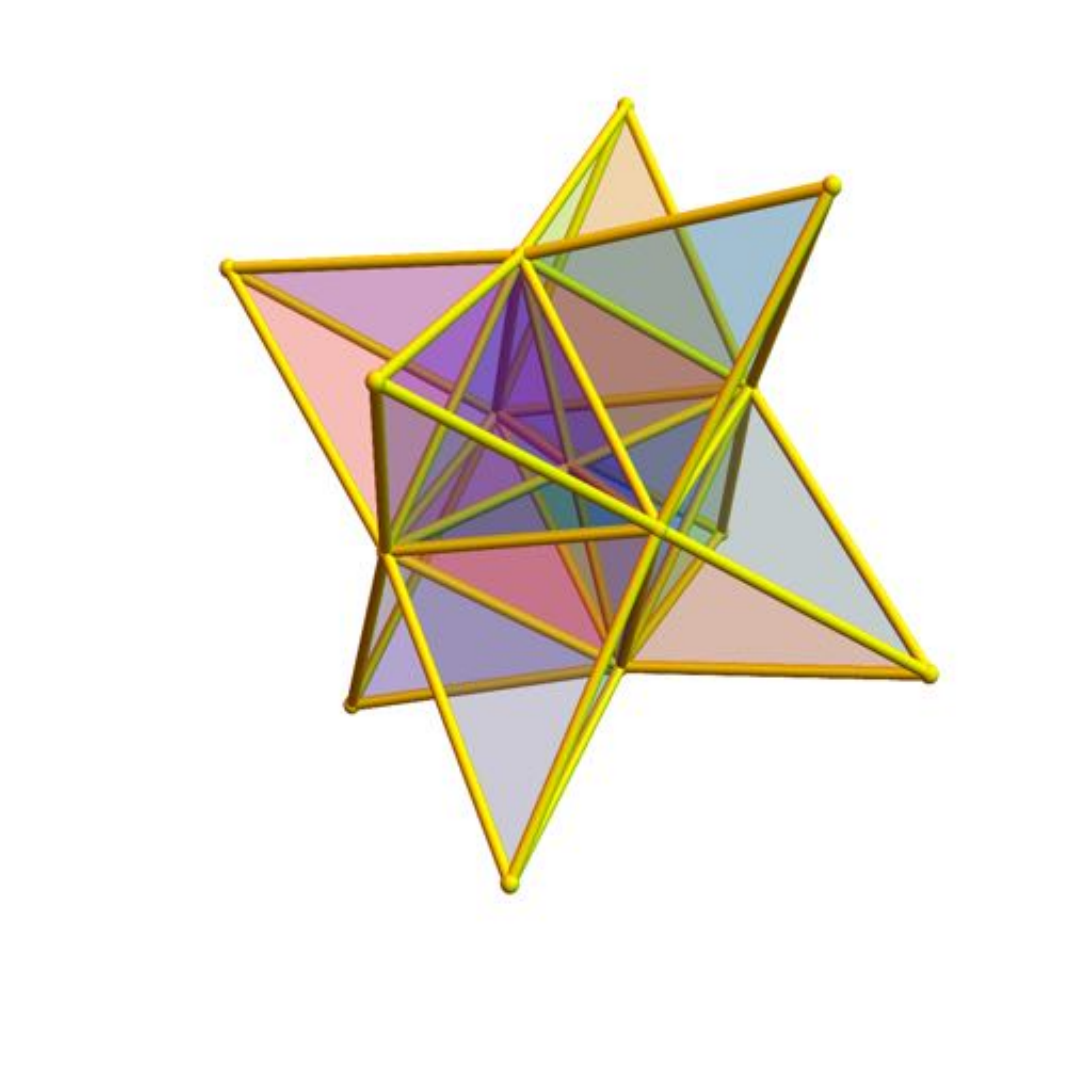}}
\scalebox{0.12}{\includegraphics{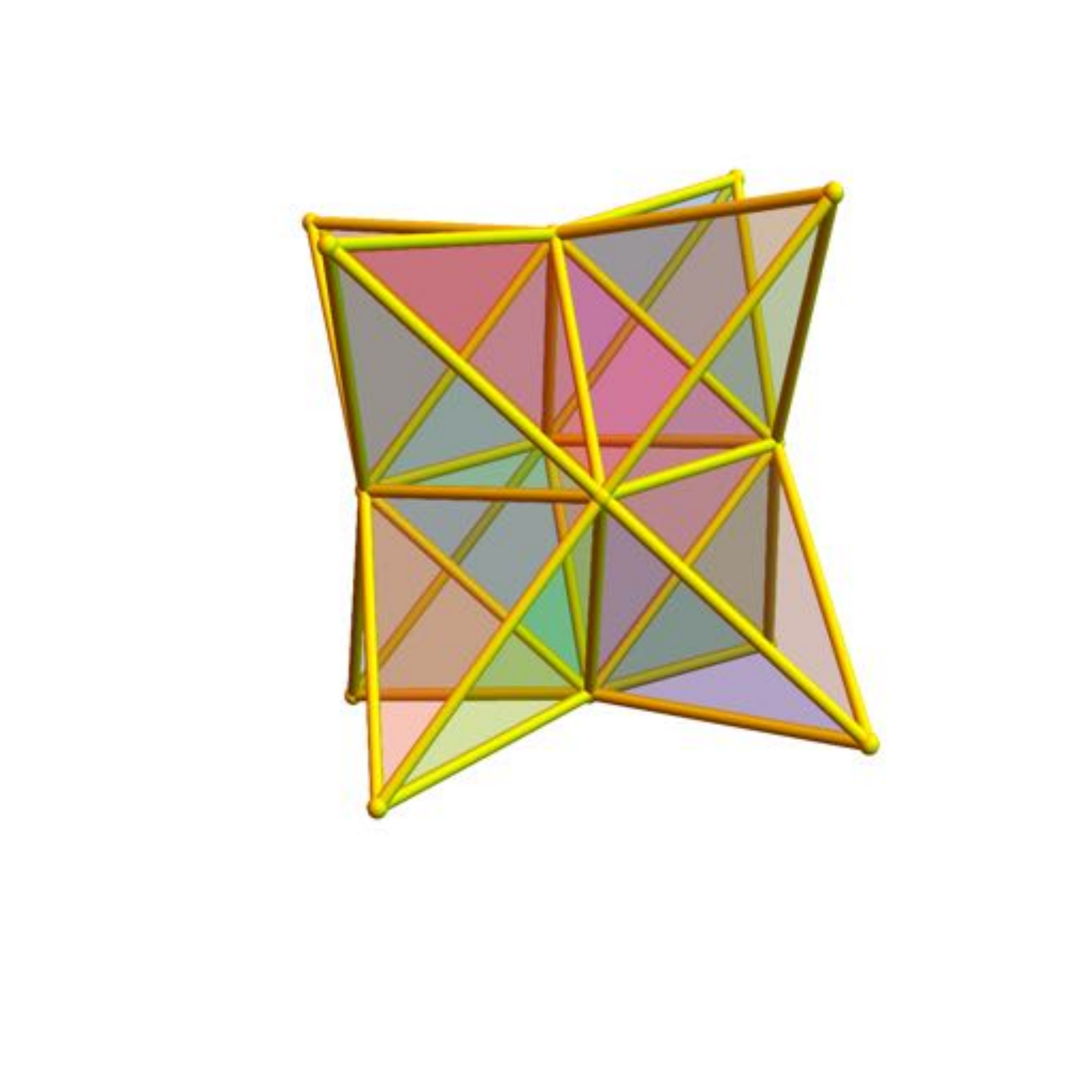}}
\caption{
\label{Goldner-Harary}
The stellated, filled octahedron, 
which is only 3-connected. It is not Hamiltonian and the Hamiltonian 
property fails for similar reasons than for the Goldner-Harary graph. It is not
a generalized $3$-graph because the inner unit sphere (the octahedron) has no edge
in common with the boundary. Also the local Euclidean property fails both in the
filled an non-filled case. This is the deciding factor against the Hamiltonian property
here as removing only one spike still does not make it Hamiltonian 
even so the inner vertex is now accessible from the boundary. 
}
\end{figure}

\begin{figure}
\scalebox{0.12}{\includegraphics{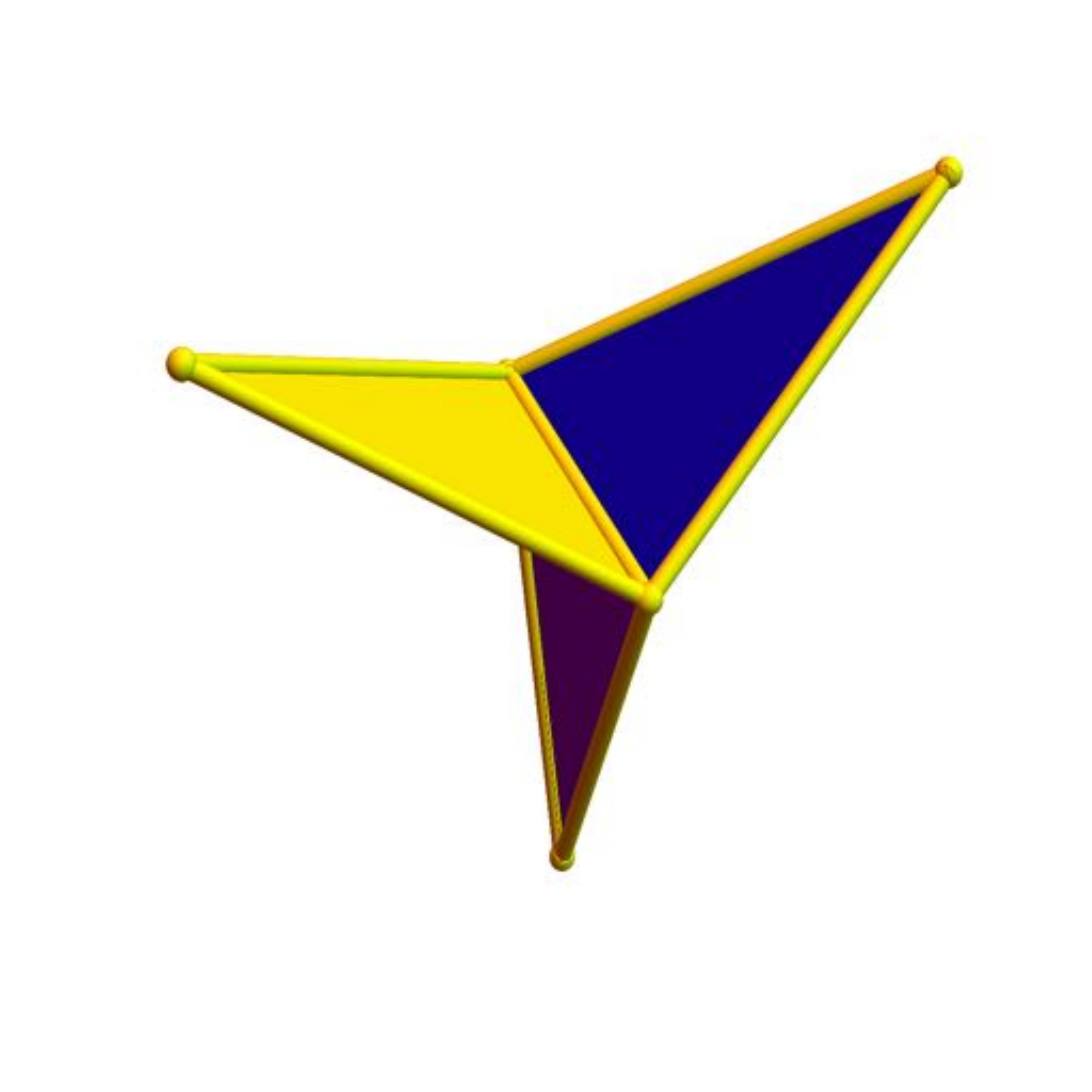}}
\scalebox{0.12}{\includegraphics{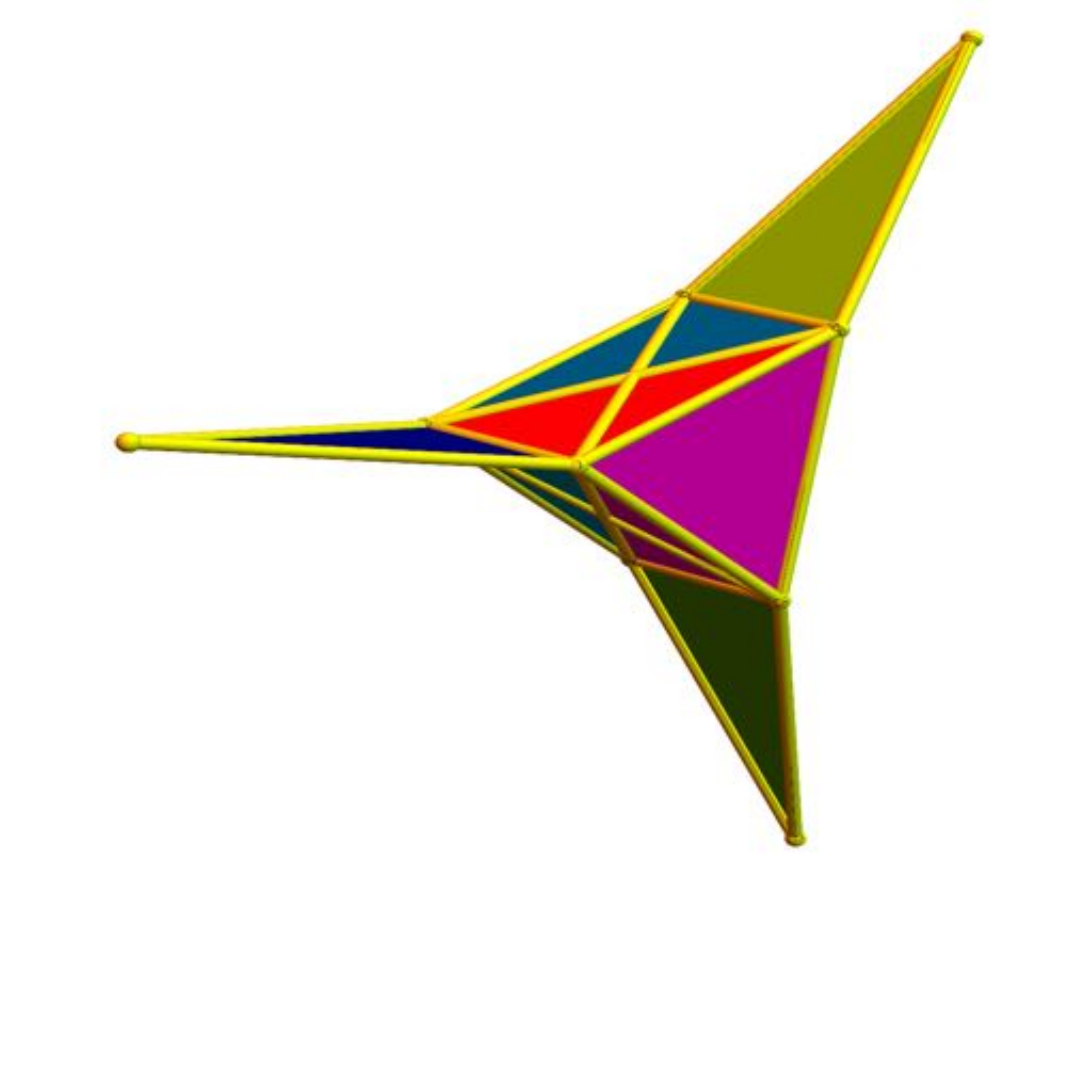}}
\caption{
\label{Blockisland}
The ``windmill graph" is a 2-dimensional shellable complex which 
is planar but not Hamiltonian. Algebraically, it is the join of $K_2$ with the 
$3$-point graph $3=1+1+1$. It consists of 3 triangles (blades) connected to
a common edge. Having non-spherical unit spheres is often a decisive factor 
against the Hamiltonian property. To the right we see an other non-Hamiltonian
graph, which is a fat windmill. 
}
\end{figure}

\begin{figure}
\scalebox{0.12}{\includegraphics{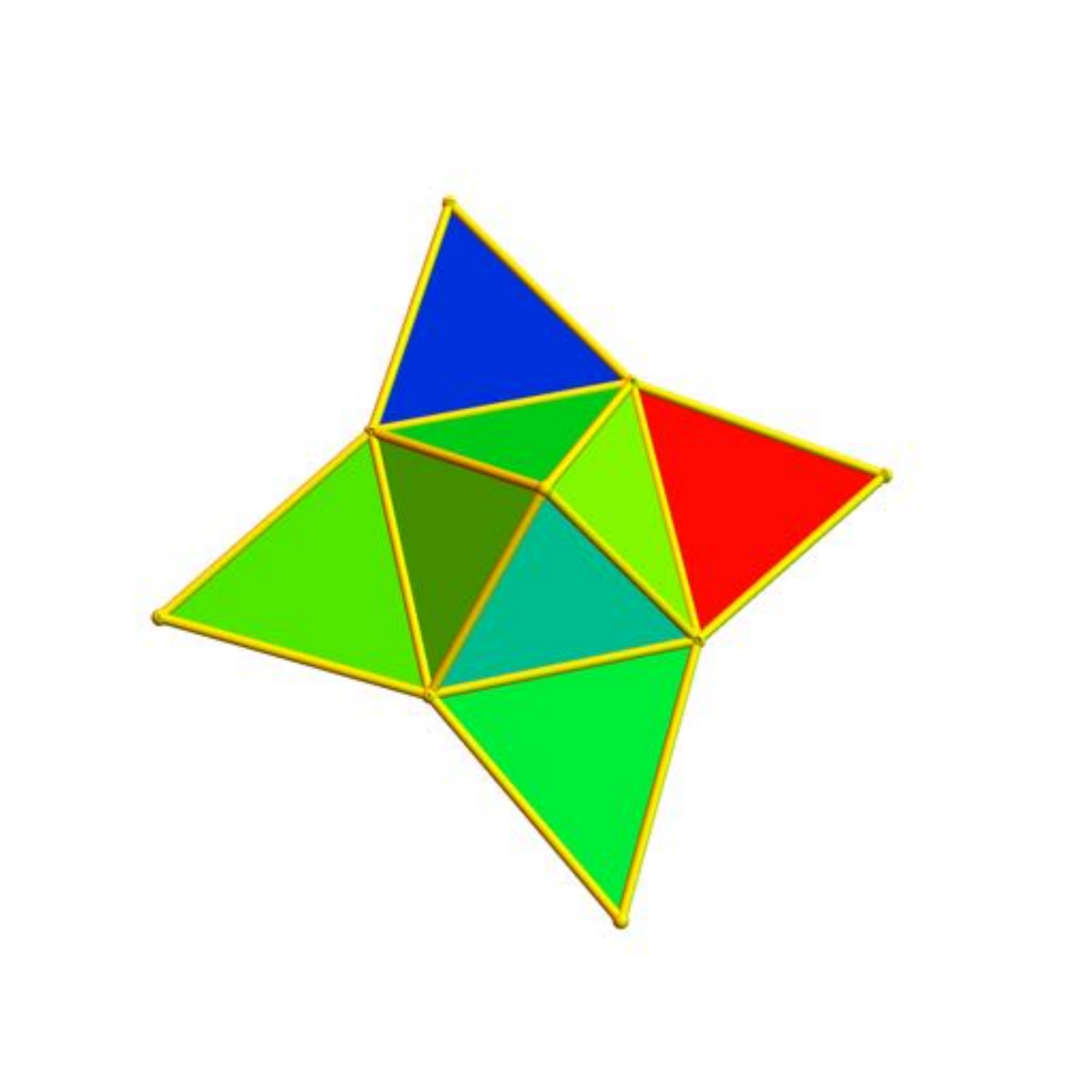}}
\scalebox{0.12}{\includegraphics{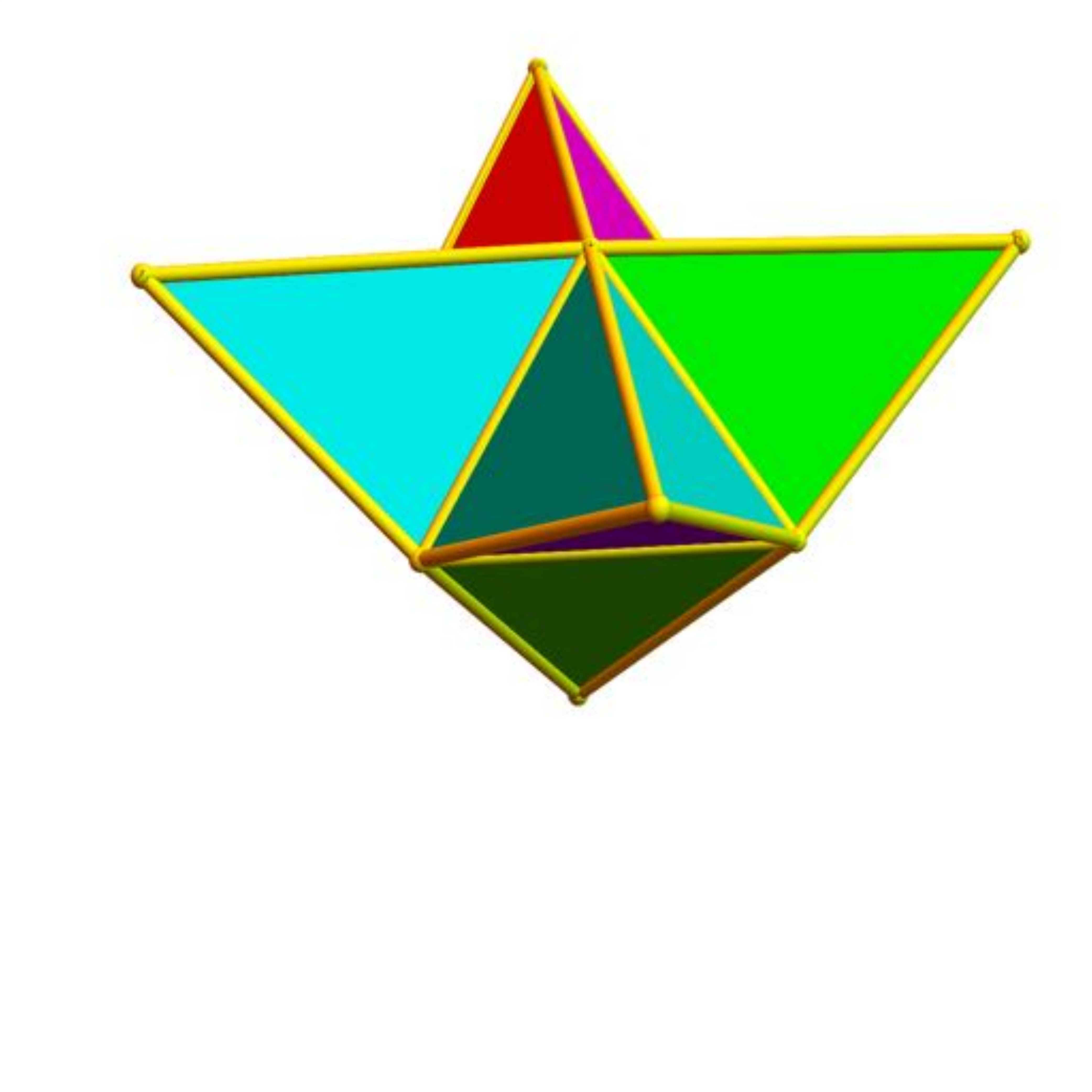}}
\caption{
\label{Stellated Square}
The stellated square to the left is a generalized 2-ball which
is not Hamiltonian. It is an example with an isolated 2-ball inside
which only touches the boundary in zero dimensional places.
When completing it to a sphere, it becomes Hamiltonian even so it is not 
a 2-sphere (there are unit spheres which are not circular graphs). 
}
\end{figure}

\begin{figure}
\scalebox{0.12}{\includegraphics{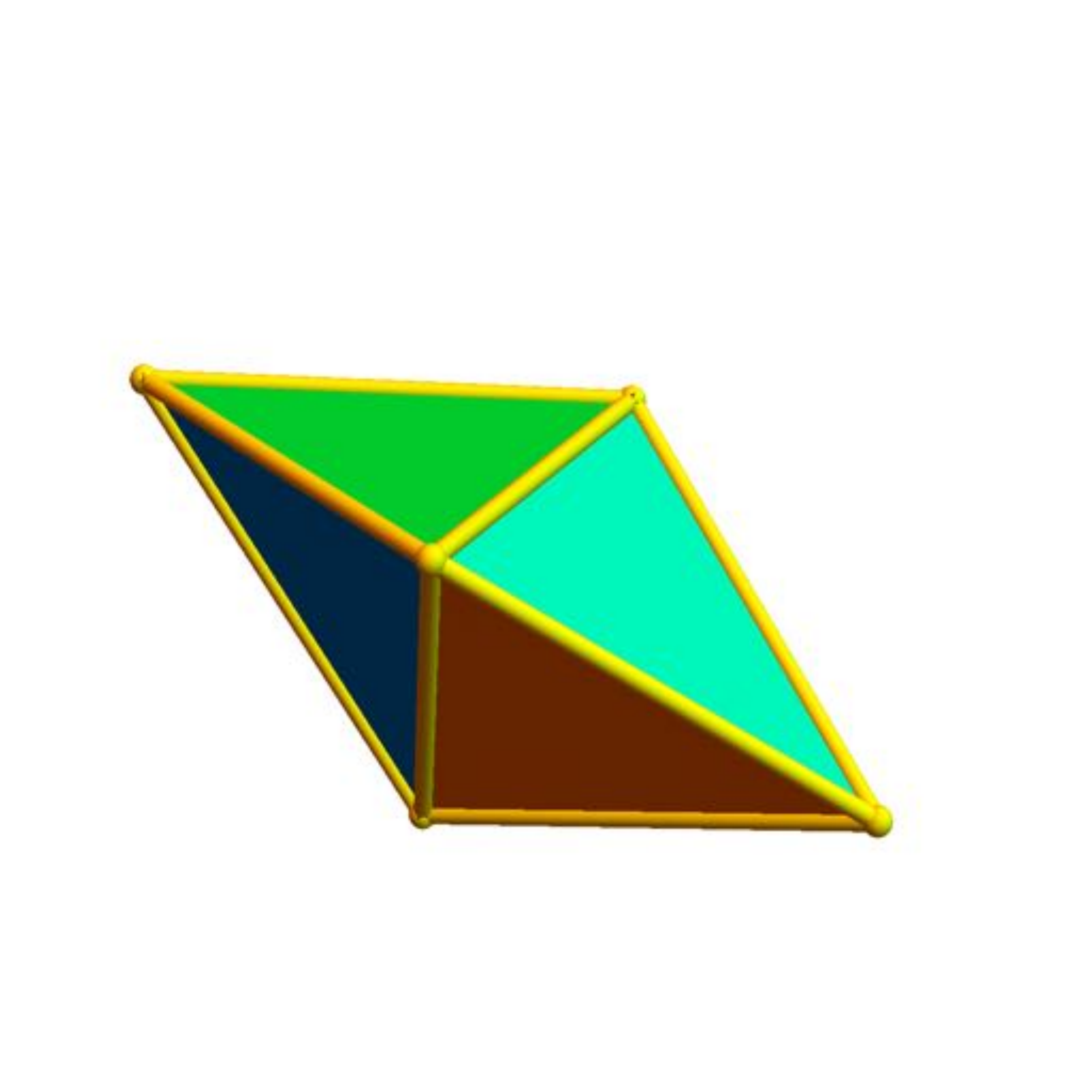}}
\scalebox{0.12}{\includegraphics{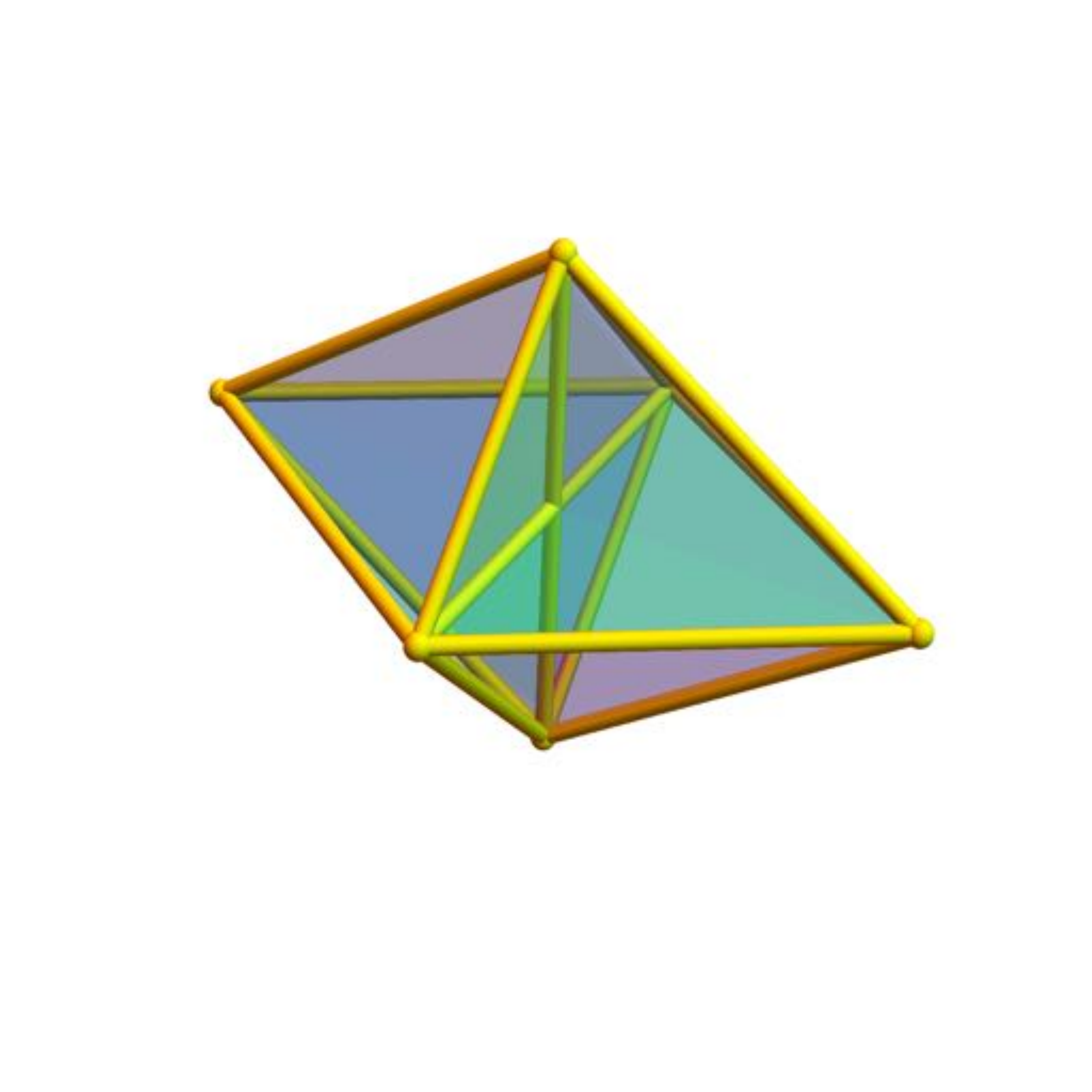}}
\caption{
\label{Avici}
The ``Avici graph" is a shellable 3-dimensional simplicial 
complex which is Hamiltonian. It is not a 3-ball. It is a generalized
$3$-graph. To the right we see the 4 dimensional Avici graph, which is a union of
two 4-simplices glued along a 3-simplex. It is a generalized $4$-graph
and also Hamiltonian. 
}
\end{figure}

\begin{figure}
\scalebox{0.15}{\includegraphics{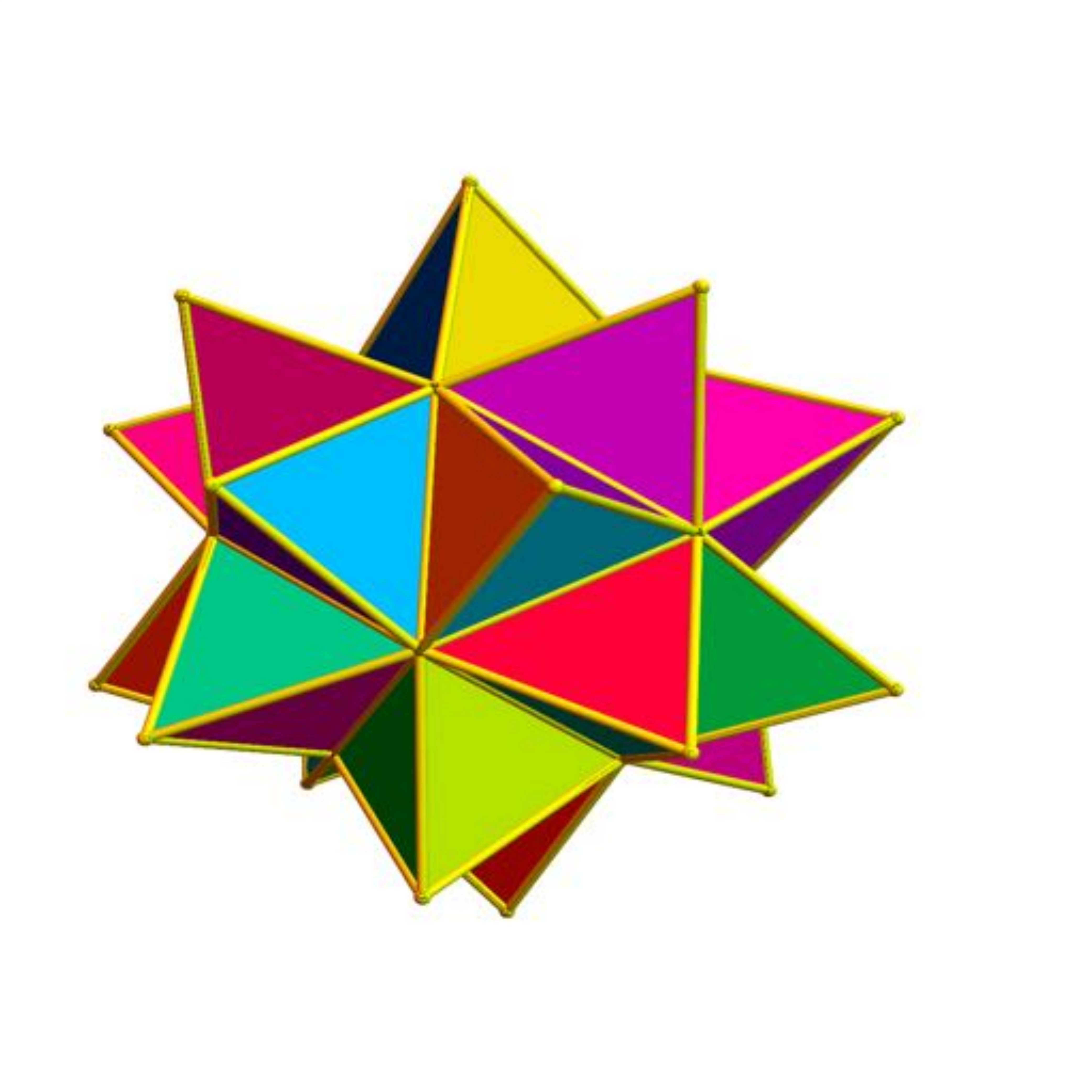}}
\caption{
\label{Keplerpoinsot2}
The stellated isosahedron is a shellable $3$-dimensional complex
which is not Hamiltonian. It is not a generalized 3-ball. 
}
\end{figure}

\begin{figure}
\scalebox{0.14}{\includegraphics{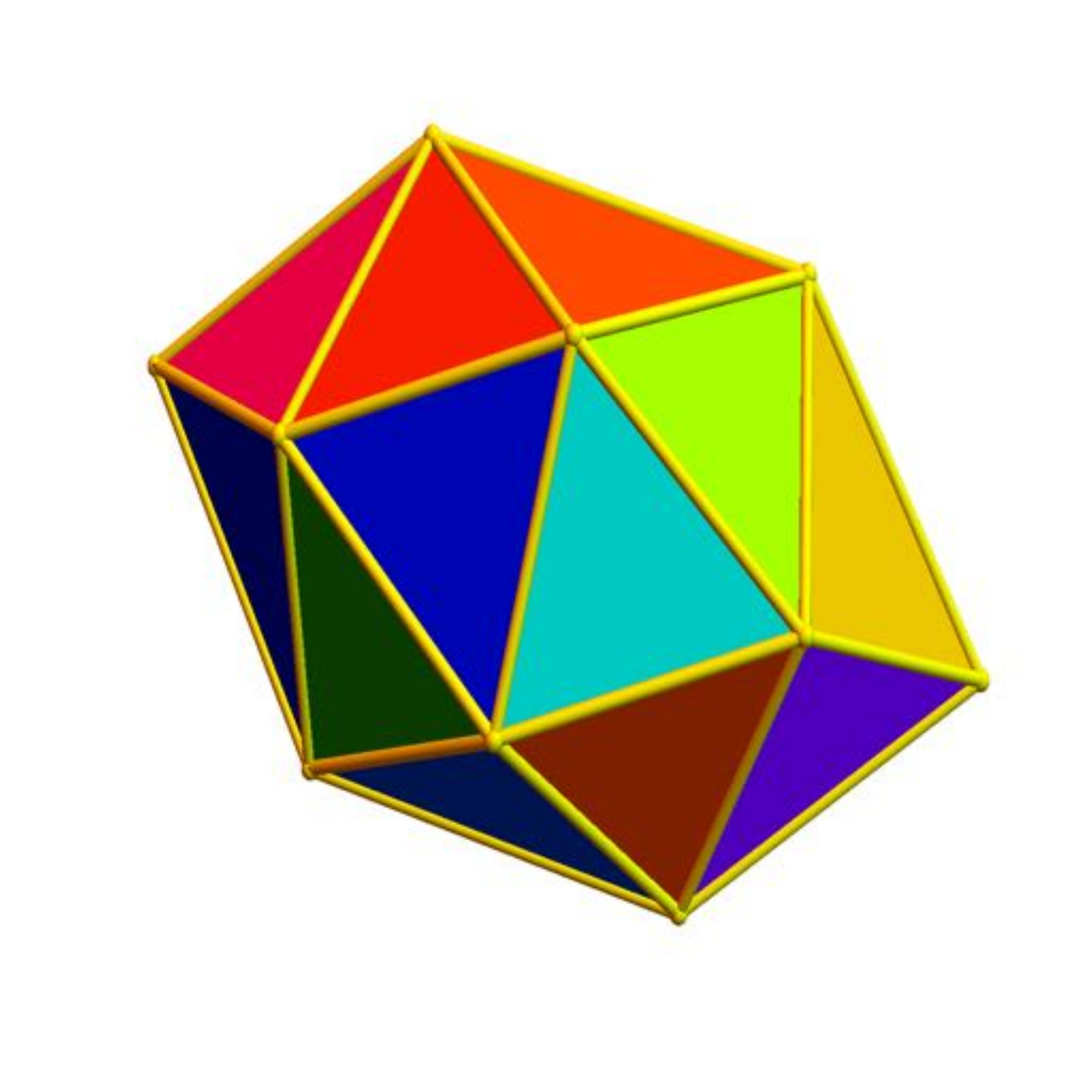}}
\scalebox{0.14}{\includegraphics{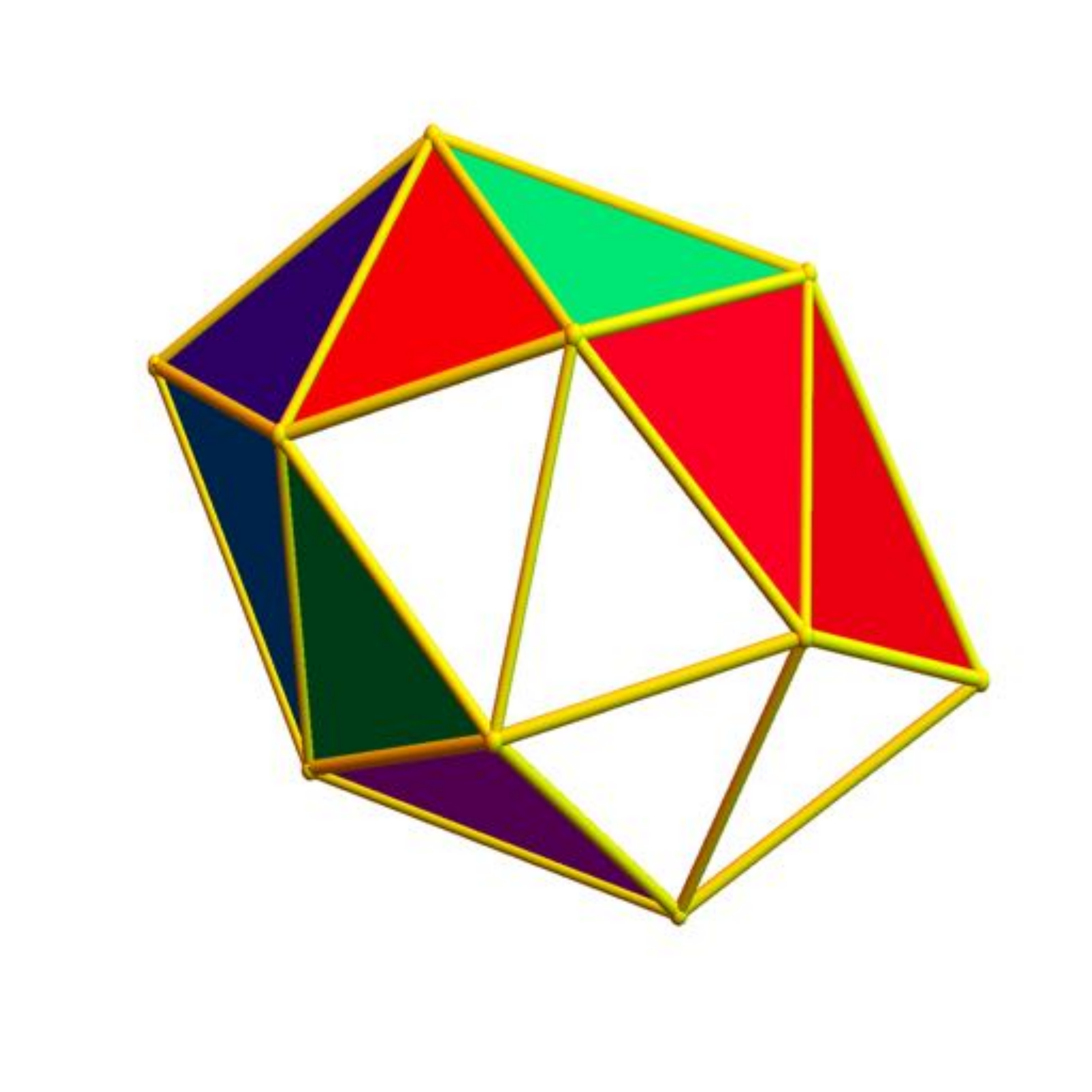}}
\caption{
\label{Birkhoffdiamond}
The Birkhoff diamond is a generalized 2-ball. 
We can remove 4 edges on the boundary to expose the 4 interior points to the boundary
producing a stellated diamond. We could also have used a function $f$ and 
join the Hamiltonian cycles for $f<0$ (which is a wheel graph) and $f>0$ which is a
diamond graph. Alternatively, we can remove four other edges (seen to the right).
A Hamiltonian cycle having edges in all triangles is obtained
by going around the boundary of the remaining triangles. 
}
\end{figure}

\begin{figure}
\scalebox{0.12}{\includegraphics{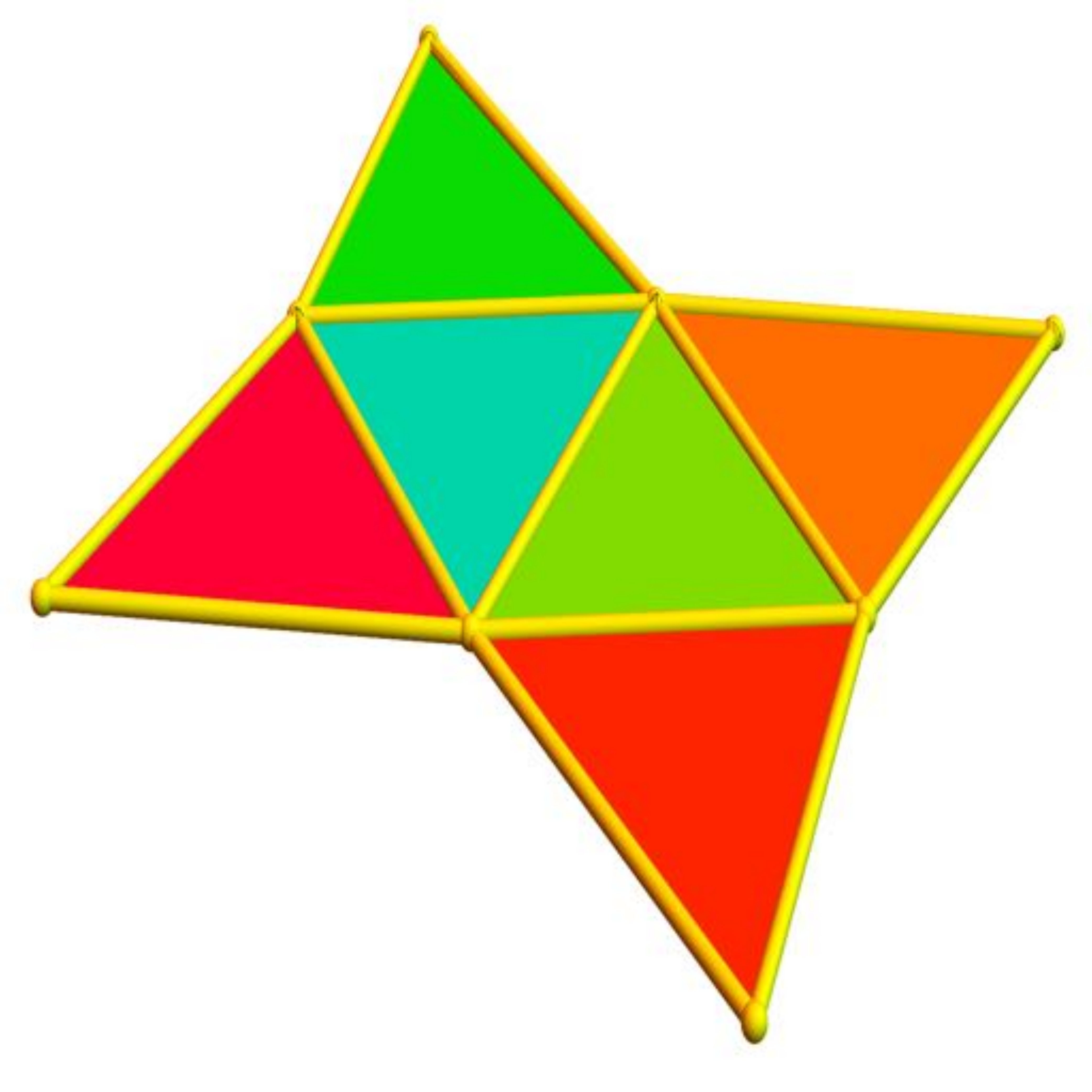}}
\scalebox{0.12}{\includegraphics{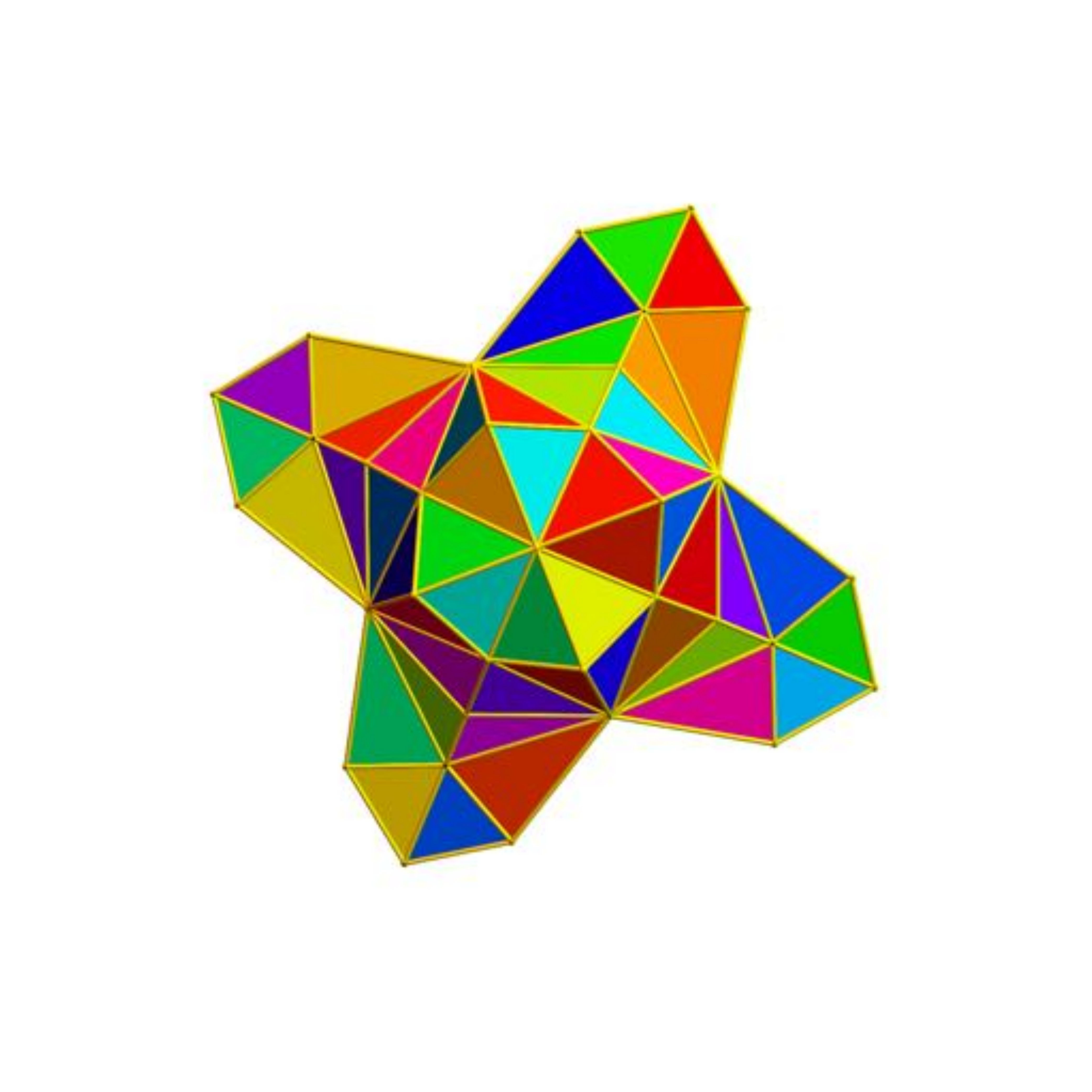}}
\caption{
\label{GoldnerHarary2}
The $2$-dimensional version of the Goldner-Harary graph is a
stellated diamond. It is Hamiltonian. 
To the right we see the Barycentric refinement $G_1$ of the stellated square $G$. 
While $G$ was not Hamiltonian, the refined $G_1$ is. The Barycentric refinement
of Goldner Harary was not Hamiltonian.
}
\end{figure}

\begin{figure}
\scalebox{0.12}{\includegraphics{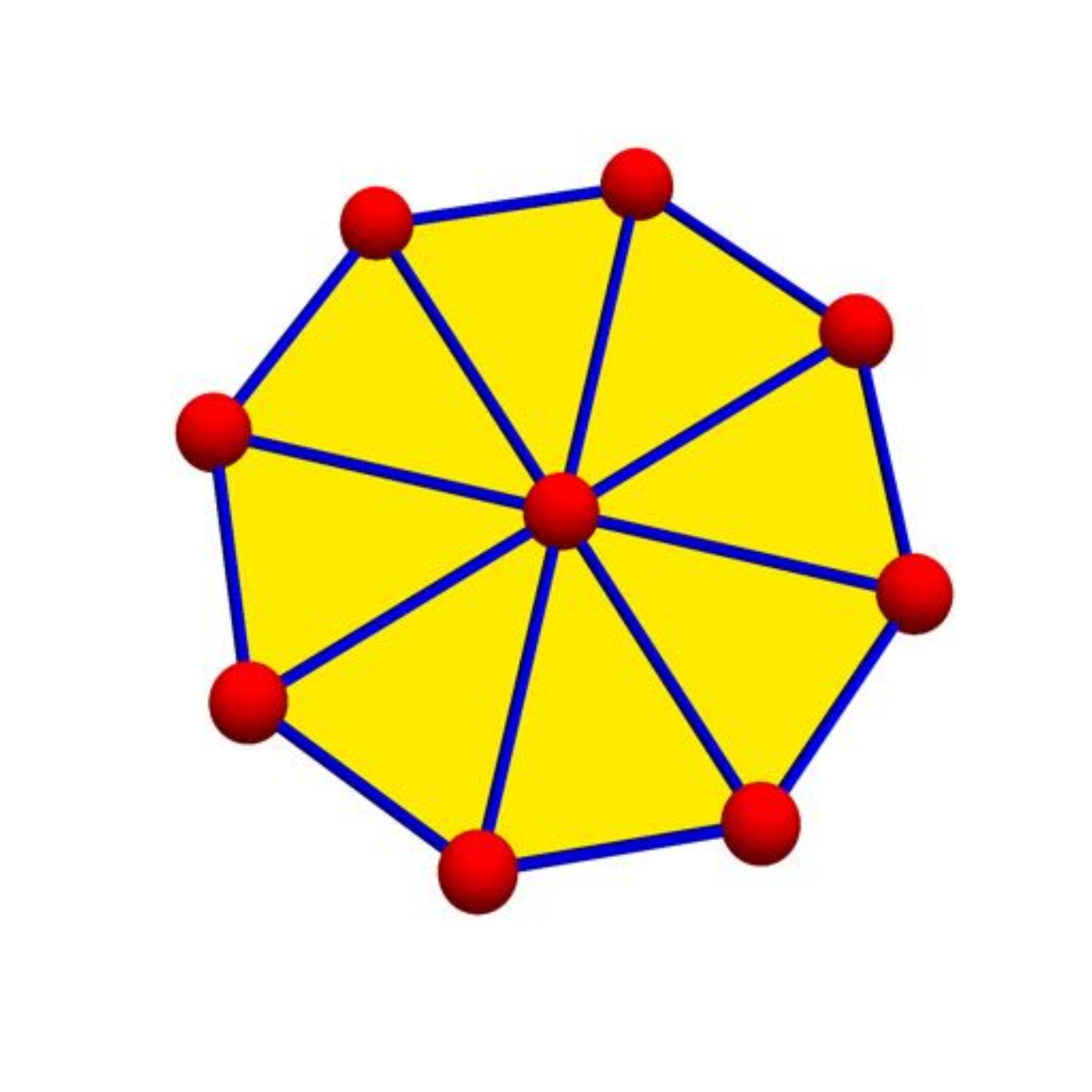}}
\scalebox{0.12}{\includegraphics{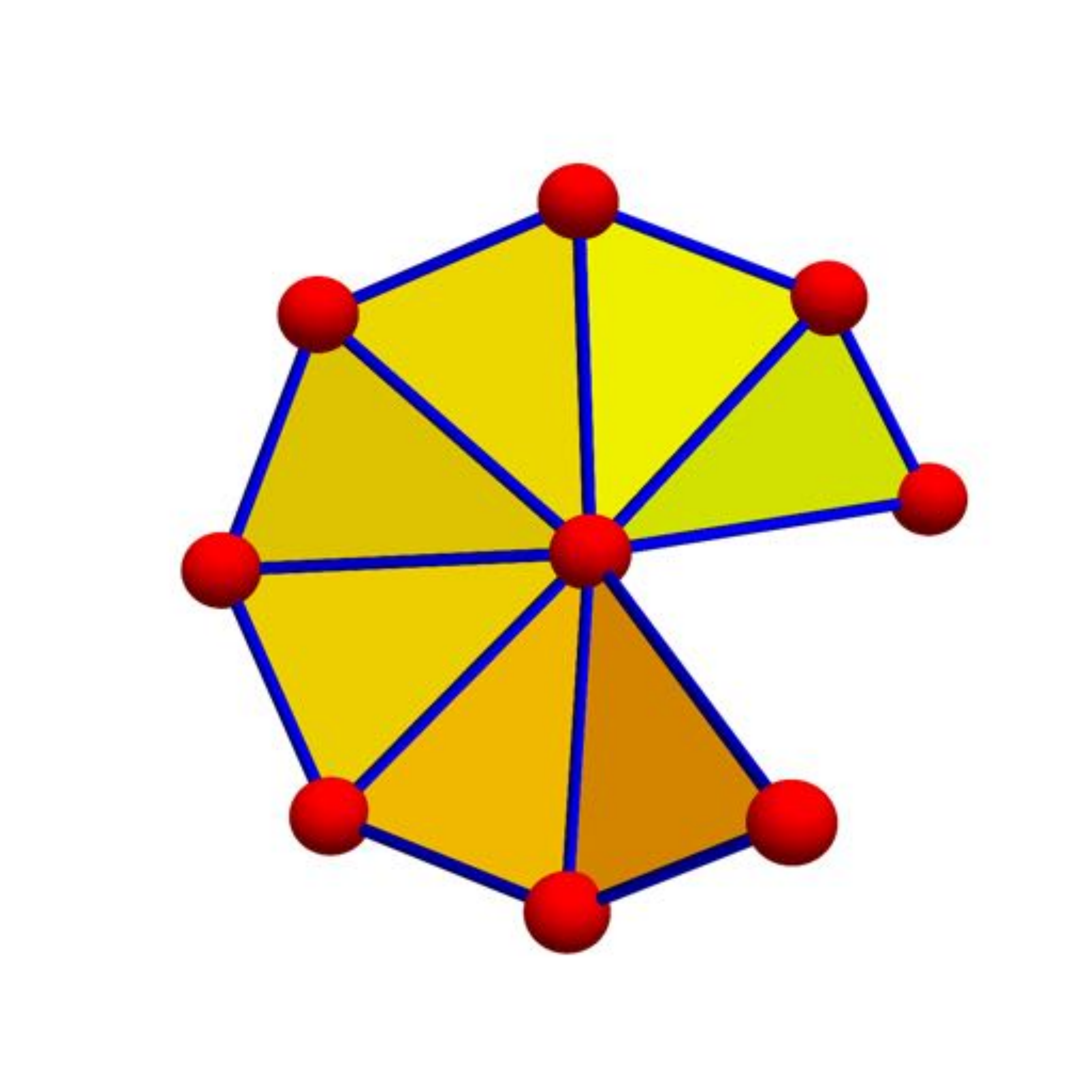}}
\caption{
\label{Packman}
A wheel graph is a 2-ball $G$. It has one interior point. Cutting away
one edge at the boundary produces a generalized 2-ball $H$. It is technically
no more a $2$-ball because two vertices have unit spheres $K_2$. After
removing the interior edges, it is a 1-sphere which is Hamiltonian. As this
boundary has the same vertices than $H$ and $G$, also $G$ is Hamiltonian. 
It is only dimension $d=2$ that we can not insist on the 
strong Hamiltonian property on the boundary. 
}
\end{figure}

\begin{figure}
\scalebox{0.12}{\includegraphics{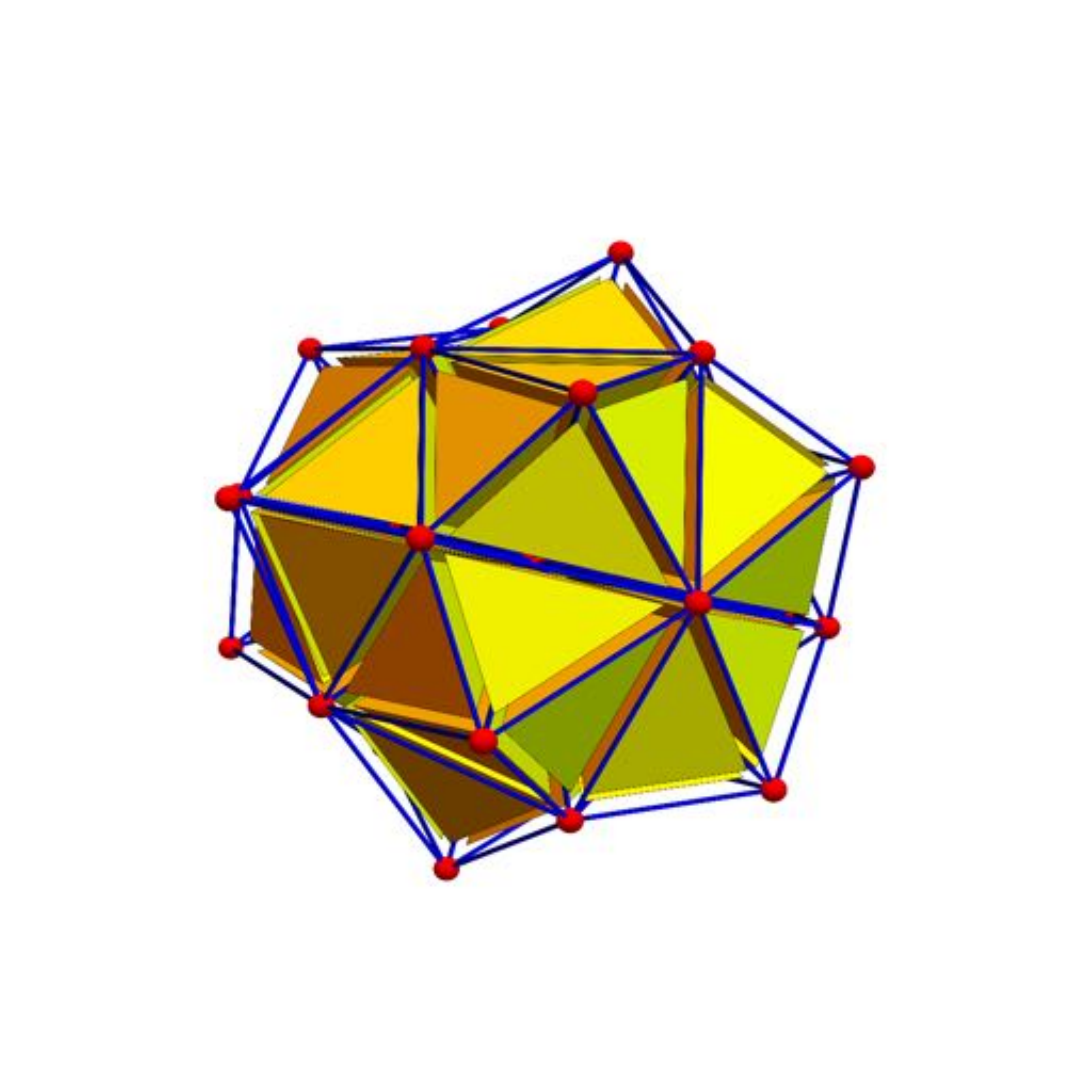}}
\scalebox{0.12}{\includegraphics{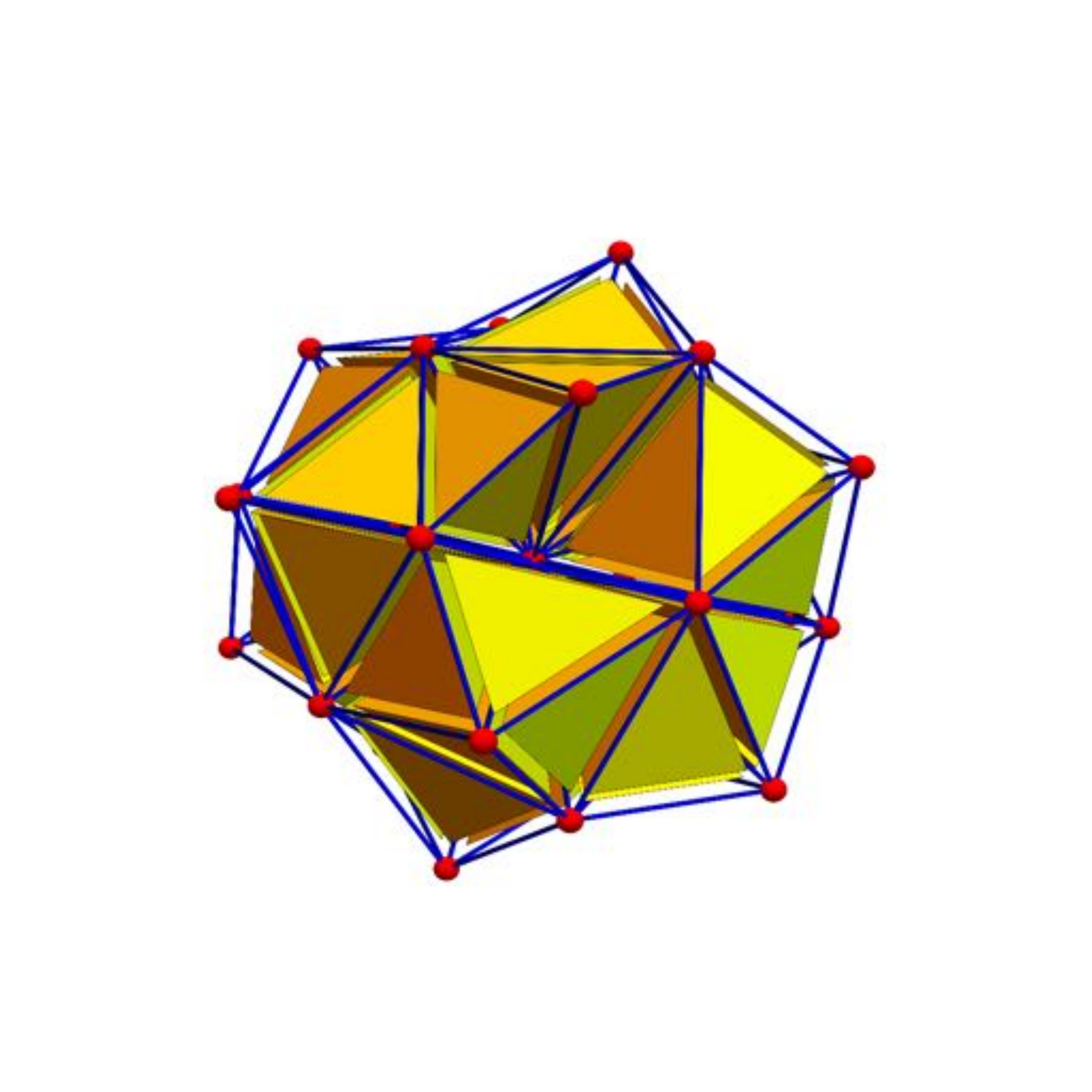}}
\caption{
\label{Cube}
The $3$-cube is a $3$-ball. 
It is the graph product $G_1 \times G_2 \times G_2 = P_2 \times P_2 \times P_2$, where
$P_2=K_2$ is the path graph of length $1$ (or complete $1$-dimensional graph). 
Its vertices are all triples $(x_1,x_2,x_3)$ where $x_i$ are simplices in $G_i$ and two are 
connected if one is contained in the other. To the right we see the generalized $3$-ball $H$ without
interior. A prismatic hole (Diamond graph) was drilled out at the boundary 
to expose the interior point to the boundary. After discarding
the interior edges we have a 3-ball without interior. 
It is Hamiltonian because it actually has become a 2-sphere. 
Because $H$ and $G$ have the same vertices, the Hamiltonian path in $H$ is
also a Hamiltonian path in $G$. 
}
\end{figure}

\begin{figure}
\scalebox{0.12}{\includegraphics{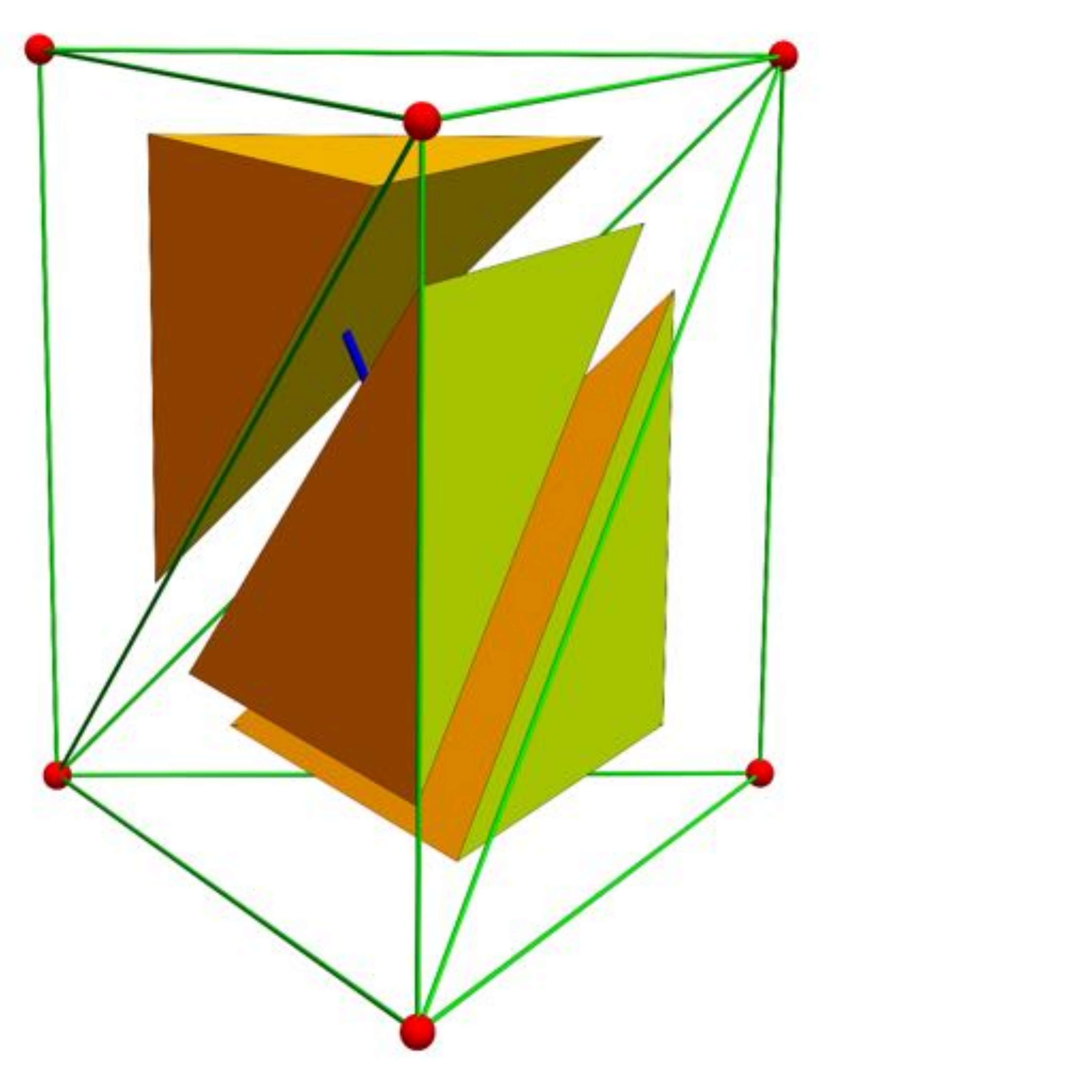}}
\scalebox{0.12}{\includegraphics{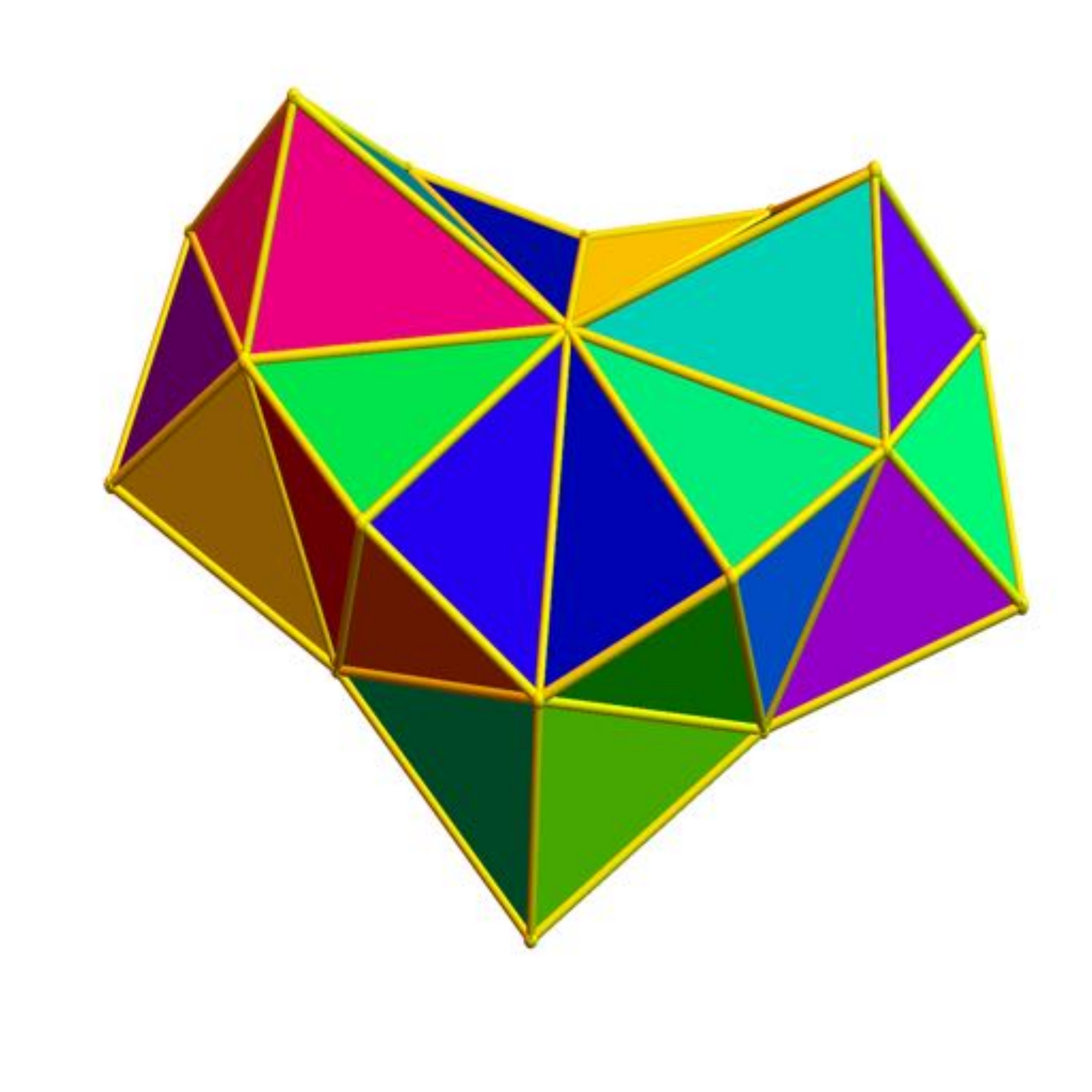}}
\caption{
\label{Prism}
The $3$-dimensional prism $G$. It is not a 3-ball because
some unit spheres at the boundary are triangles. It plays an important
role when briding different 3-graphs $f<0$ and $f>0$. To the right we 
see the Barycentric refinement of the boundary complex. It is a 2-sphere. 
The vertices of this graph are given by the $2$-skeleton complex of $G$
and two vertices are connected if one is contained in the other. 
}
\end{figure}

\begin{figure}
\scalebox{0.14}{\includegraphics{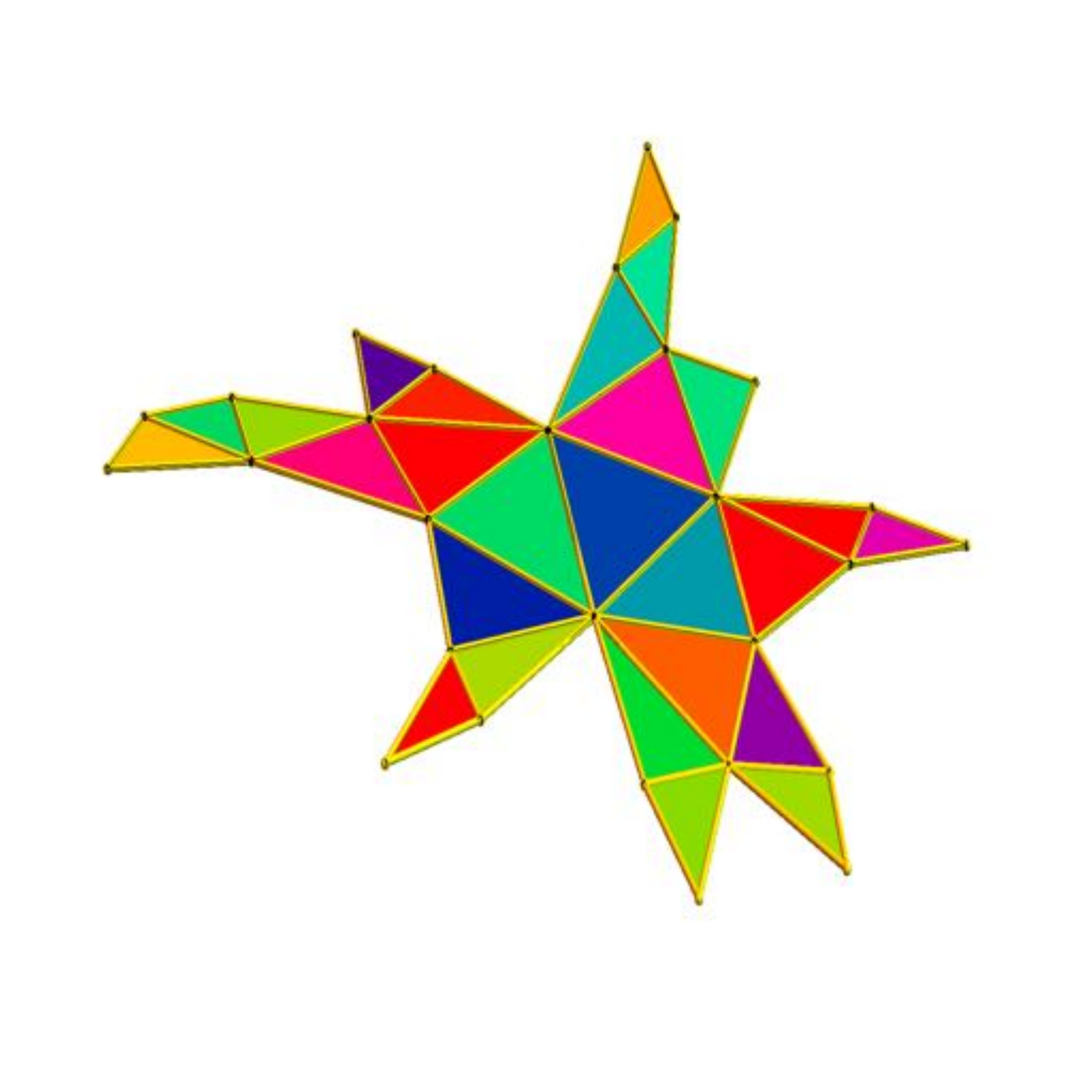}}
\scalebox{0.14}{\includegraphics{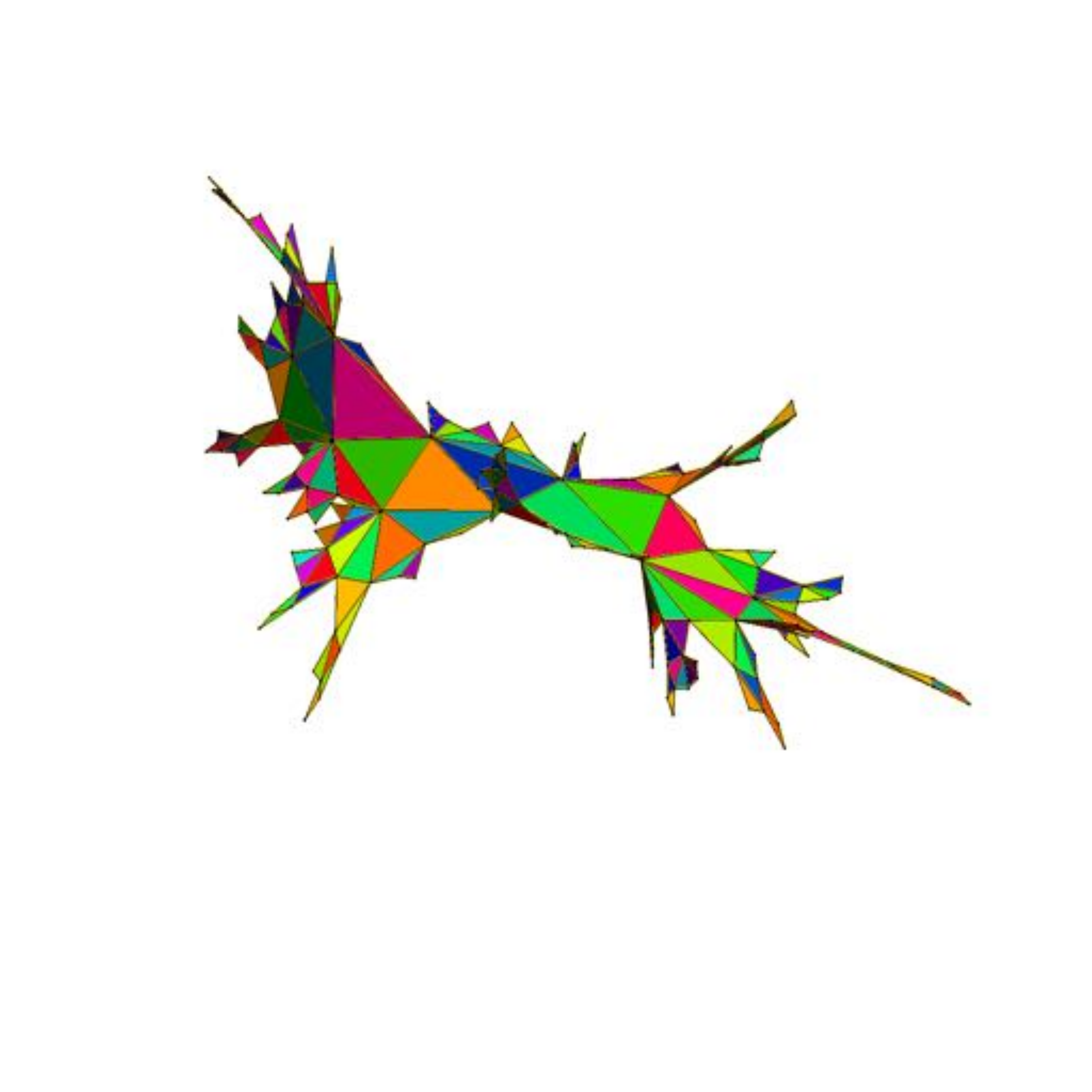}}
\caption{
\label{NoInterior}
Two shellable complexes which are not 2-balls. They are planar but not
4-connected so that Tutte does not apply. They are Hamiltonian.
The Hamiltonian path is the boundary curve. We could even add more edges 
building a 2-ball which still is Hamiltonian. 
}
\end{figure}

\begin{figure}
\scalebox{0.14}{\includegraphics{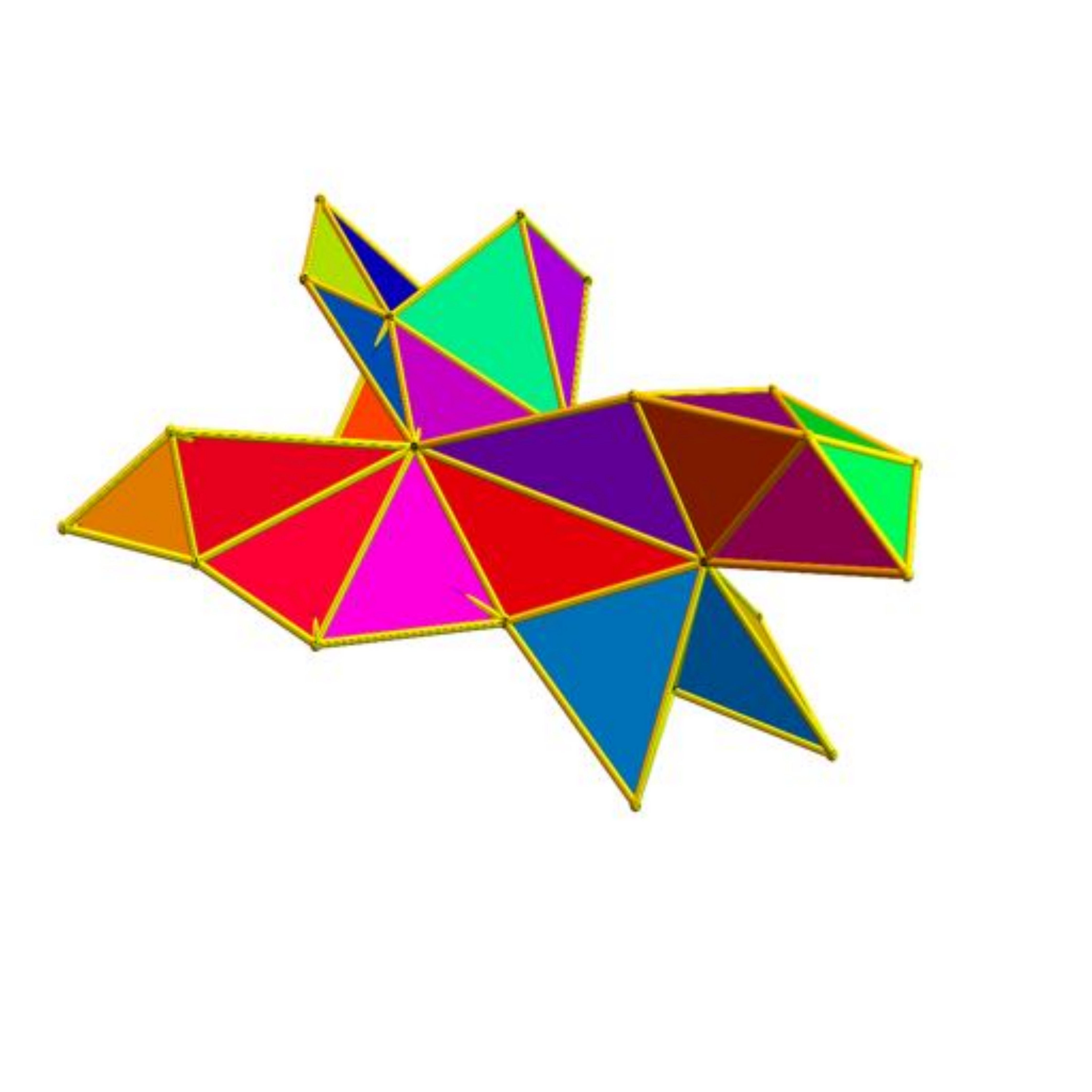}}
\scalebox{0.14}{\includegraphics{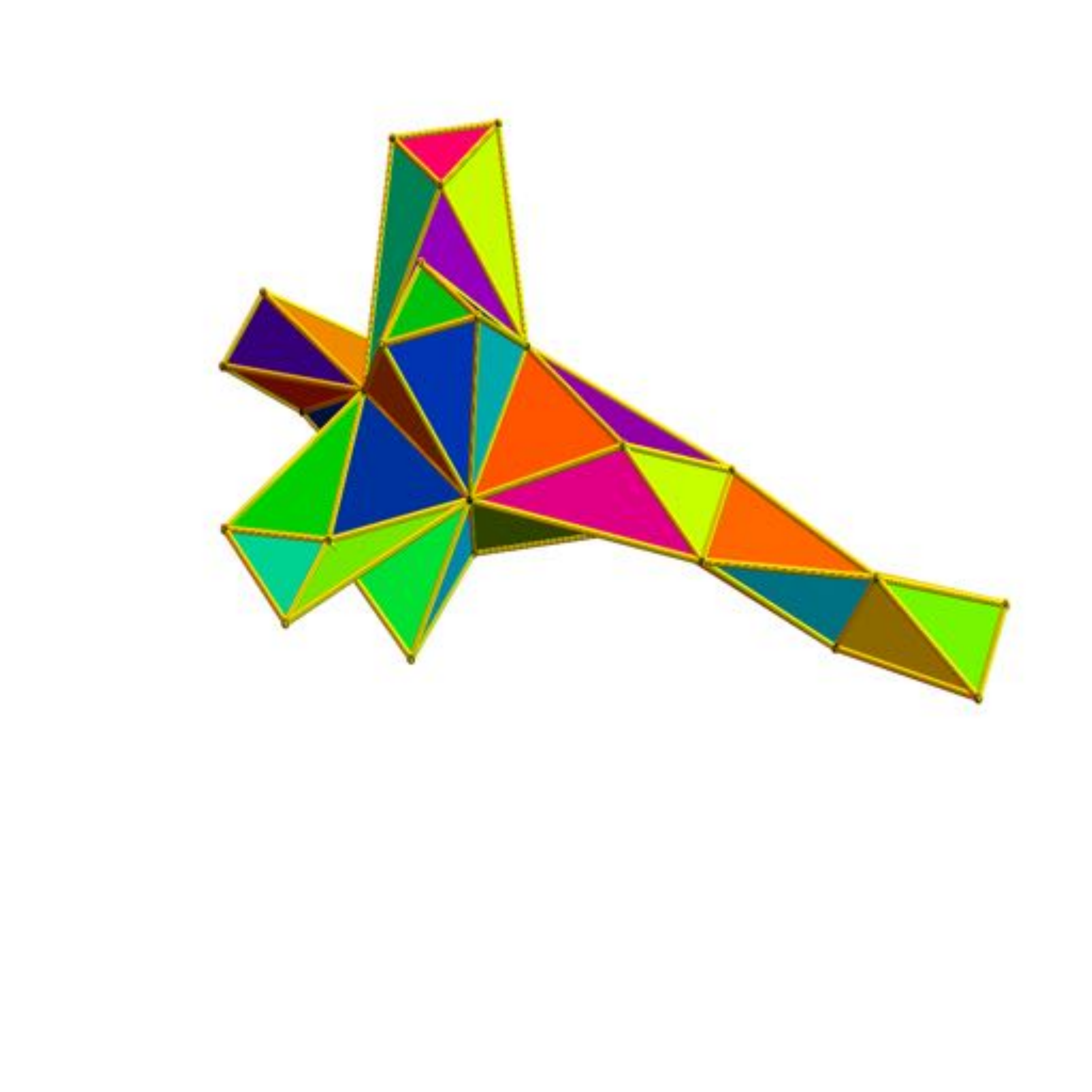}}
\caption{
\label{NoInterior}
Two shellable 3-complexes. The first one was obtained by gluing
successive Avici graphs to random places of the previous case. 
In the second case, we glue octahedra together. 
}
\end{figure}

\begin{figure}
\scalebox{0.14}{\includegraphics{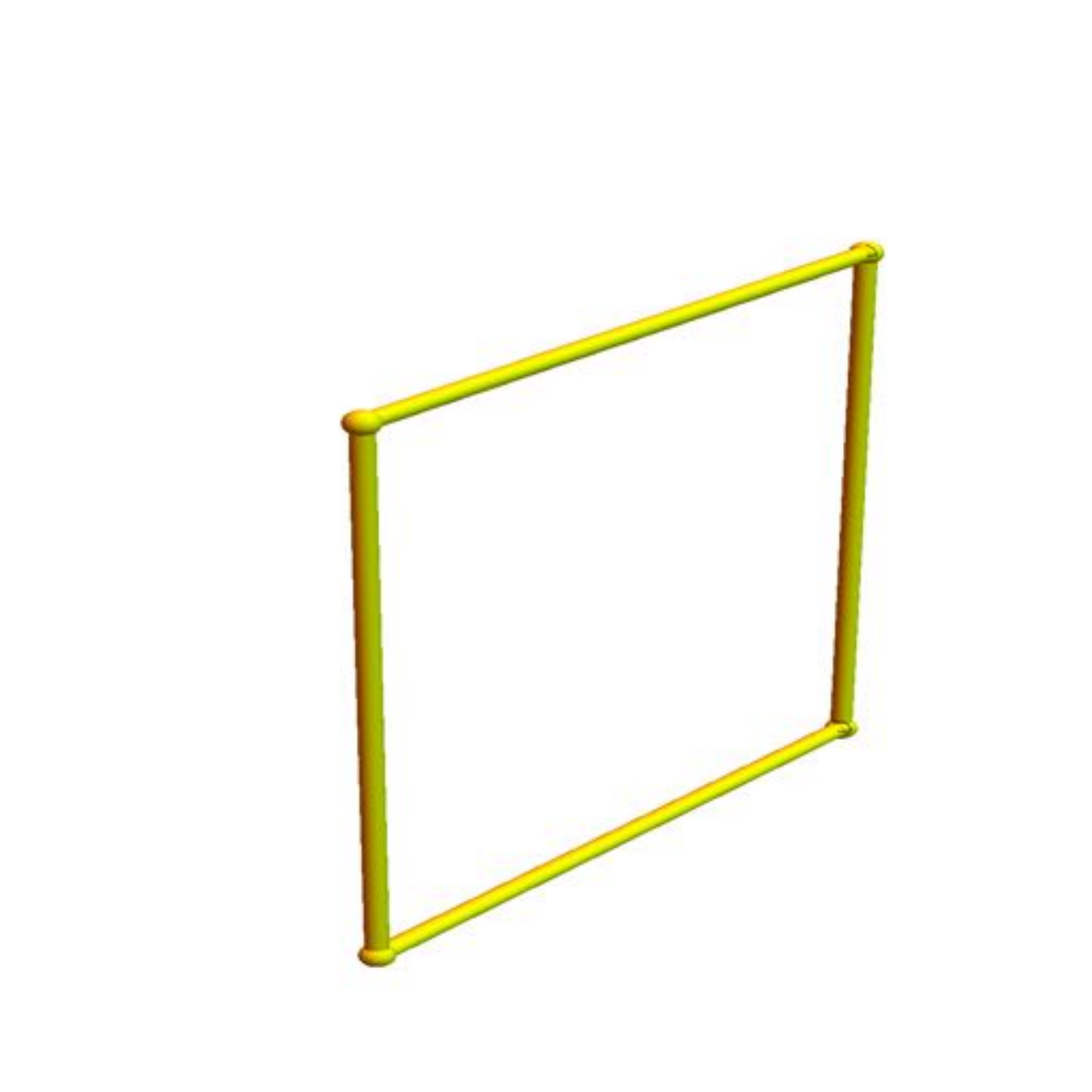}}
\scalebox{0.14}{\includegraphics{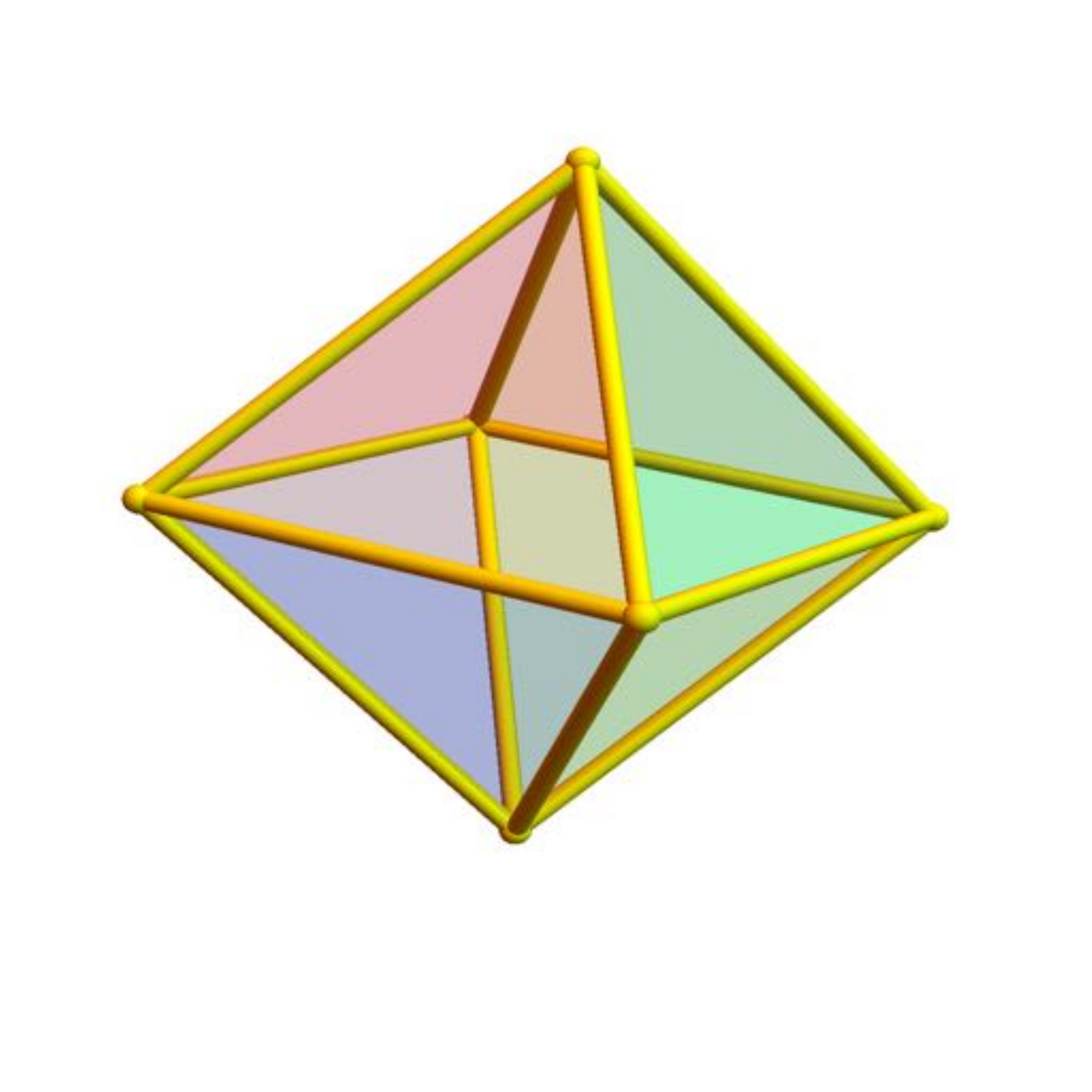}}
\scalebox{0.14}{\includegraphics{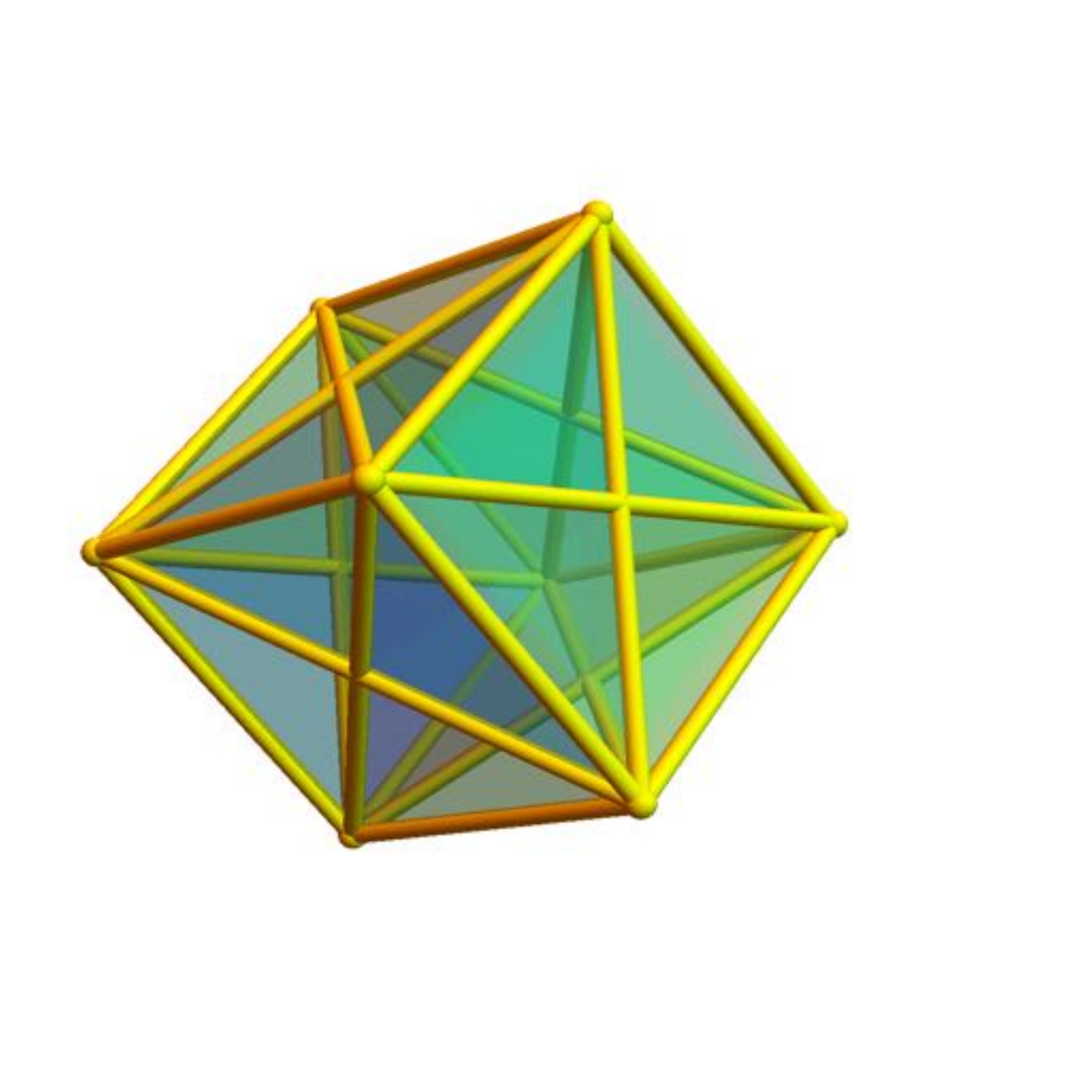}}
\scalebox{0.14}{\includegraphics{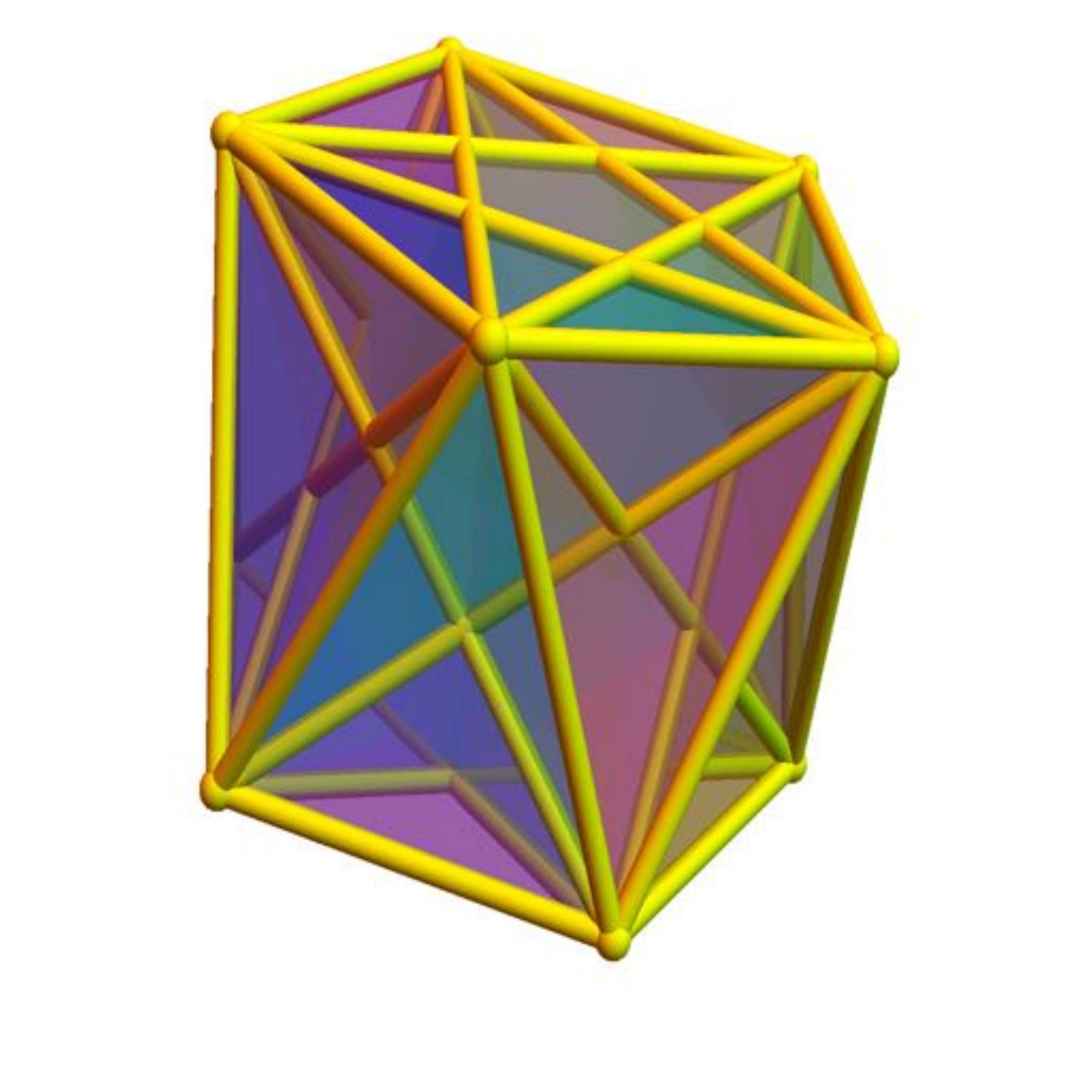}}
\caption{
\label{Spheres}
The smallest 1-sphere (cyclic graph $C_4$),
then the smallest 2-sphere (octahedron), 3-sphere and 4-sphere. 
The 3-sphere (16-cell) is the suspension of the 2-sphere, the 
4-sphere (32-chamber) is the suspension of the 3-sphere. They are all Hamiltonian
as they are all d-graphs. 
}
\end{figure}

\begin{figure}
\scalebox{0.12}{\includegraphics{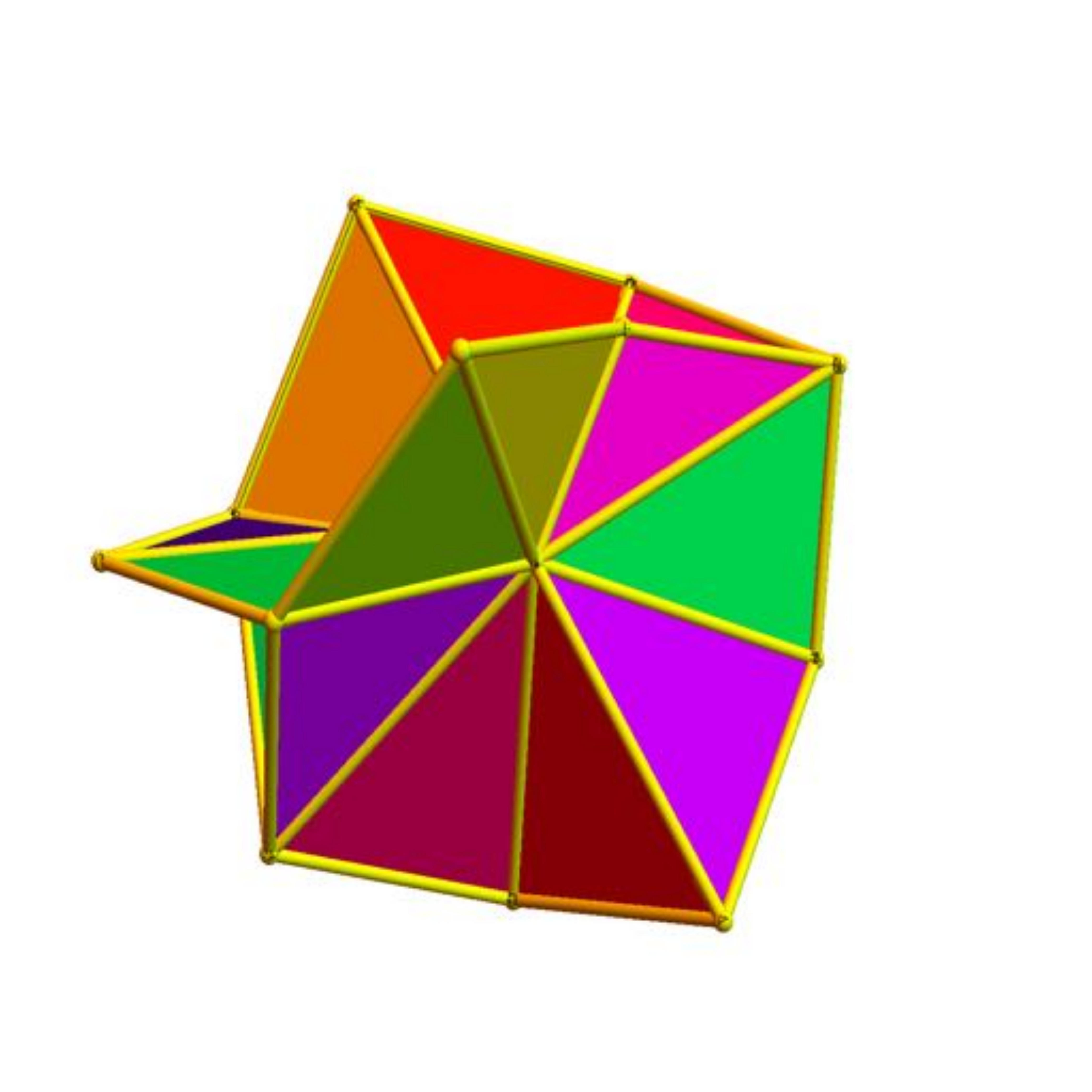}}
\scalebox{0.12}{\includegraphics{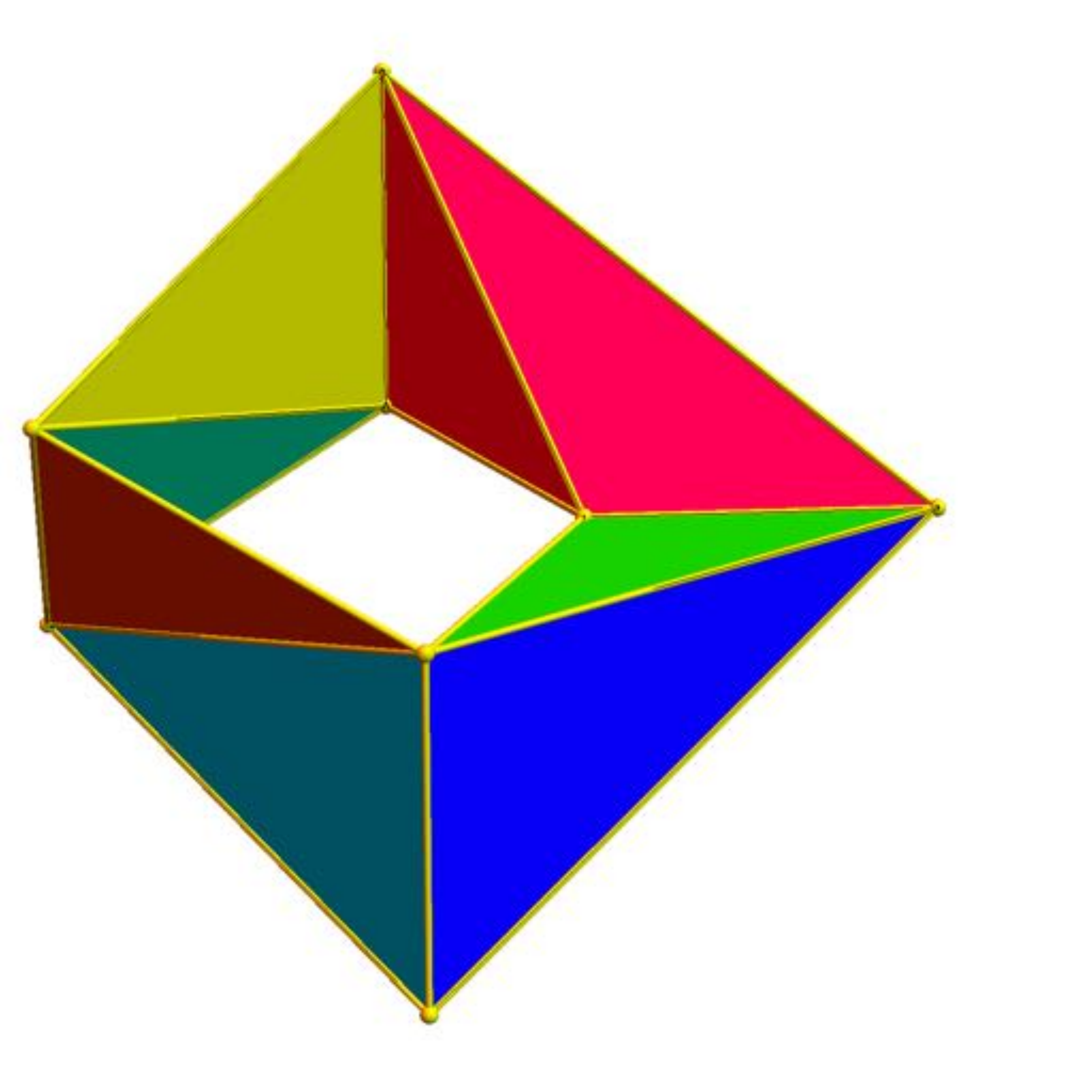}}
\caption{
\label{Duncehat}
The dunce hat is a non-shellable, non-contractible 2-complex
with $f$-vector $(17,52,36)$. It is not a 2-graph as some
unit spheres $S(x)$ have Euler characteristic $\chi(S(x))=-1$. 
It is Hamiltonian, even so the theorem proven here 
does not cover it.  To the right, we see the 
M\"obius strip, a non-shellable complex. It is a non-orientable
generalized 2-graph with boundary. It is Hamiltonian, the boundary is the 
Hamiltonian path. 
}
\end{figure}

\begin{figure}
\scalebox{0.14}{\includegraphics{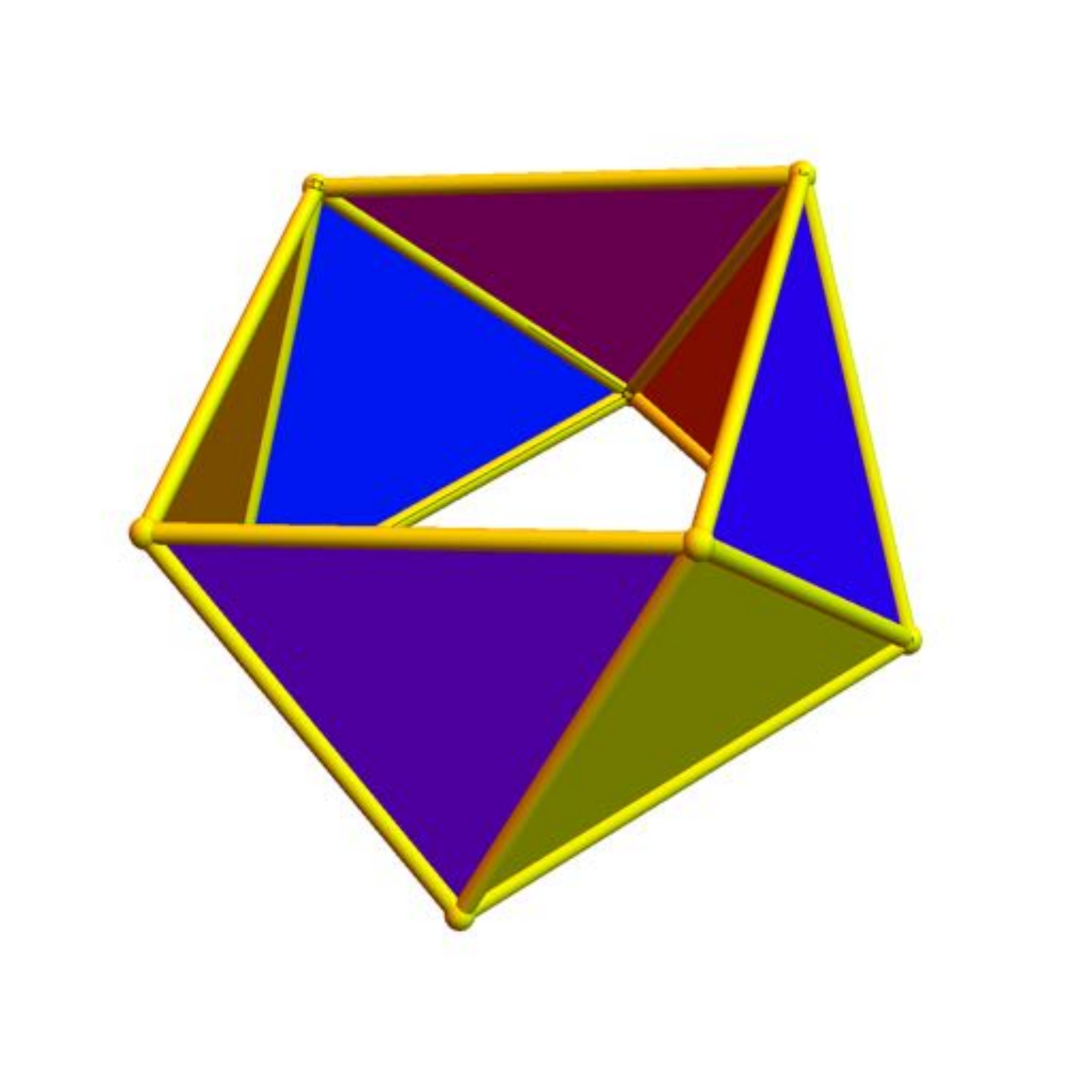}}
\scalebox{0.14}{\includegraphics{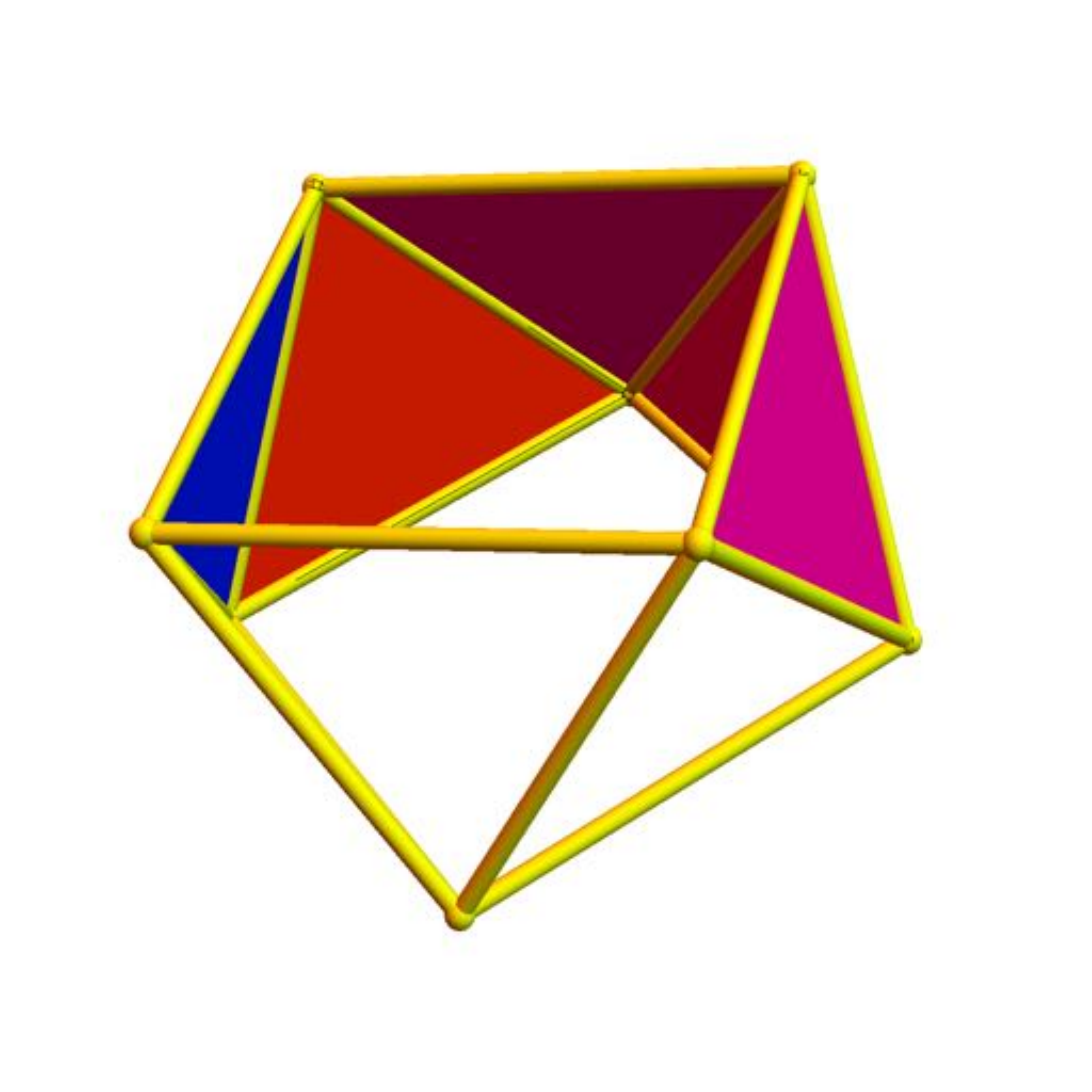}}
\caption{
\label{Cylinder}
A Hamiltonian path for a cylinder without interior point can be
constructed by building a bridge between the two boundary curves.
This is equivalent with rewiring using a prism bridge. 
}
\end{figure}

\begin{figure}
\scalebox{0.14}{\includegraphics{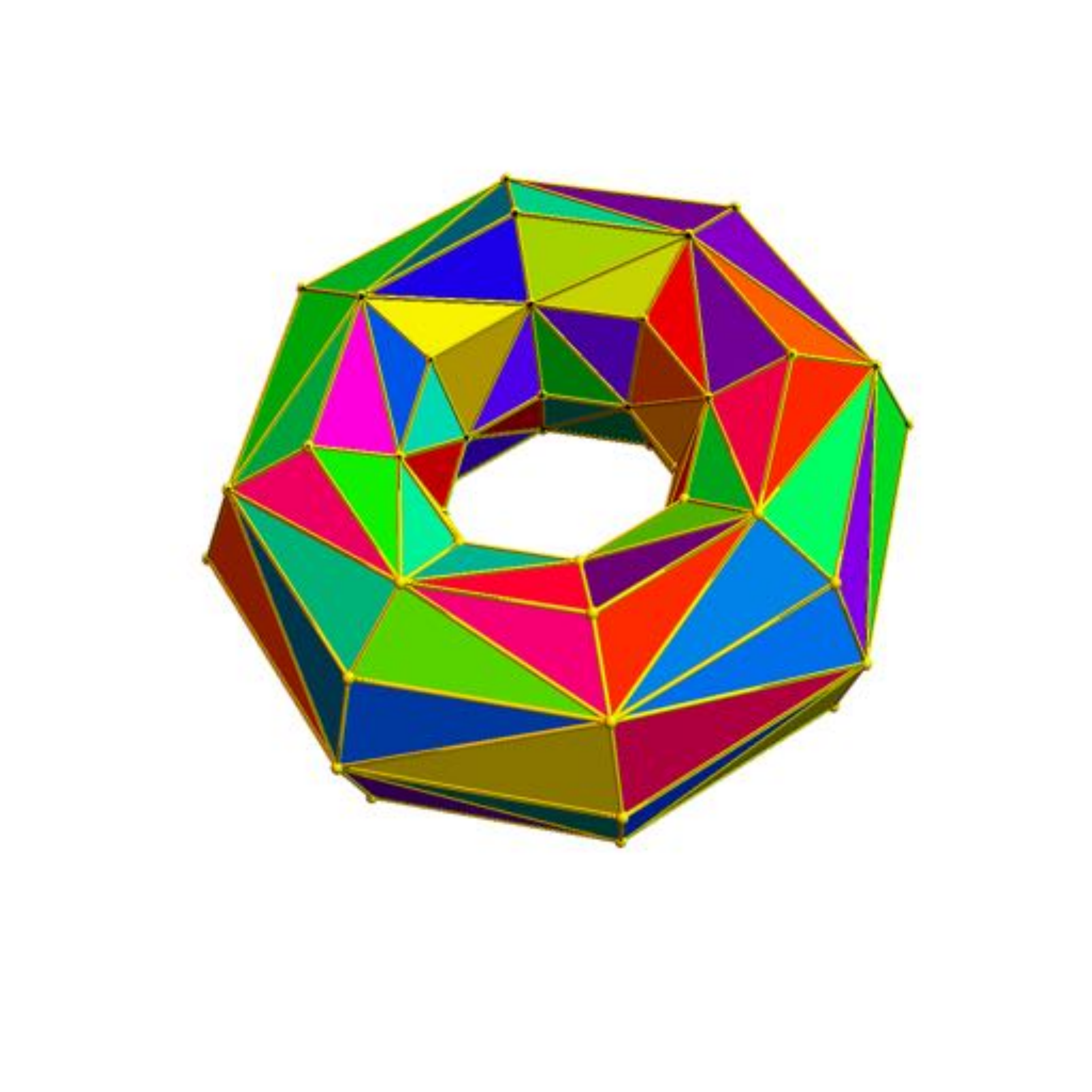}}
\scalebox{0.14}{\includegraphics{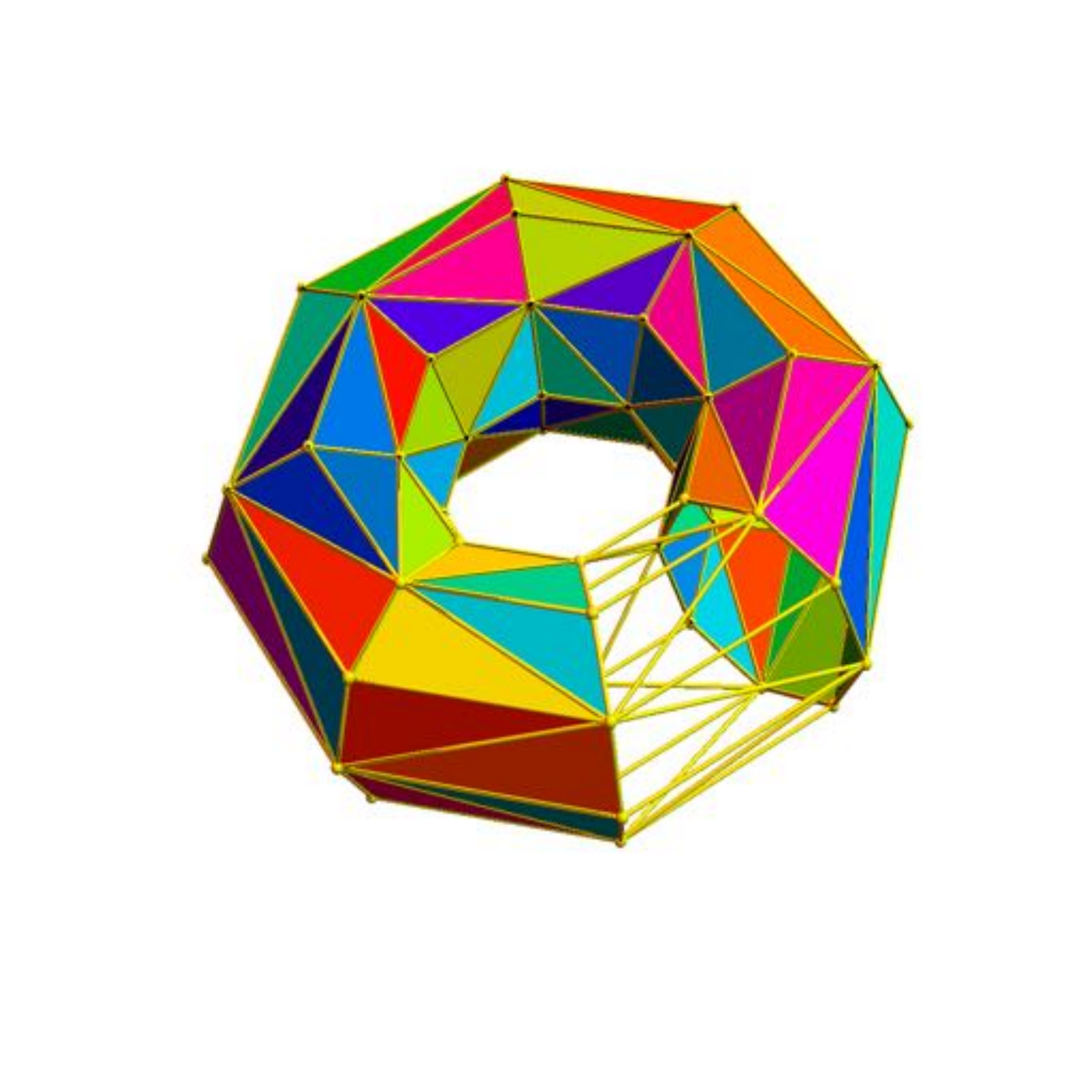}}
\caption{
\label{Torus}
A 2-torus is a 2-graph: every unit sphere is a circular graph. 
It is not shellable. To find a Hamiltonian path, find one first
in an annulus without interior, then use bridges to connect these
paths. 
}
\end{figure}

\begin{figure}
\scalebox{0.14}{\includegraphics{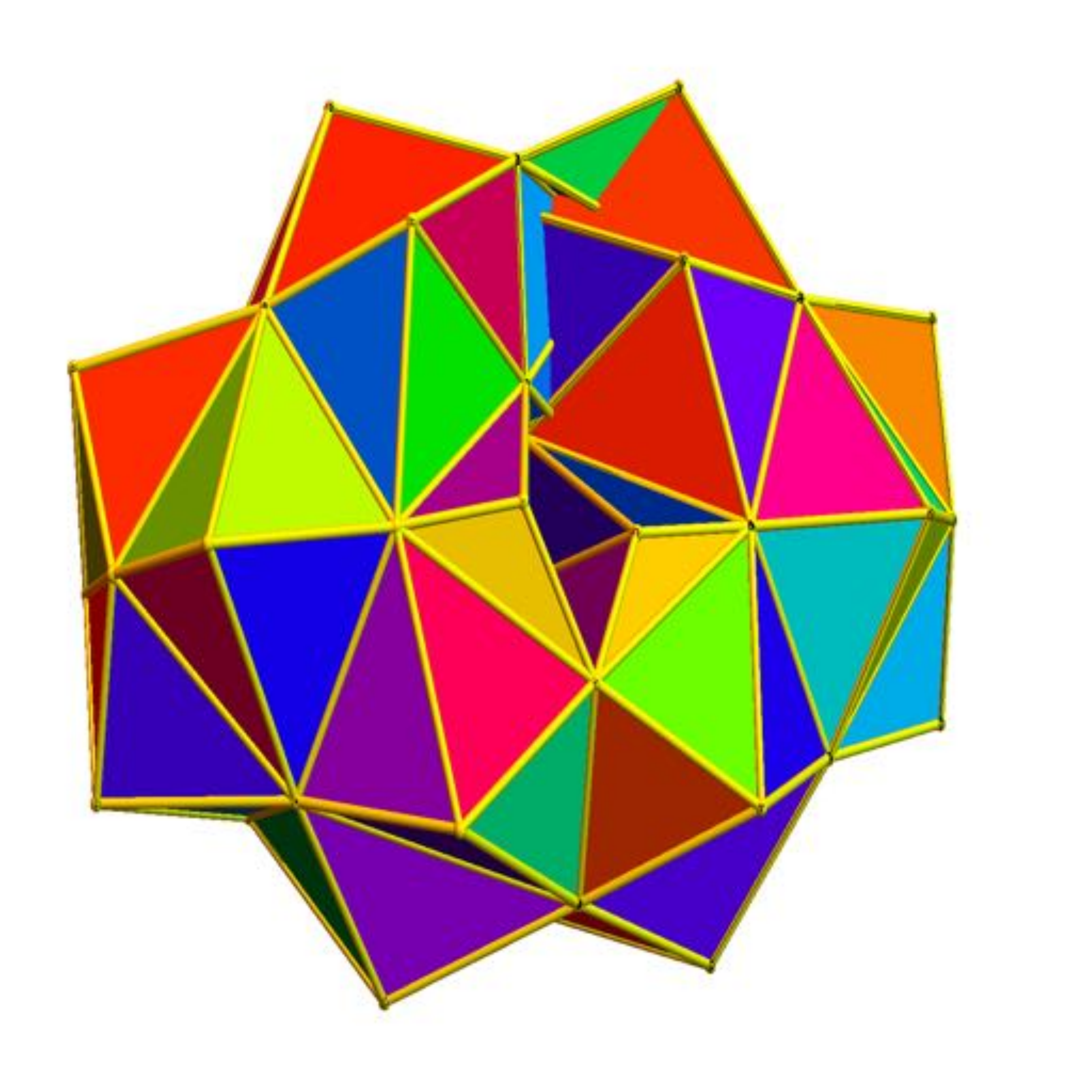}}
\scalebox{0.14}{\includegraphics{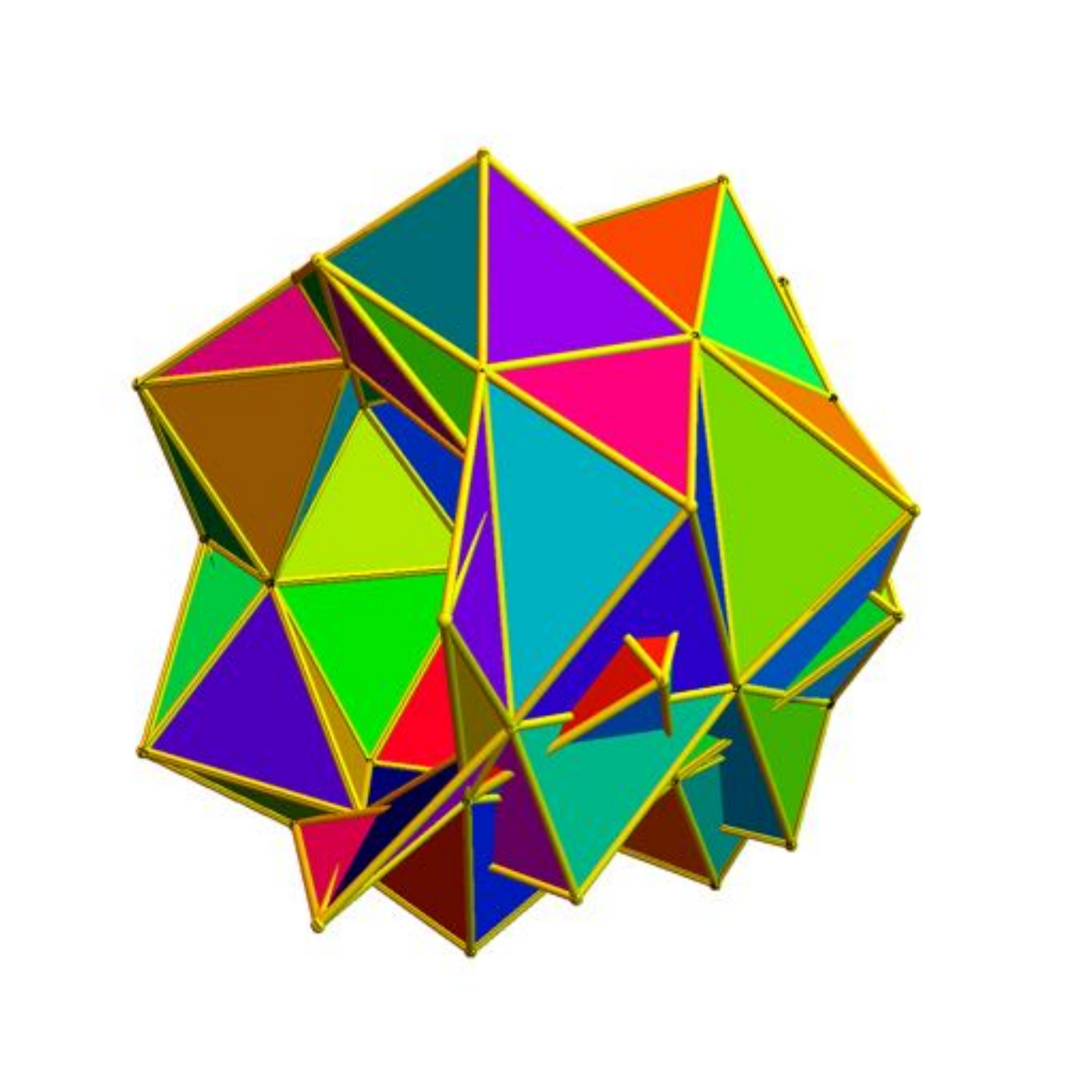}}
\caption{
\label{projective plane and kleinbottle }
The projective plane and the Klein bottle. 
Both are 2-graphs and Hamiltonian. 
}
\end{figure}

\begin{figure}
\scalebox{0.14}{\includegraphics{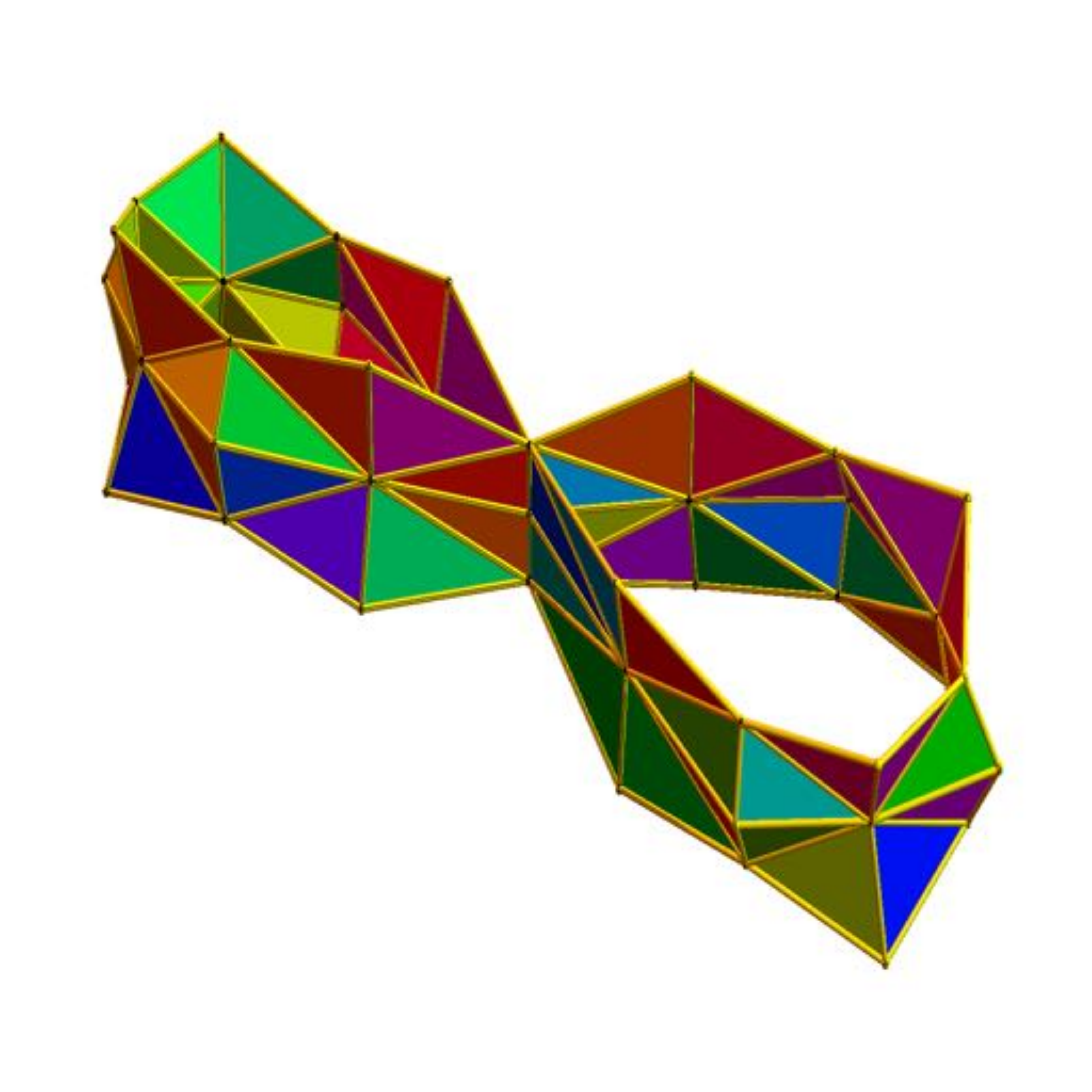}}
\scalebox{0.14}{\includegraphics{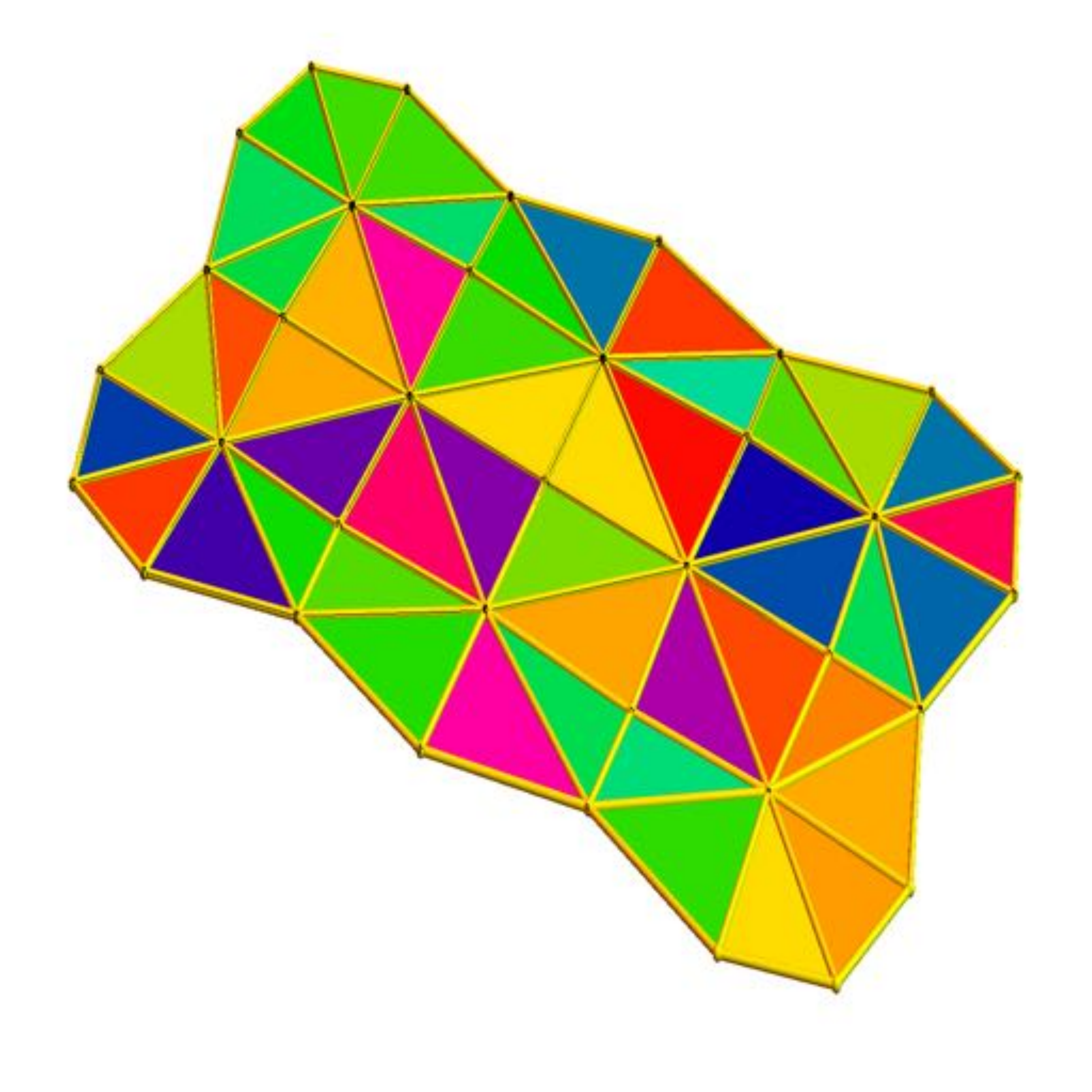}}
\caption{
\label{Product Graph2}
The product of a figure 8 graph and $K_2$ is a two dimensional
simplicial complex, but not a 2-graph with boundary. It is not
Hamiltonian. To the right, we see the product of two path graphs
which is Hamiltonian.
}
\end{figure}

\begin{figure}
\scalebox{0.14}{\includegraphics{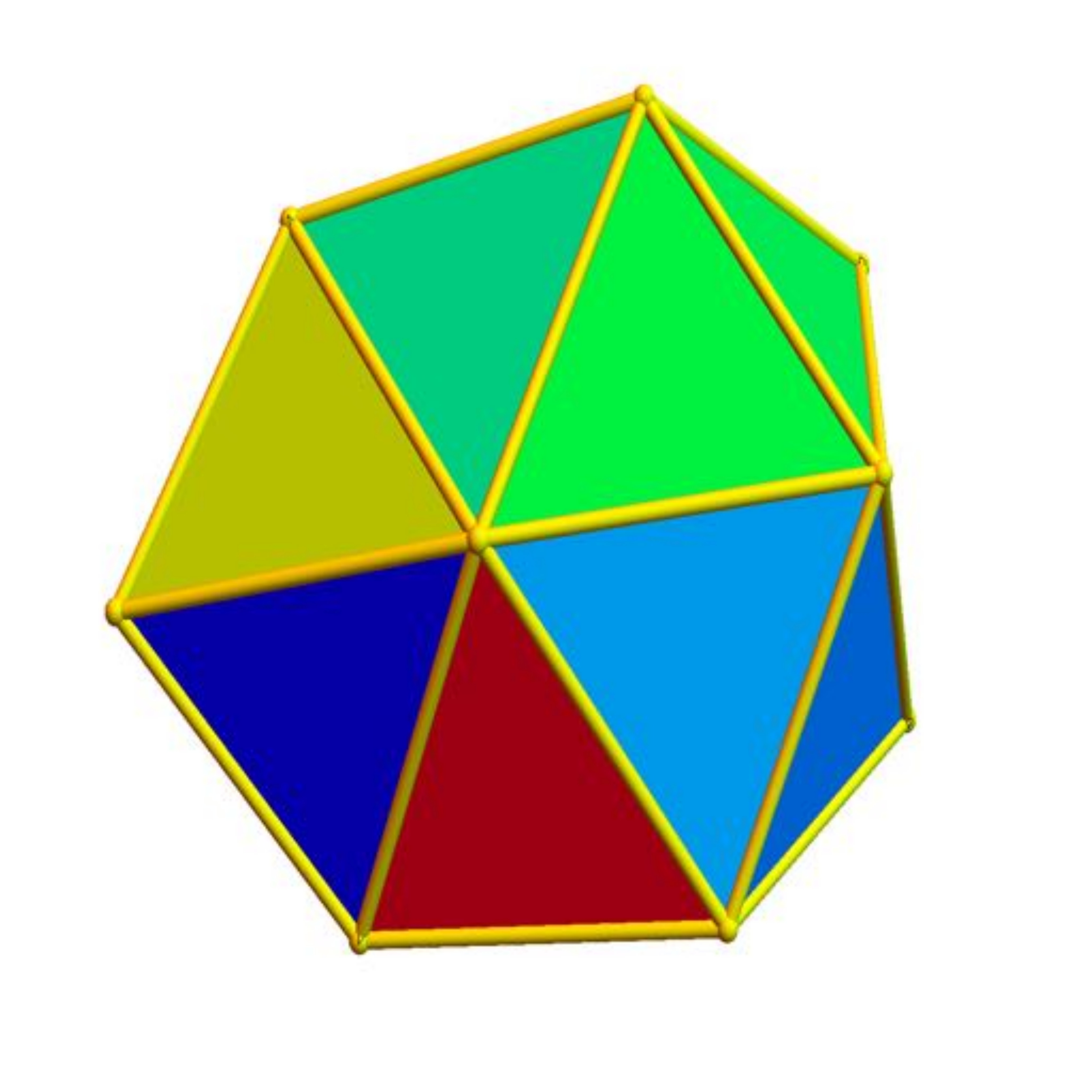}}
\scalebox{0.14}{\includegraphics{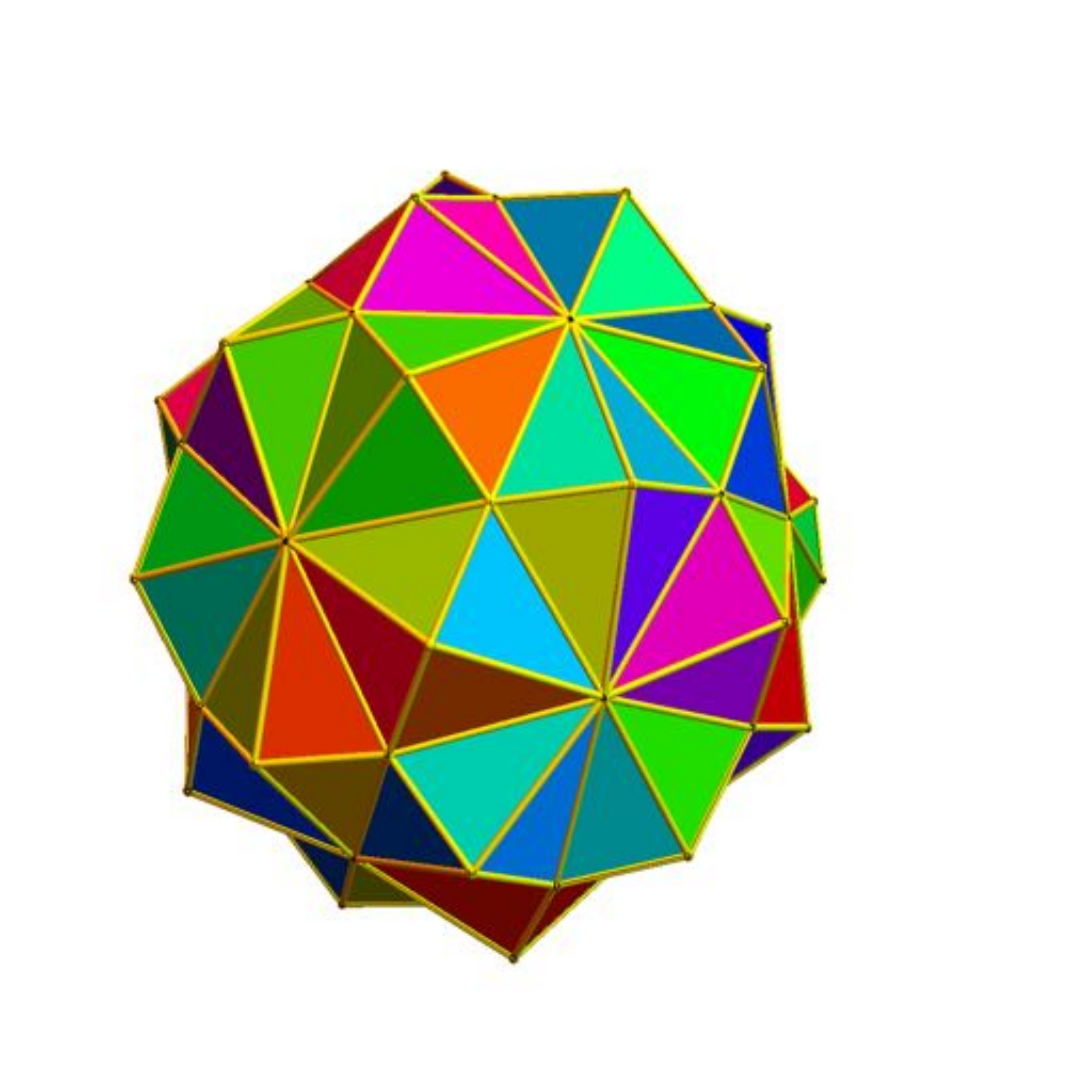}}
\scalebox{0.14}{\includegraphics{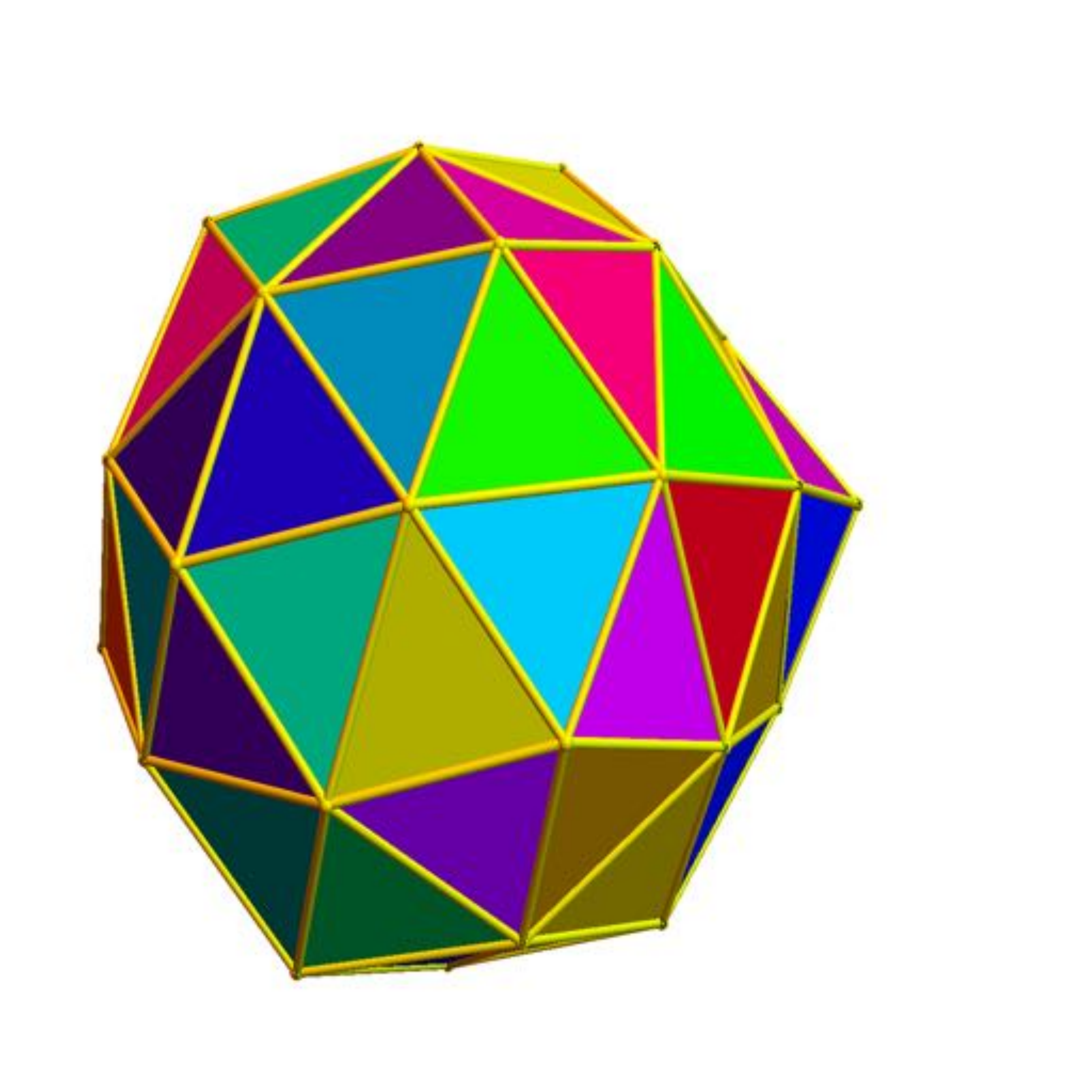}}
\scalebox{0.14}{\includegraphics{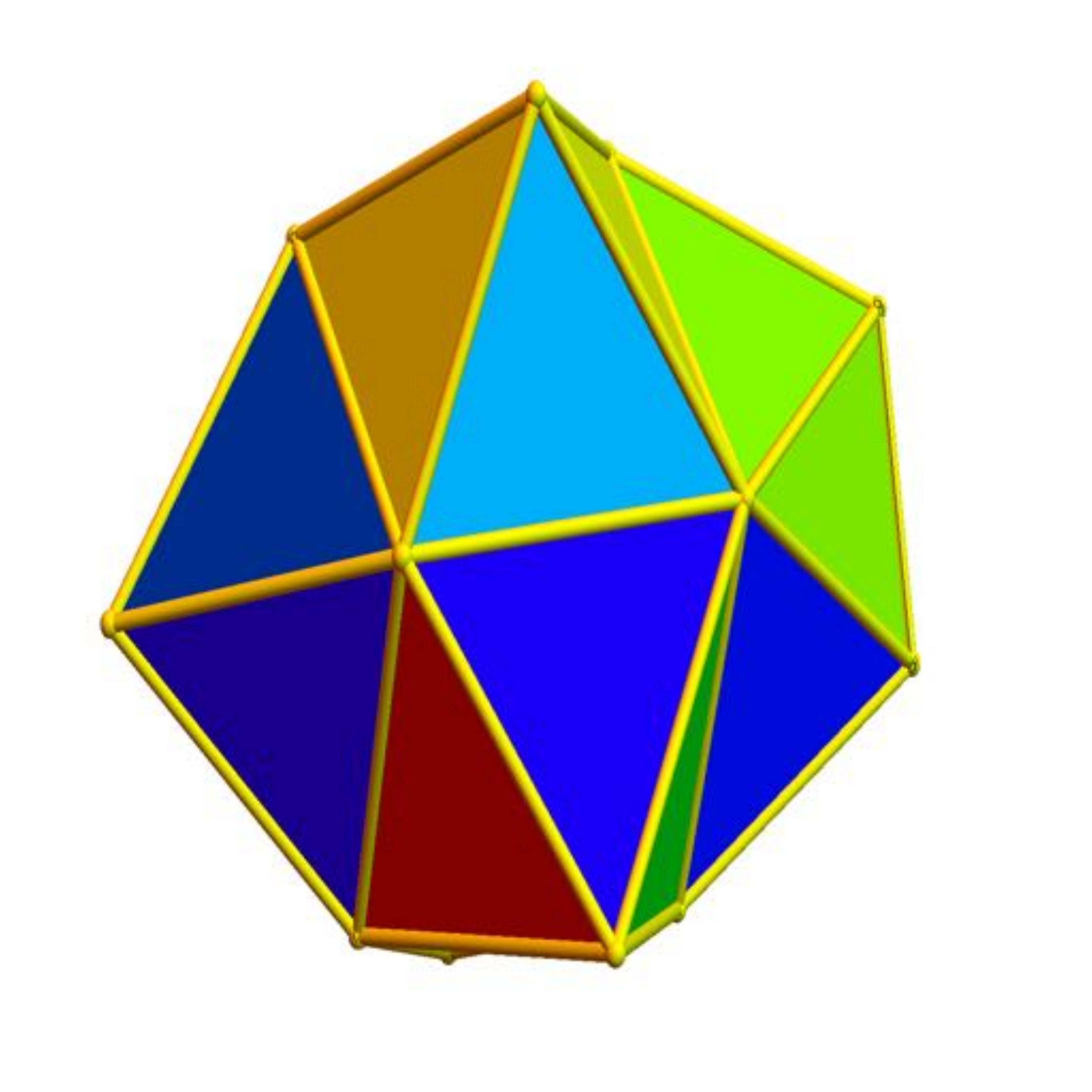}}
\caption{
\label{EdgeRefinements}
Edge refinements of the icosahedron,
the Disdyakis Triacontahedron, the Pentakis Dodecahedron and
Tetrakis Hexahedron. These are all 2-spheres. After adding a central 
point (building the join with a 1-point graph), we get in each case a 3-ball. 
As the spheres are Hamiltonian, also these unit balls are Hamiltonian (just
make a detour through the center once). 
% {PolyhedronData[cata[[4]]], PolyhedronData[cata[[7]]], PolyhedronData[cata[[11]]]}; {cata[[4]], cata[[7]], cata[[11]]}
}
\end{figure}

\begin{figure}
\scalebox{0.14}{\includegraphics{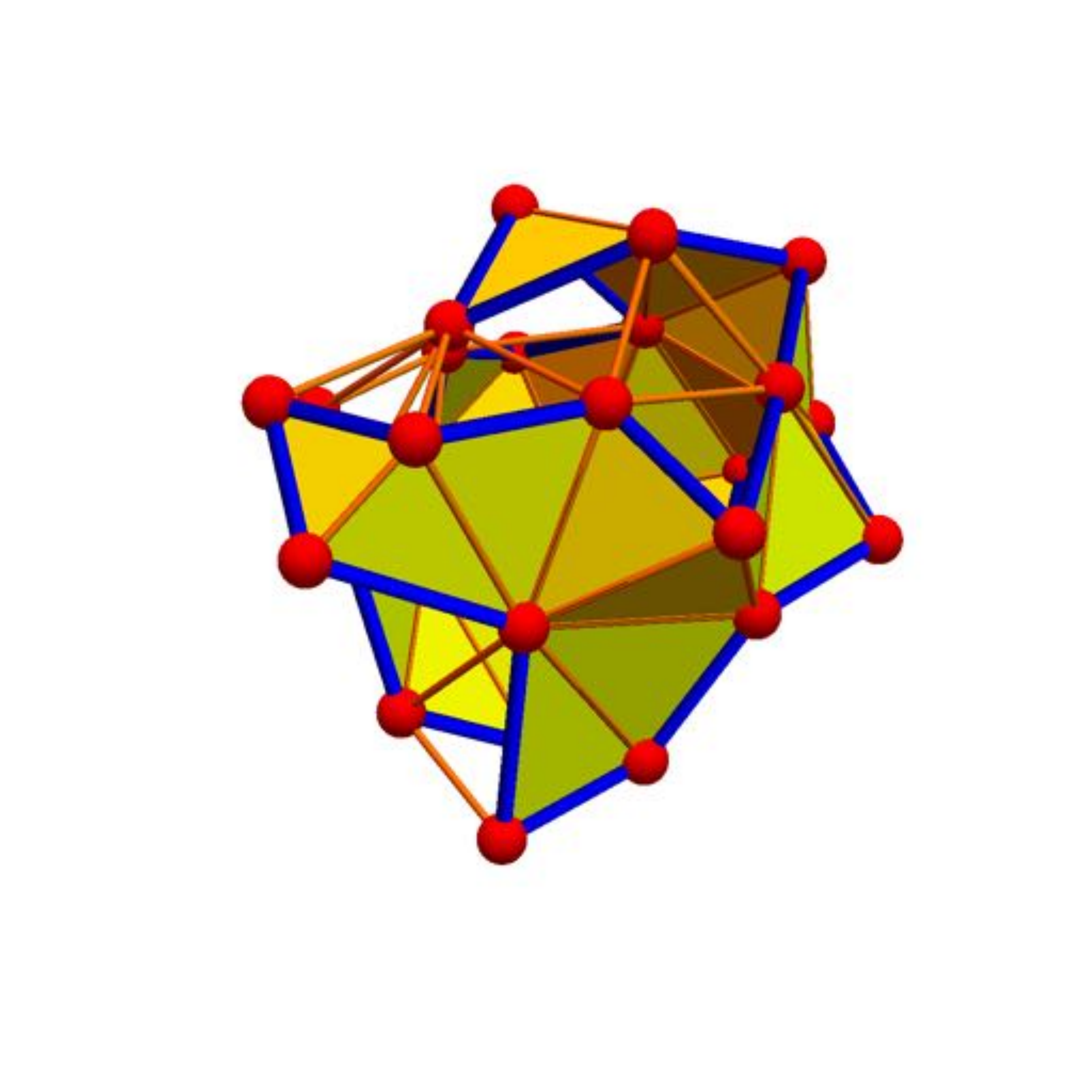}}
\scalebox{0.14}{\includegraphics{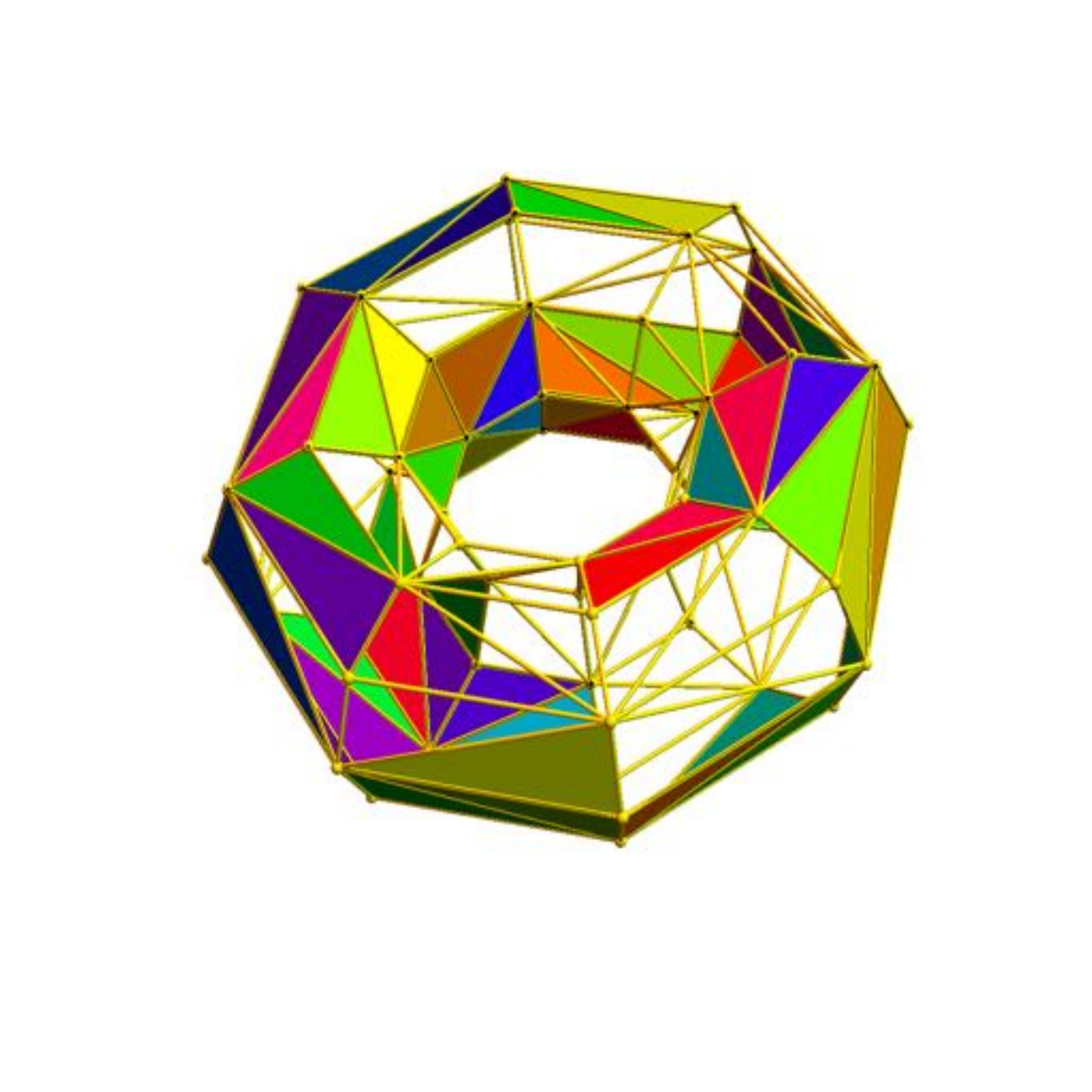}}
\caption{
\label{Automated find}
In a 2-graph, we can often find directly a Hamiltonian path by building up a random shellable 2-complex
without interior inside $G$. The cyclic boundary then is a Hamiltonian path. In general
this can be used to make the task of covering the rest smaller. This also works
in higher dimensions. We see the case of a sphere and a torus. 
% {PolyhedronData[cata[[4]]], PolyhedronData[cata[[7]]], PolyhedronData[cata[[11]]]}; {cata[[4]], cata[[7]], cata[[11]]}
}
\end{figure}

\begin{figure}
\scalebox{0.24}{\includegraphics{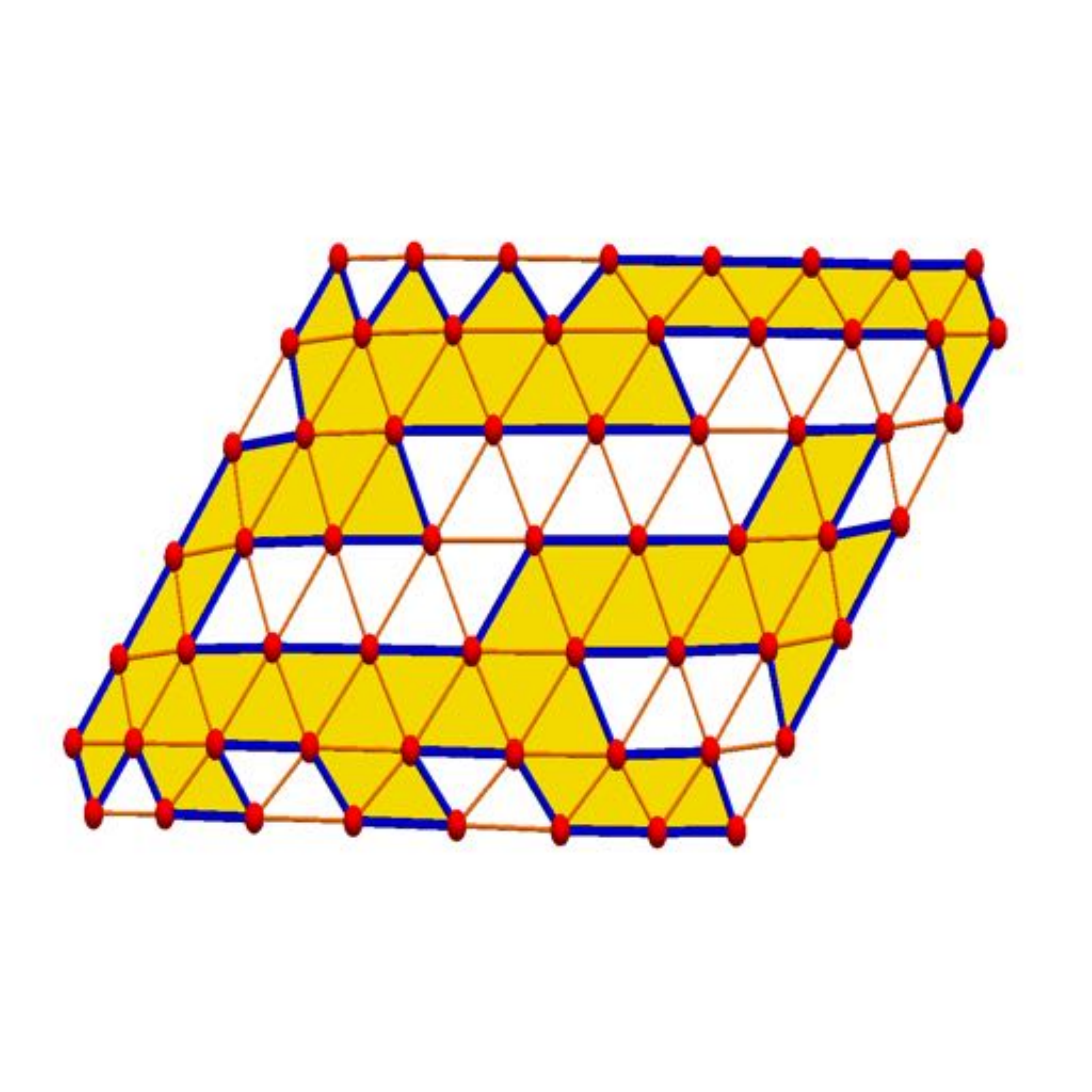}}
\caption{
\label{Hex}
A Hamiltonian path found by embedding a shellable simplicial complex without
interior points looks often like a Peano curve. It would also have been possible
to cut the region up into two, solve the problem in both and then merge them together. 
}
\end{figure}

\begin{figure}
\scalebox{0.34}{\includegraphics{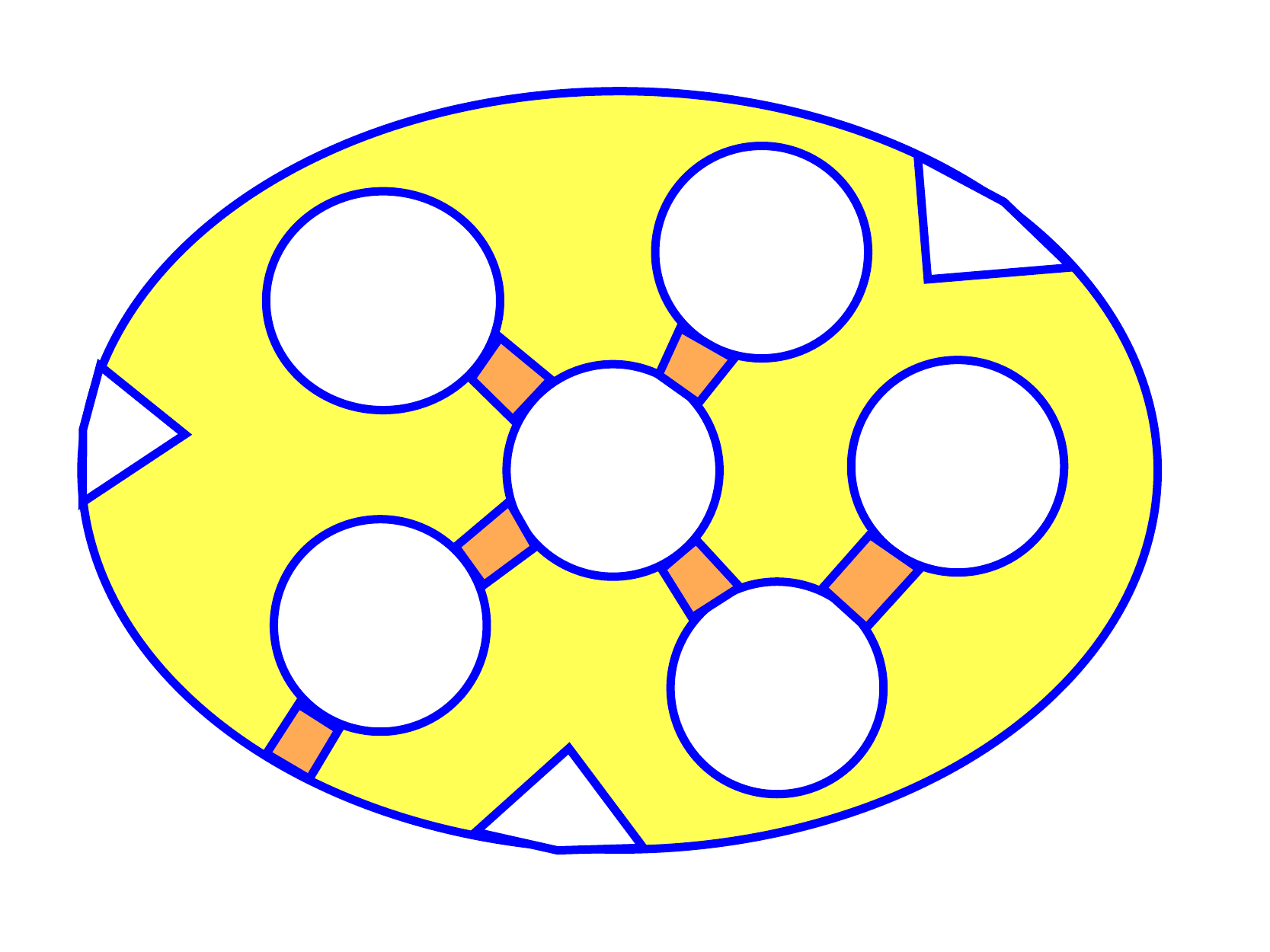}}
\caption{
\label{Cheese}
The Swiss cheese algorithm to construct a Hamiltonian path is the
idea to bild sphere caves around strongly interior points, building 
bridges between the different boundary components. Then access
other interior points from the boundary.
}
\end{figure}

\begin{figure}
\scalebox{0.07}{\includegraphics{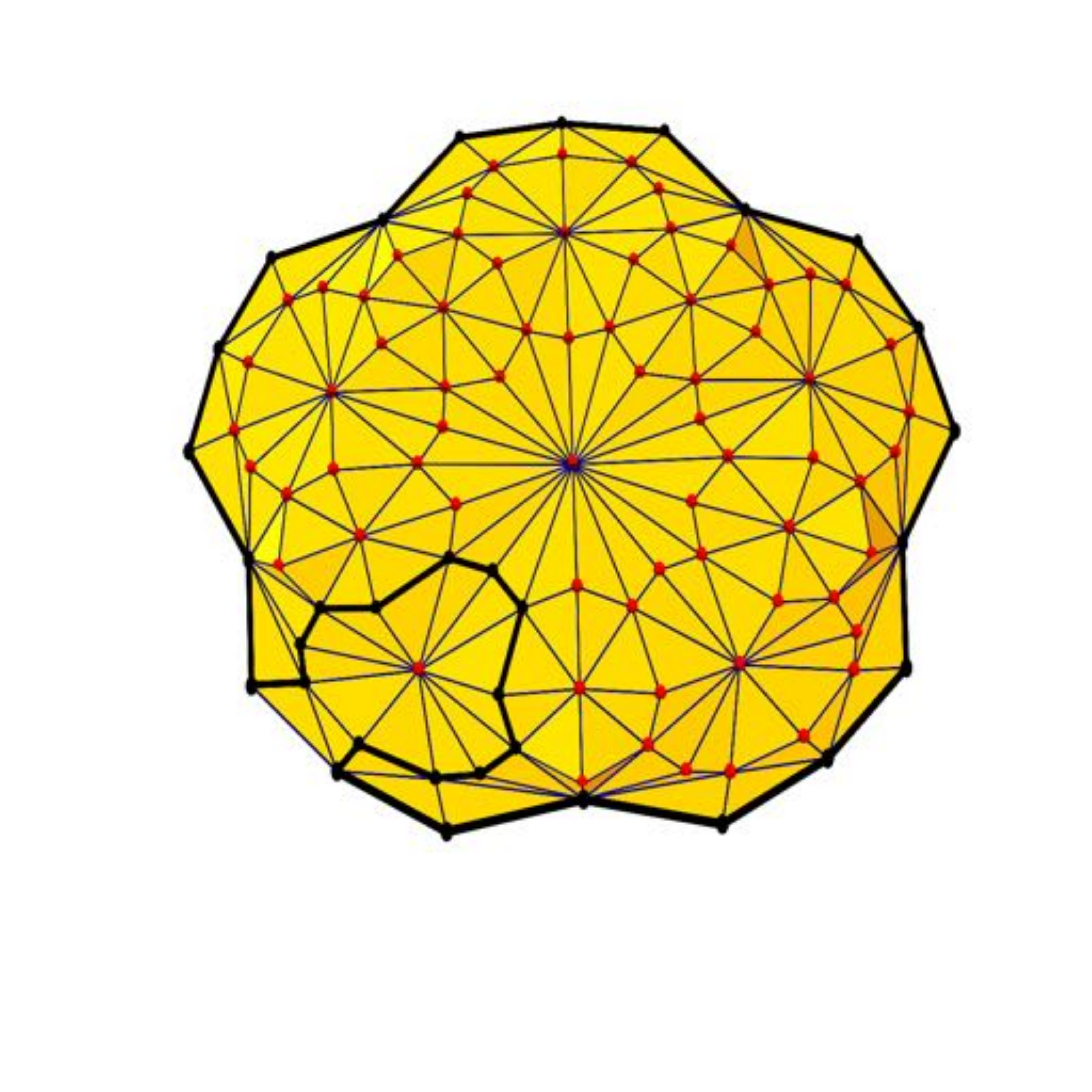}}
\scalebox{0.07}{\includegraphics{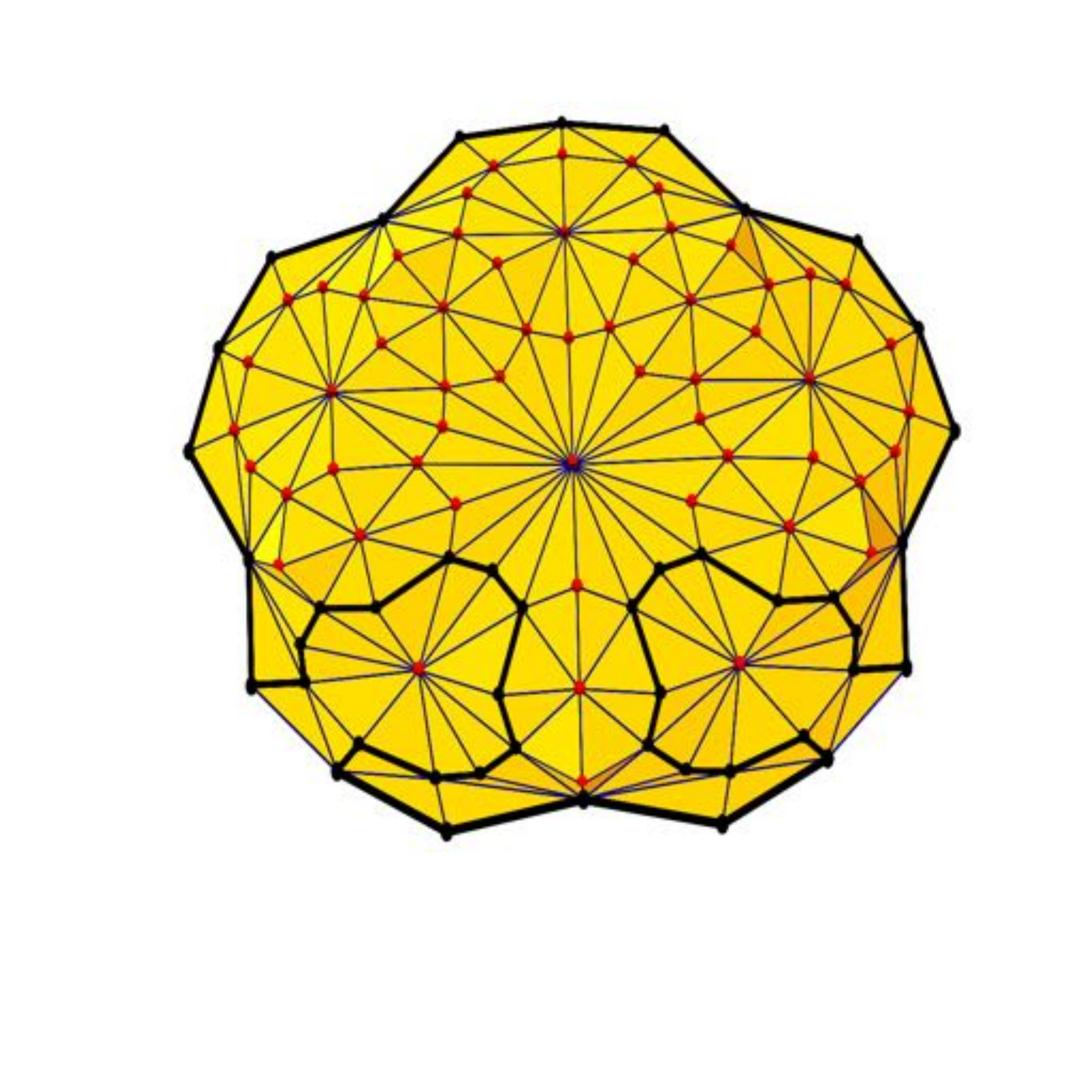}}
\scalebox{0.07}{\includegraphics{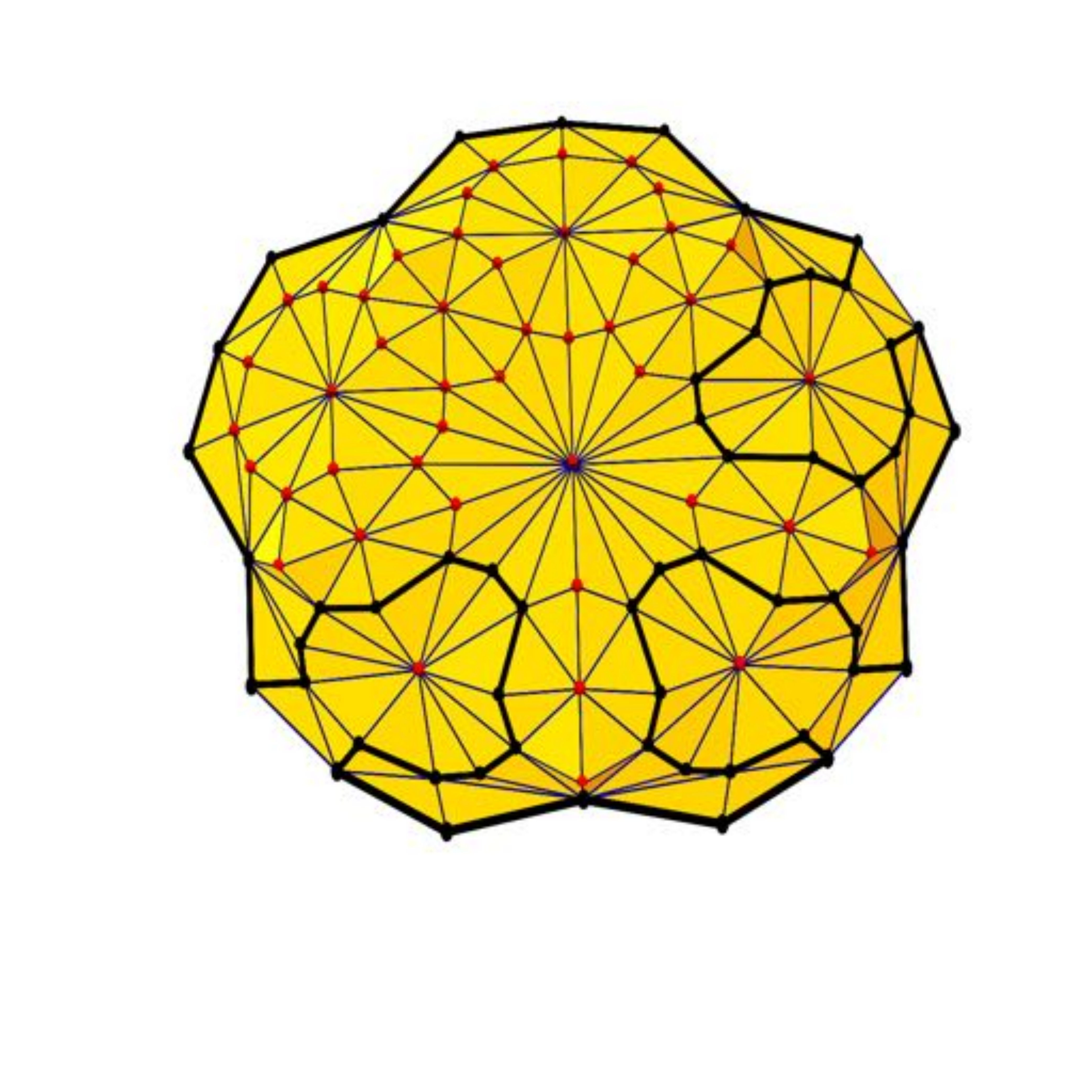}}
\scalebox{0.07}{\includegraphics{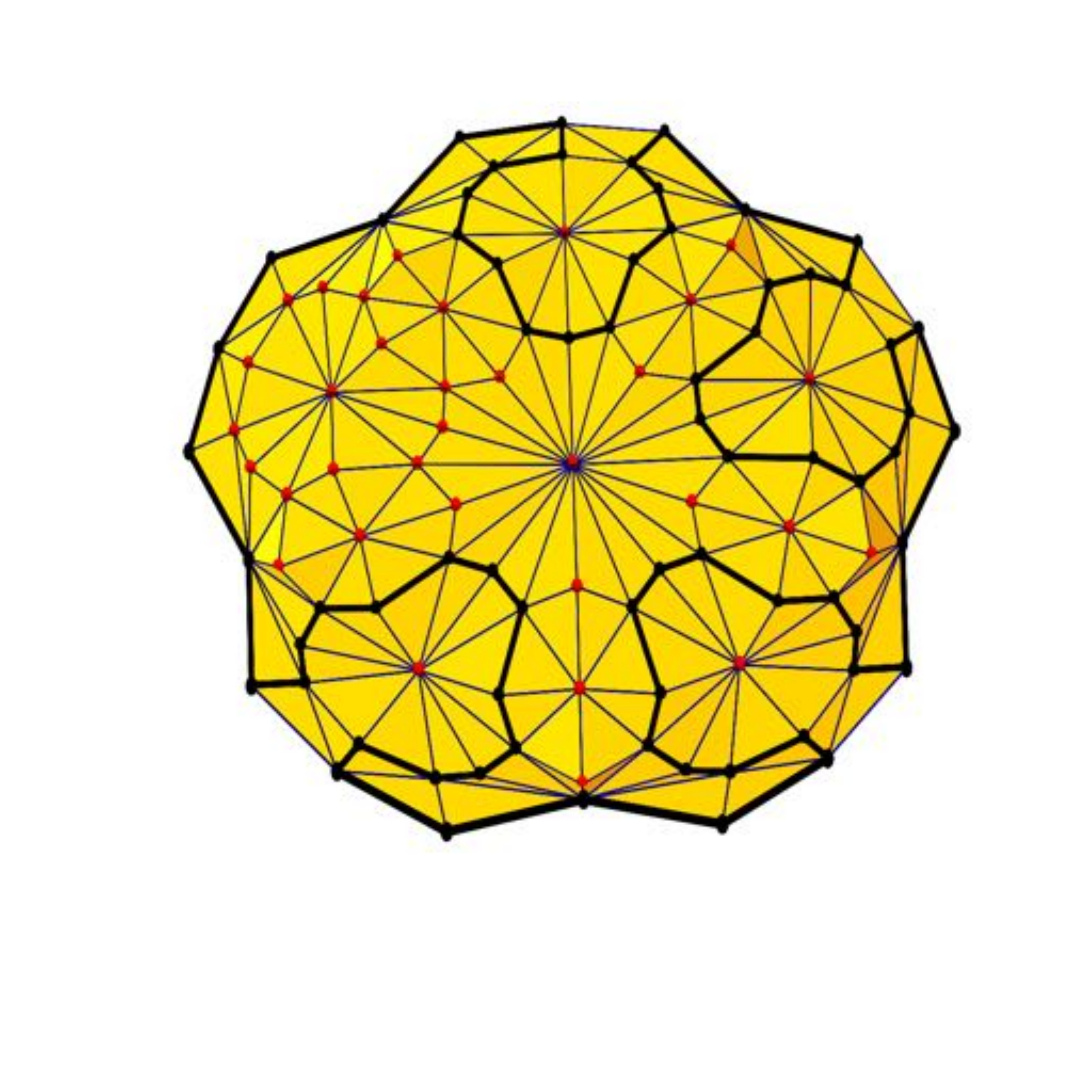}}
\scalebox{0.07}{\includegraphics{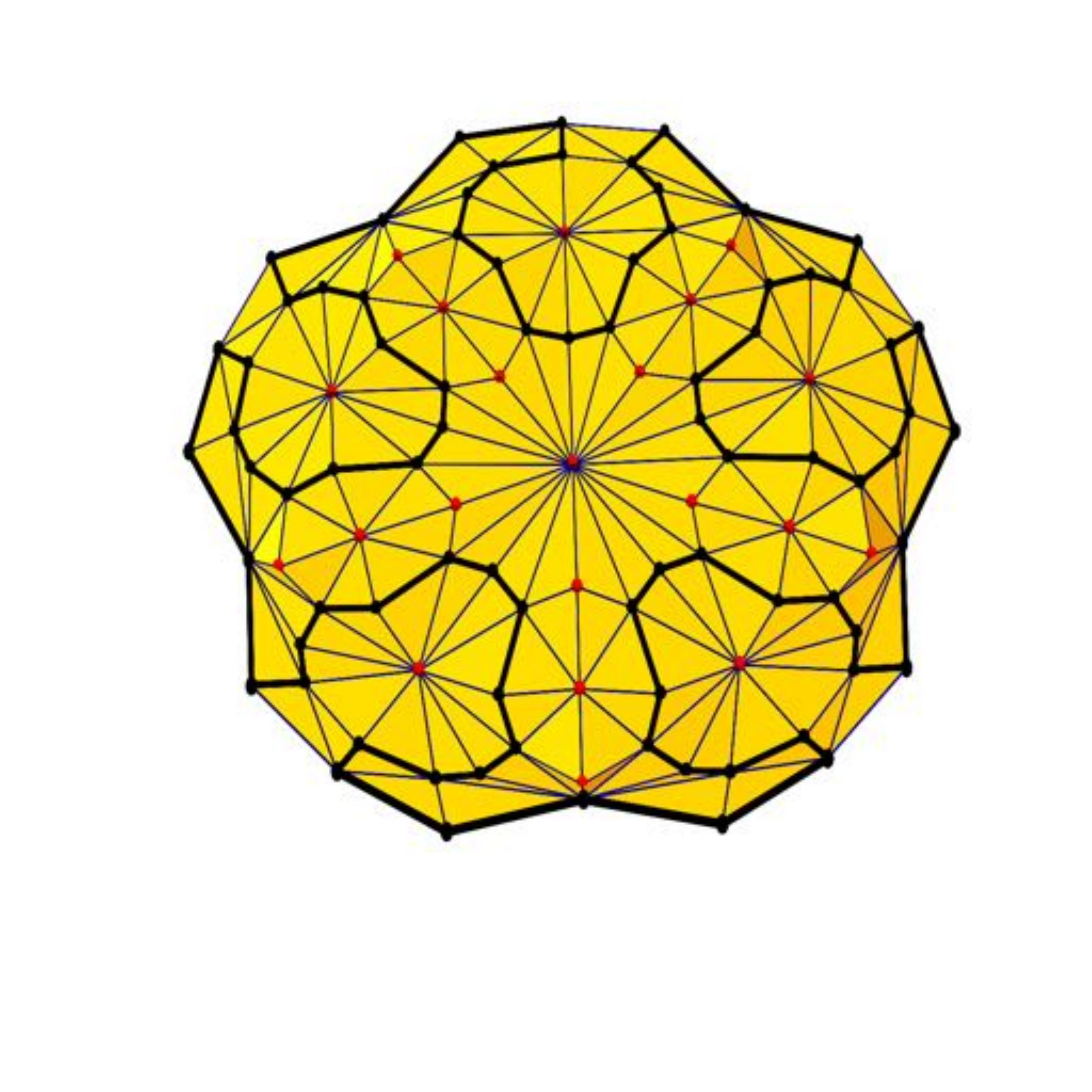}}
\scalebox{0.07}{\includegraphics{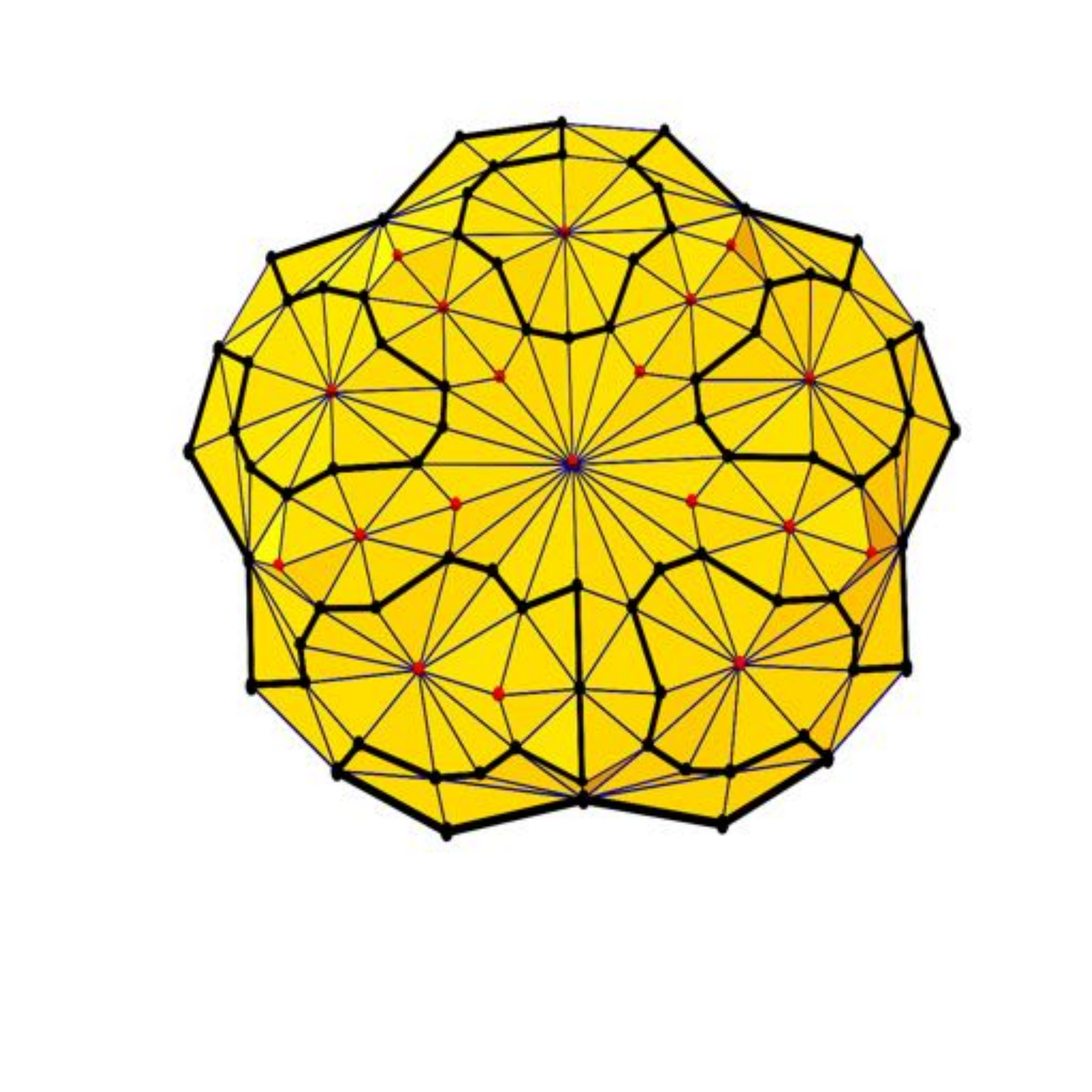}}
\scalebox{0.07}{\includegraphics{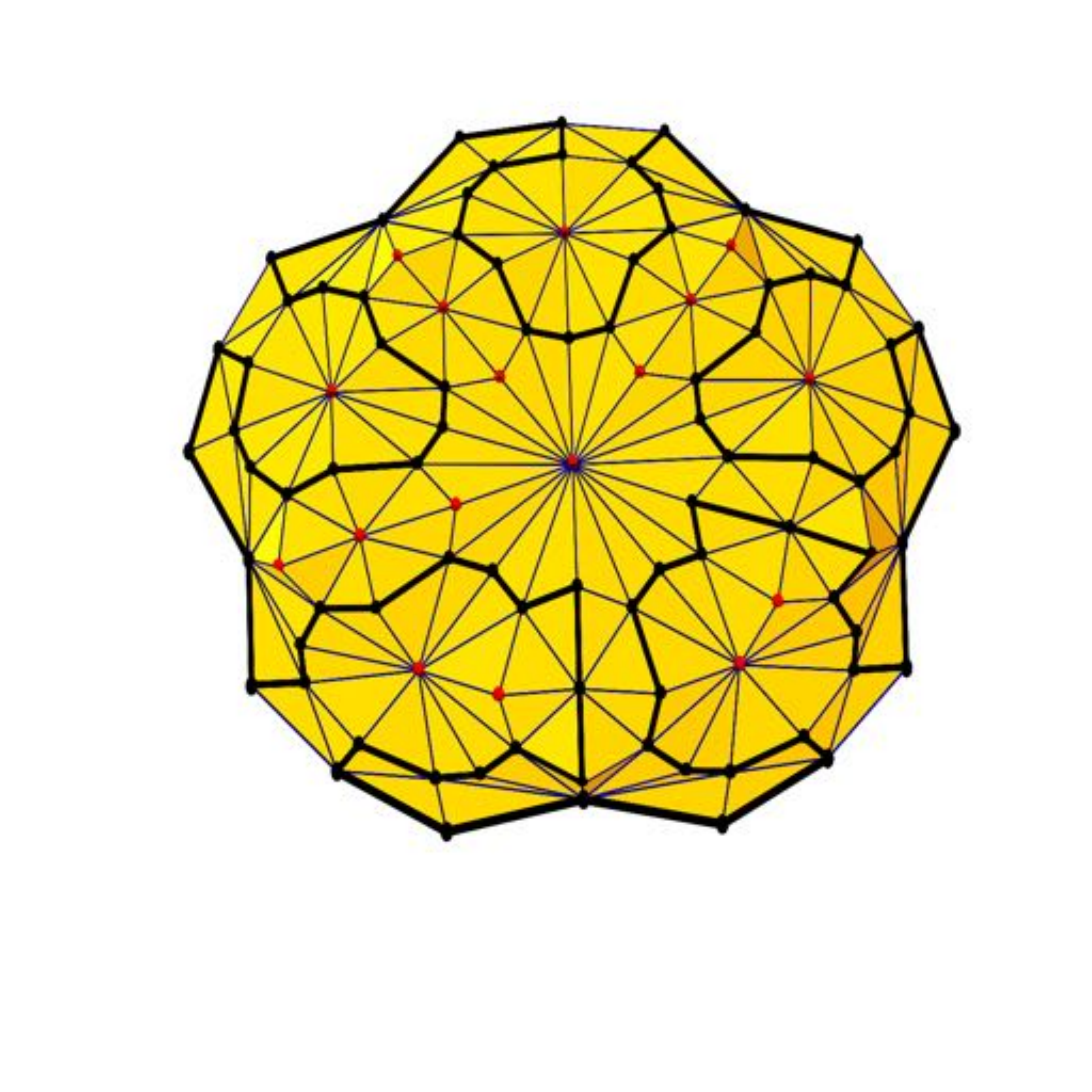}}
\scalebox{0.07}{\includegraphics{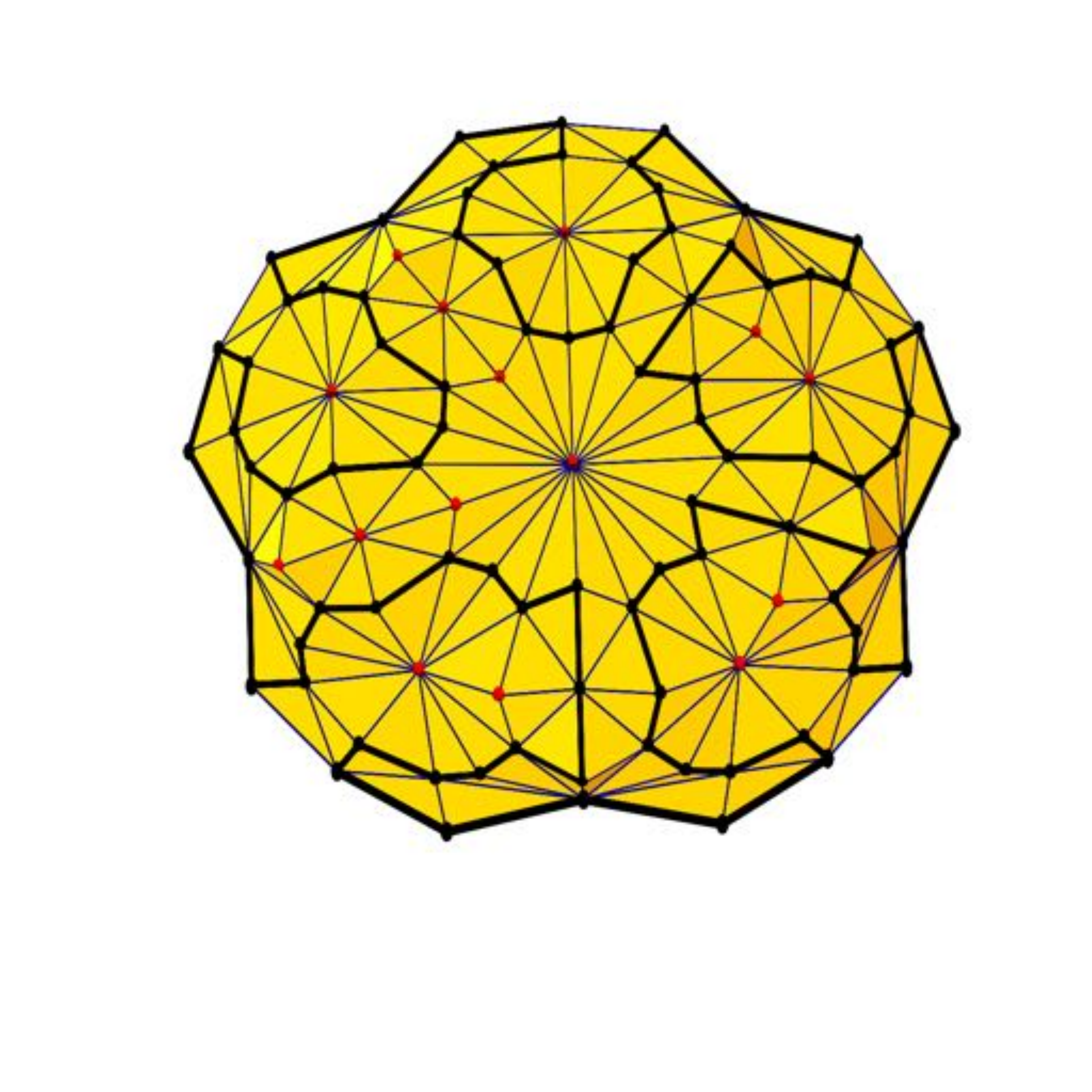}}
\scalebox{0.07}{\includegraphics{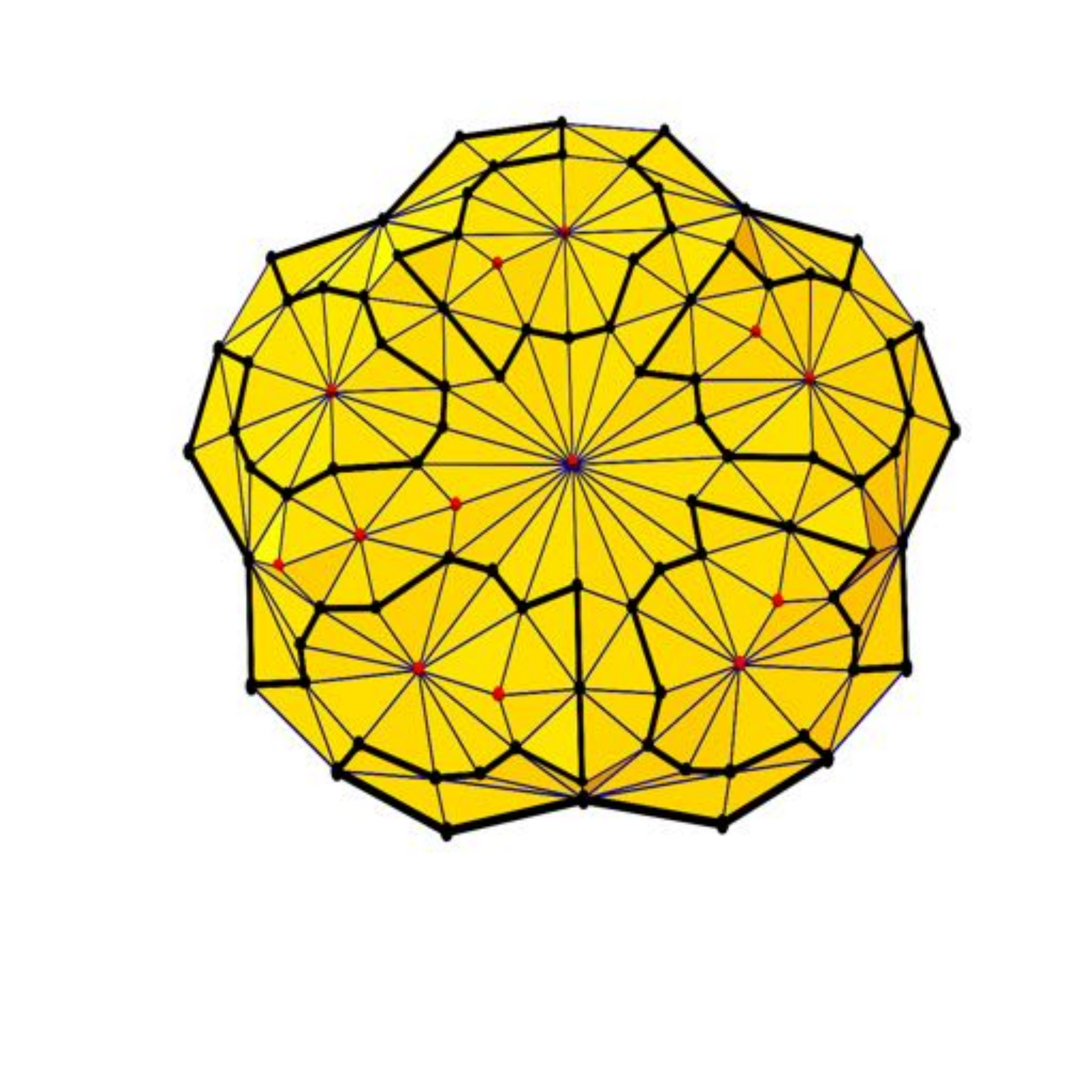}}
\scalebox{0.07}{\includegraphics{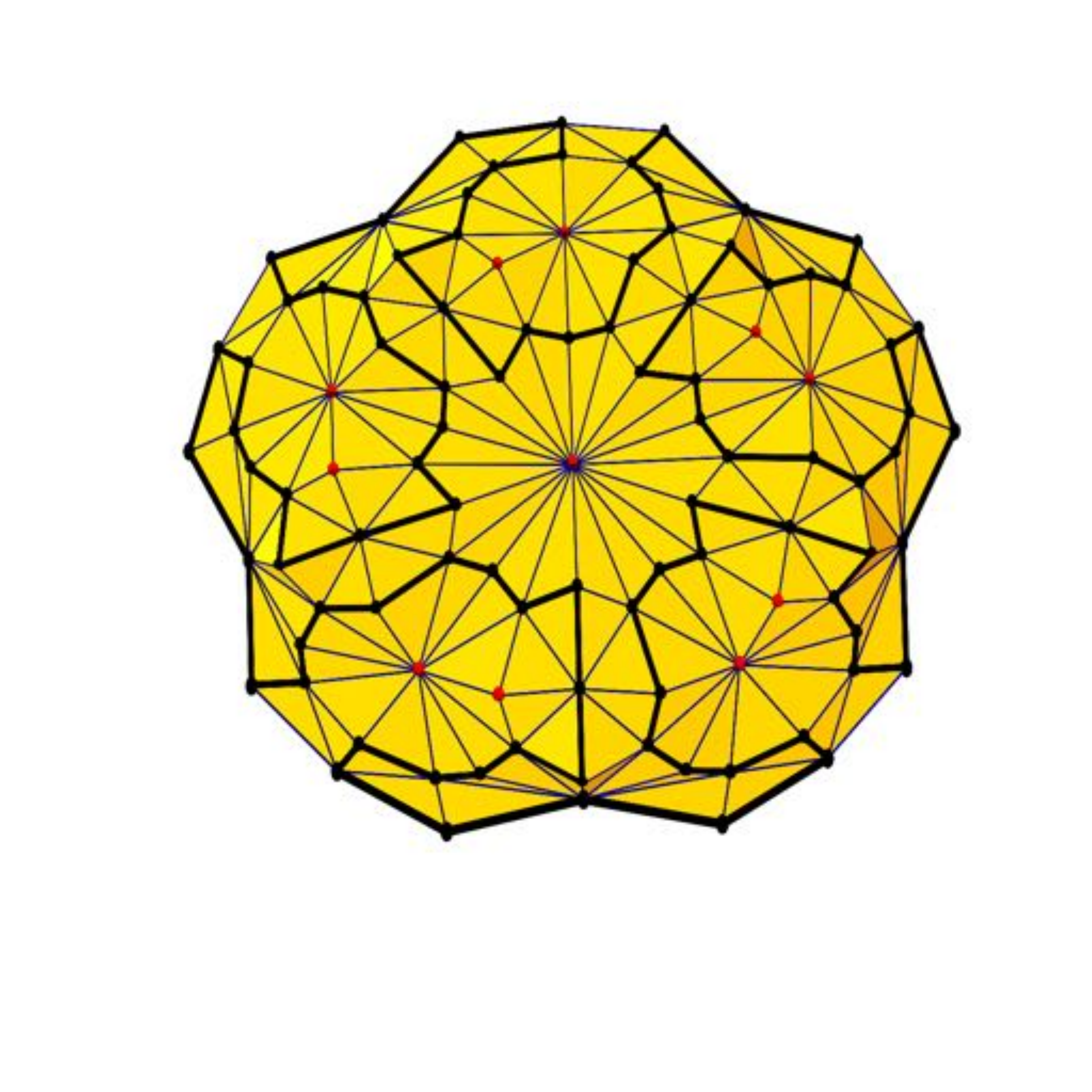}}
\scalebox{0.07}{\includegraphics{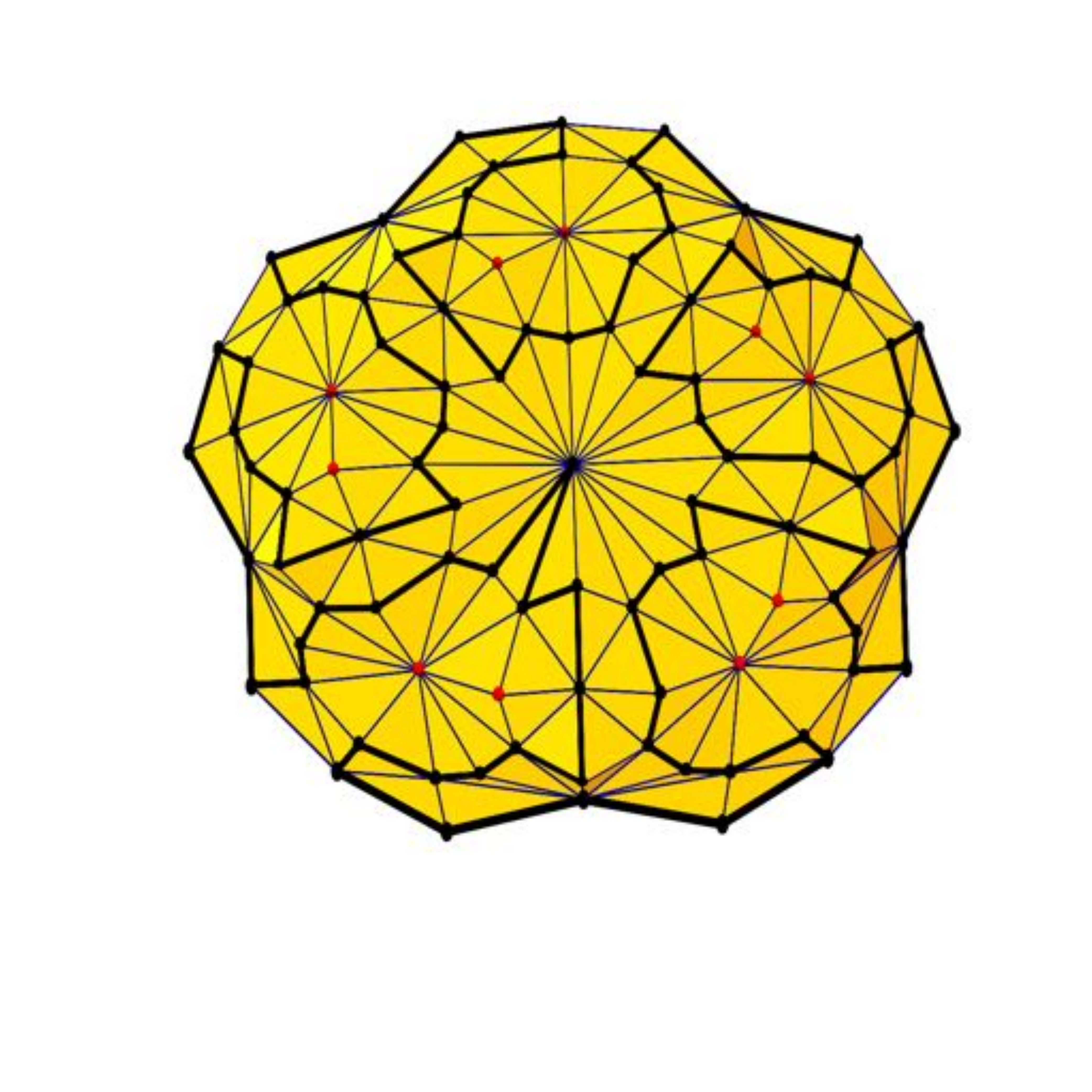}}
\scalebox{0.07}{\includegraphics{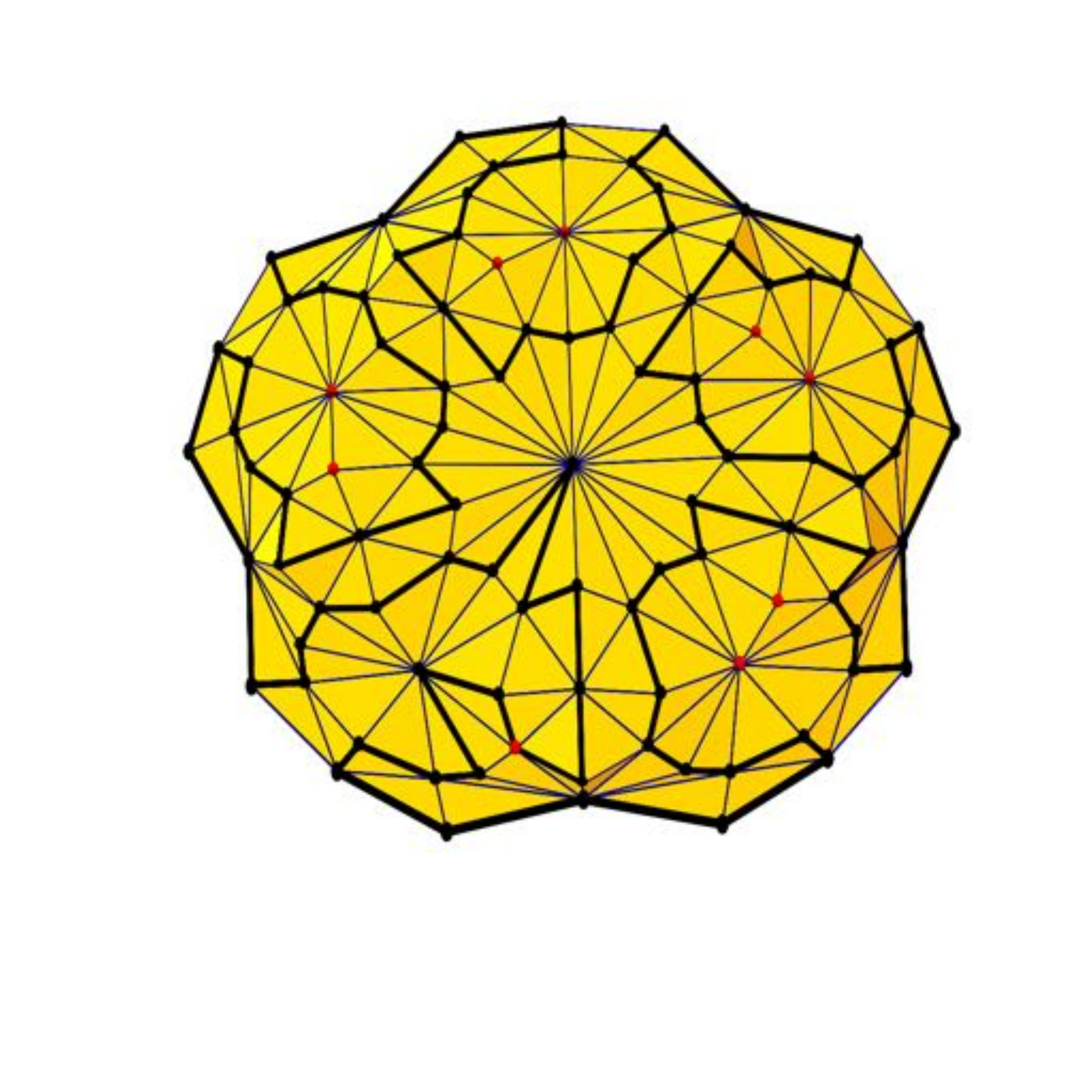}}
\scalebox{0.07}{\includegraphics{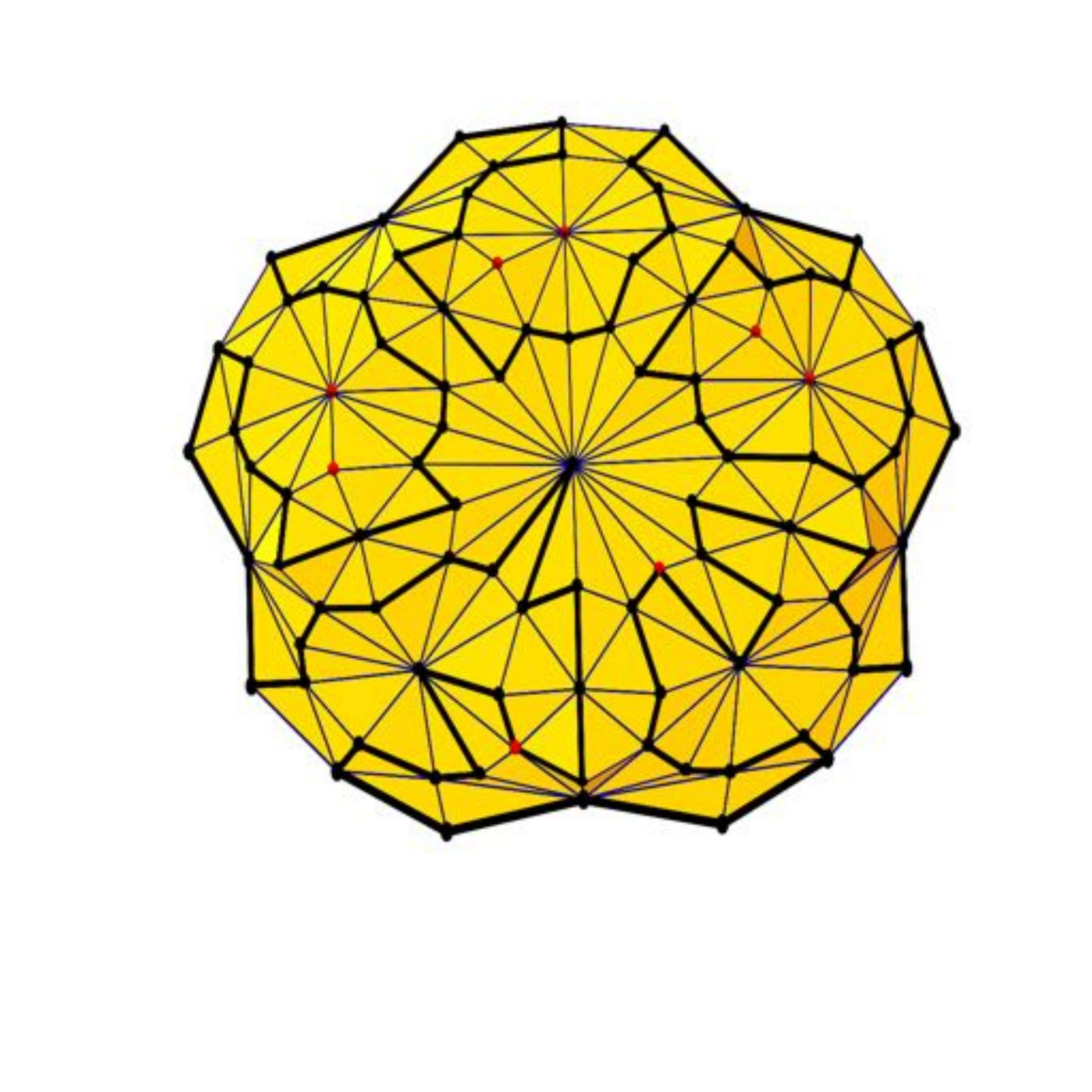}}
\scalebox{0.07}{\includegraphics{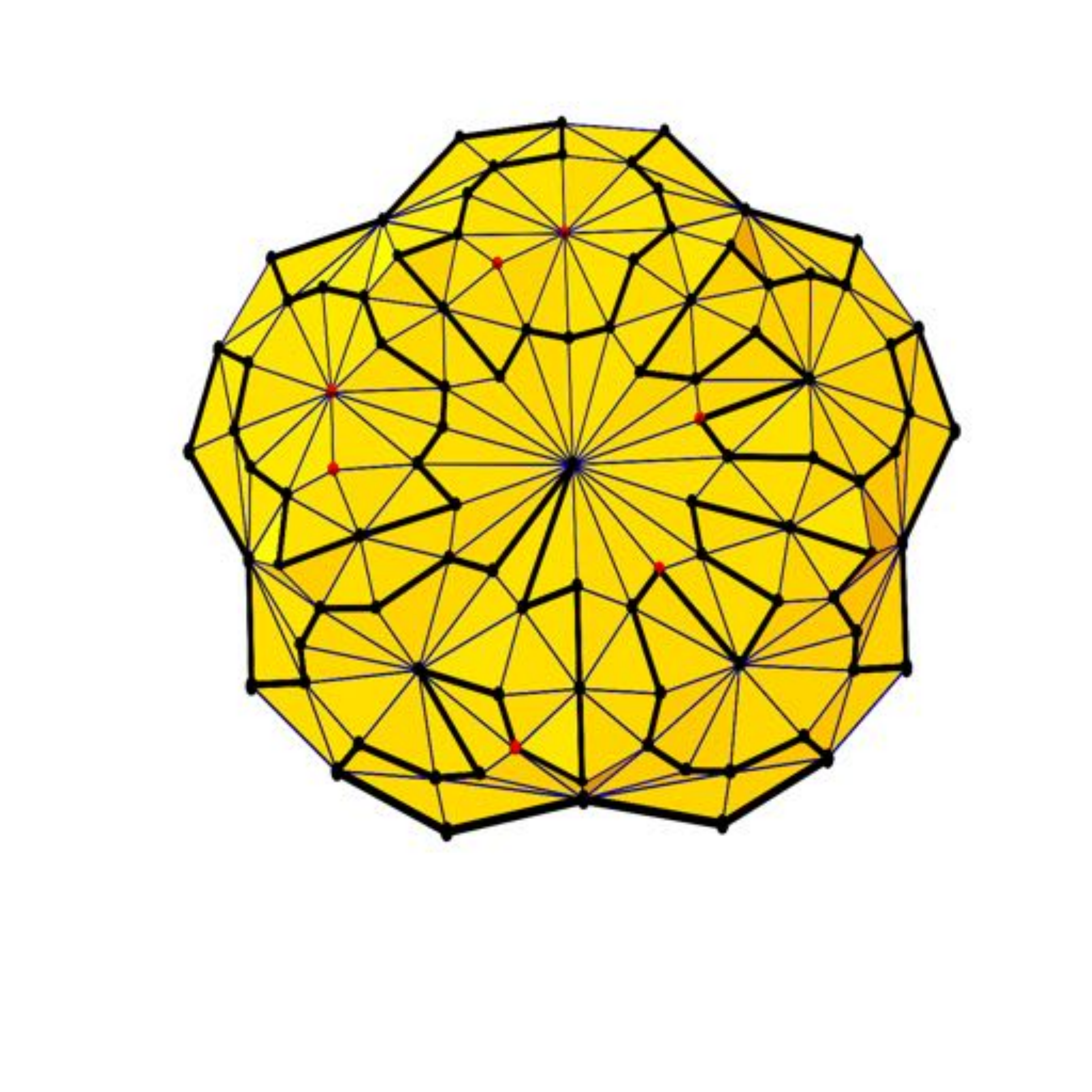}}
\scalebox{0.07}{\includegraphics{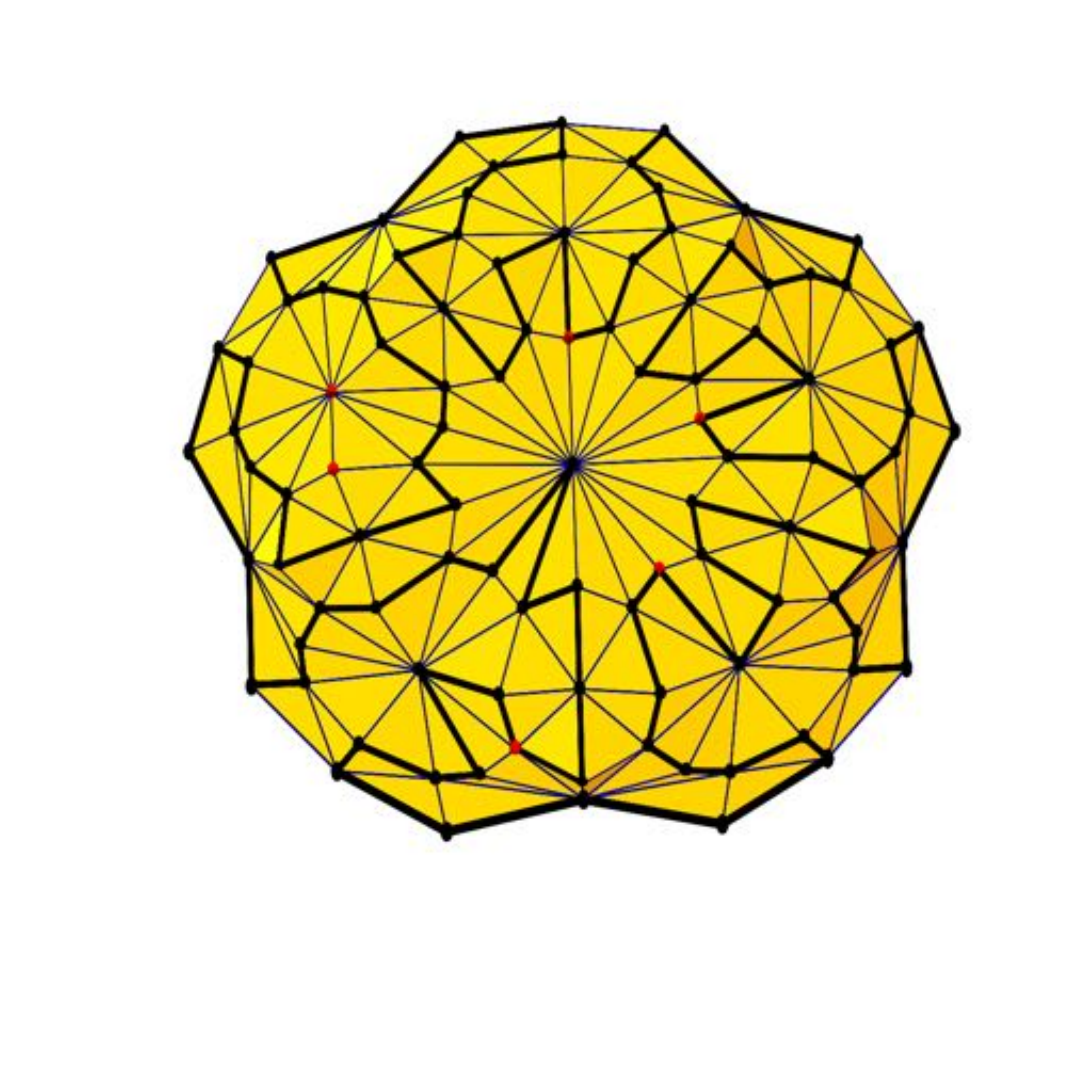}}
\scalebox{0.07}{\includegraphics{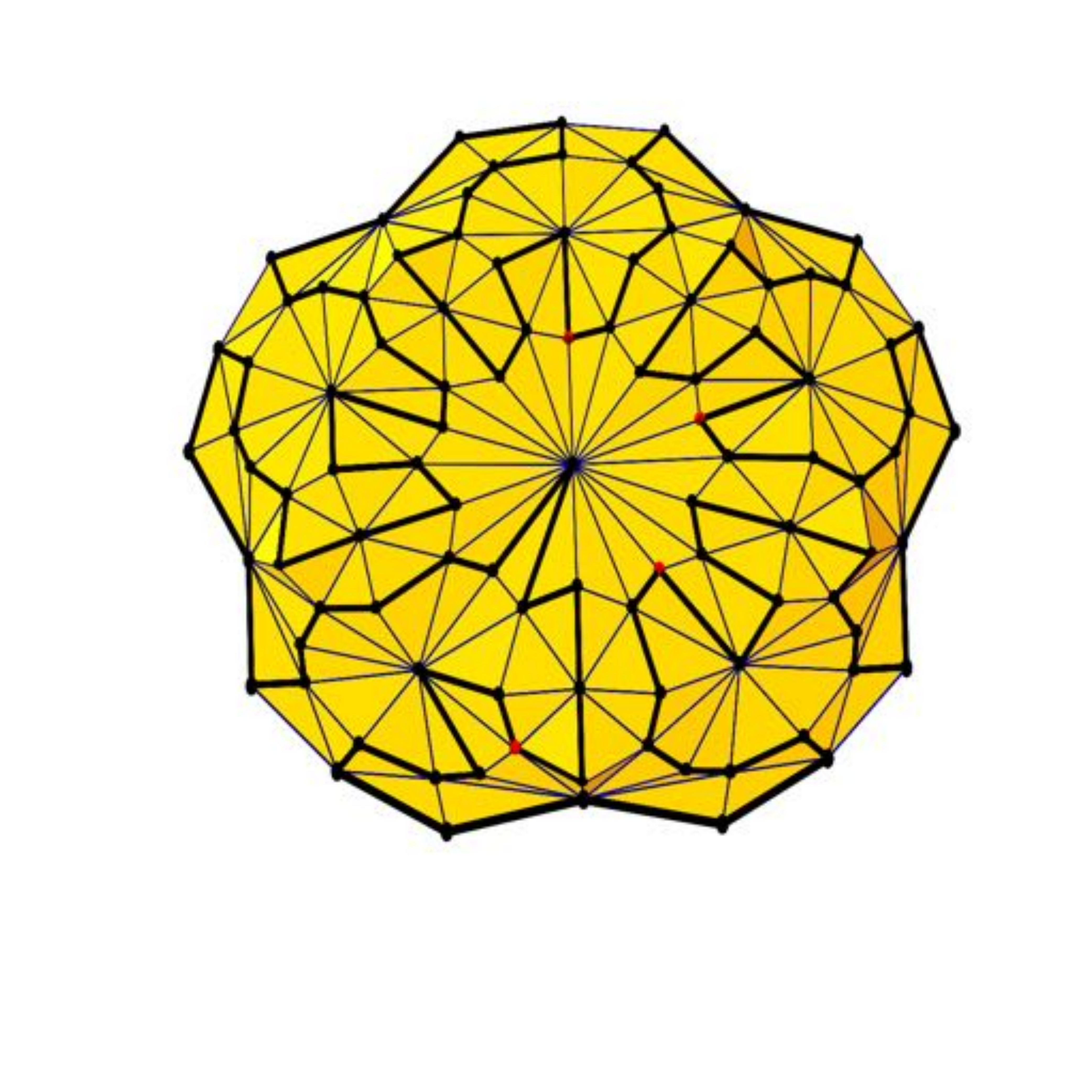}}
\caption{
\label{Strategy}
The Swiss Cheese strategy applied in an example: 
start with covering the boundary with a Hamiltonian path. Then cut out some
holes and connect them to each other. We chose an example which has the difficulty
that after cutting out the hole we have some interior points not reachable. They
are included by enlarging the hole slightly. Finally gather the inside the of the 
holes and make detours to the other remaining interior points. Now we have a 
Hamiltonian path. 
}
\end{figure}

\bibliographystyle{plain}

\end{document}